\documentclass[11pt]{article}%
\usepackage{amsfonts}
\usepackage{amsmath}
\usepackage{amssymb}
\usepackage{graphicx}%
\setcounter{MaxMatrixCols}{30}
%TCIDATA{OutputFilter=latex2.dll}
%TCIDATA{Version=5.50.0.2960}
%TCIDATA{CSTFile=40 LaTeX article.cst}
%TCIDATA{Created=Wednesday, November 07, 2018 11:28:42}
%TCIDATA{LastRevised=Friday, March 06, 2020 18:10:11}
%TCIDATA{<META NAME="GraphicsSave" CONTENT="32">}
%TCIDATA{<META NAME="SaveForMode" CONTENT="1">}
%TCIDATA{BibliographyScheme=Manual}
%TCIDATA{<META NAME="DocumentShell" CONTENT="Standard LaTeX\Standard LaTeX Article">}
%TCIDATA{Language=American English}
%BeginMSIPreambleData
\providecommand{\U}[1]{\protect\rule{.1in}{.1in}}
%EndMSIPreambleData
\newtheorem{theorem}{Theorem}
\newtheorem{acknowledgement}[theorem]{Acknowledgement}

\newtheorem{corollary}[theorem]{Corollary}

\newtheorem{lemma}[theorem]{Lemma}

\newtheorem{proposition}[theorem]{Proposition}

\newenvironment{proof}[1][Proof]{\noindent\textbf{#1.} }{\ \rule{0.5em}{0.5em}}
\begin{document}

\title{ Symplectic Coarse-Grained Dynamics: Chalkboard Motion in Classical and
Quantum Mechanics}
\author{Maurice A. de Gosson\thanks{maurice.de.gosson@univie.ac.at}\\University of Vienna,\\Faculty of Mathematics (NuHAG)\\Oskar-Morgenstern-Platz 1, Vienna}
\maketitle
\tableofcontents

\begin{abstract}
In the usual approaches to mechanics (classical or quantum) the primary object
of interest is the Hamiltonian, from which one tries to deduce the solutions
of the equations of motion (Hamilton or Schr\"{o}dinger). In the present work
we reverse this paradigm and view the motions themselves as being the primary
objects. This is made possible by studying arbitrary phase space motions, not
of points, but of (small) ellipsoids with the requirement that the symplectic
capacity of these ellipsoids is preserved. This allows us to guide and control
these motions as we like. In the classical case these ellipsoids correspond to
a symplectic coarse graining of phase space, and in the quantum case they
correspond to the \textquotedblleft quantum blobs\textquotedblright\ we
defined in previous work, and which can be viewed as minimum uncertainty phase
space cells which are in a one-to-one correspondence with Gaussian pure states.

\end{abstract}

\section{Introduction}

In traditional classical and quantum mechanics it is assumed that the
Hamiltonian function (or its quantization) is given and one thereafter sets
out to solve the corresponding dynamical equations (Hamilton or
Schr\"{o}dinger). In the present paper we reverse this paradigm by considering
the primary objects as being motions, classical or quantum. These motions are
not defined by their actions on points, but rather on ellipsoids with constant
symplectic capacity, as motivated by our discussion above. We will see that
there is a great latitude in choosing these motions, justifying our use of the
metaphor \textquotedblleft chalkboard motion\textquotedblright: these motions
can be compared to chalk drawings on a blackboard leaving a continuous
succession of thick points. The surprising fact, which originally motivated
our study, is that chalkboard motions are indeed Hamiltonian, but this in a
very simple and unexpected way. We will be able to construct such motions at
will: exactly as when one stands in front of a blackboard and uses a piece of
chalk to make a drawing -- except that in our case the blackboard is infinite
and multidimensional, and the drawing consists of paths left by moving ellipsoids.

As we will see, these constructs allow us to define a quantum phase space,
obtained from the usual Euclidean phase\ space using a coarse-graining by
minimum uncertainty ellipsoids (we have dubbed these ellipsoids
\textquotedblleft quantum blobs\textquotedblright\ elsewhere \cite{blobs}). We
will then be able to define a \textquotedblleft chalkboard
motion\textquotedblright\ in this quantum phase space by quantizing the
classical chalkboard motions; this will again give us great latitude in
\textquotedblleft piloting\textquotedblright\ and controlling at each step
these quantum motions. As we will see, this procedure has many advantages, in
particular that of conceptual and computational simplicity. Admittedly, the
term \textquotedblleft quantum phase space\textquotedblright\ is usually
perceived as a red herring in physics: some physicists argue that there can't
be any phase space in quantum mechanics, since the notion of a well-defined
point does not make sense because of the uncertainty principle. Dirac himself
dismissed in 1945 in a letter to Moyal even the suggestion that quantum
mechanics can be expressed in terms of classical-valued phase space variables
(see Curtright \textit{et al}. \cite{curtright} for a detailed account of the
Dirac--Moyal discussion). Still, most theoretical physicists use phase space
techniques every day when they work with the Wigner functions of quantum
states: these functions are defined on the classical phase space
$\mathbb{R}_{x}^{n}\times\mathbb{R}_{p}^{n}$, and this does not lead to any
contradictions: the datum of the Wigner function $W\psi$ of a state $\psi$ is
both mathematically and physically equivalent to the datum of the state
itself. There are in truth many phase space approaches to quantum mechanics;
see for instance \cite{epi,polko,Olivares} for various and sometimes
conflicting points of view.

\subsection{Introductory example}

Let us consider the disk $D(\varepsilon):x^{2}+p^{2}\leq\varepsilon^{2}$ in
the phase plane $\mathbb{R}_{x}\times\mathbb{R}_{p}$. We smoothly deform this
disk into an ellipse; such a deformation is represented by a family of real
$2\times2$ matrices
\begin{equation}
S_{t}=%
\begin{pmatrix}
a_{t} & b_{t}\\
c_{t} & d_{t}%
\end{pmatrix}
\label{st}%
\end{equation}
and the disk thus becomes after time $t$ the ellipse $D_{t}(\varepsilon
)=S_{t}D(\varepsilon)$ represented by
\[
(c_{t}^{2}+d_{t}^{2})x^{2}+(a_{t}^{2}+b_{t}^{2})p^{2}-2(a_{t}c_{t}+b_{t}%
d_{t})px\leq\varepsilon^{2}.
\]
Assume now that the ellipses $S_{t}D(\varepsilon)$ all have the same area
$\pi\varepsilon^{2}$; the family $(S_{t})$ must then consist of symplectic
matrices,\textit{ i.e.} $\det S_{t}=a_{t}d_{t}-b_{t}c_{t}=1$. This constraint
implies that
\[
(a_{t}^{2}+b_{t}^{2})(c_{t}^{2}+d_{t}^{2})=1+\mu_{t}^{2}%
\]
where we have set $\mu_{t}=a_{t}c_{t}+b_{t}d_{t}$\ so we can rewrite the
equation of $D_{t}(\varepsilon)$ in the form%
\[
\frac{1+\mu_{t}^{2}}{a_{t}^{2}+b_{t}^{2}}x^{2}+(a_{t}^{2}+b_{t}^{2})p^{2}%
-2\mu_{t}px\leq\varepsilon^{2}%
\]
which shows that the ellipse $D_{t}(\varepsilon)$ can be obtained from the
disk $D(\varepsilon)$ using, instead of $S_{t}$, the family of lower
triangular matrices%
\begin{equation}
R_{t}=%
\begin{pmatrix}
\lambda_{t}^{-1} & 0\\
\mu_{t}\lambda_{t}^{-1} & \lambda_{t}%
\end{pmatrix}
\text{ \ \textit{with} }\left\{
\begin{array}
[c]{c}%
\lambda_{t}=1/\sqrt{a_{t}^{2}+b_{t}^{2}}\\
\mu_{t}=(a_{t}c_{t}+b_{t}d_{t})/(a_{t}^{2}+b_{t}^{2})
\end{array}
\right.  \label{reduced}%
\end{equation}
or, equivalently,
\begin{equation}
R_{t}=%
\begin{pmatrix}
1 & 0\\
\mu_{t} & 0
\end{pmatrix}%
\begin{pmatrix}
\lambda_{t}^{-1} & 0\\
0 & \lambda_{t}%
\end{pmatrix}
. \label{reducedbis}%
\end{equation}
This shows, in particular, that any ellipse can be obtained from a disk with
the same area and center using only a coordinate rescaling and a shear. All
this actually becomes much more obvious if one recalls that every symplectic
matrix can be factorized as a product of a shear, a rescaling, and a rotation:
this is called the \textquotedblleft Iwasawa factorization\textquotedblright,
which we will study in detail in Section \ref{seciwa}. This implies that there
is considerable redundancy when we let the $S_{t}$ act on circular disks
centered at the origin; in fact we could replace $S_{t}$ with any product
\[
S_{t}=%
\begin{pmatrix}
1 & 0\\
\mu_{t} & 0
\end{pmatrix}%
\begin{pmatrix}
\lambda_{t}^{-1} & 0\\
0 & \lambda_{t}%
\end{pmatrix}%
\begin{pmatrix}
\cos\alpha_{t} & \sin\alpha_{t}\\
-\sin\alpha_{t} & \cos\alpha_{t}%
\end{pmatrix}
\]
where $\alpha_{t}$ is a smoothly varying angle. We next note that the matrices
$S_{t}$ and $R_{t}$ being symplectic, the families $(S_{t})$ and $(R_{t})$ can
both be interpreted as Hamiltonian flows. These flows are generated by the
quadratic time-dependent Hamiltonian functions%
\begin{align}
H_{S}(x,p,t)  &  =\frac{1}{2}(a_{t}\dot{b}_{t}-\dot{a}_{t}b_{t})p^{2}+\frac
{1}{2}(\dot{d}_{t}c_{t}-\dot{c}_{t}d_{t})x^{2}-(a_{t}\dot{d}_{t}-b_{t}\dot
{c}_{t})px\label{HS}\\
H_{R}(x,p,t)  &  =\frac{1}{2}[2(\dot{\lambda}_{t}\lambda_{t}^{-1})\dot{\mu
}_{t}-\mu_{t}]x^{2}+(\dot{\lambda}_{t}\lambda_{t}^{-1})px \label{HR}%
\end{align}
where the dots $^{\cdot}$ signify differentiation $\tfrac{d}{dt}$ with respect
to $t$. (We invite the reader who is wondering by what magic we have obtained
these two formulas to have a sneak preview of Sections \ref{secham} and
\ref{secaff}.) The main observation we now make is that the Hamiltonian
$H_{R}$ lacks any term in $p^{2}$; it does not produce any kinetic energy. It
is a simple (but dull) exercise to show that while the Hamilton equations of
motion corresponding to respectively (\ref{HS}) and (\ref{HR}) are different
they lead to the same deformation of the initial disk $D(\varepsilon)$. Here
is an elementary example: consider the family of symplectic matrices%
\begin{equation}
S_{t}=%
\begin{pmatrix}
1 & t\\
0 & 1
\end{pmatrix}
; \label{free}%
\end{equation}
it corresponds to a particle with mass one freely moving along the $x$-axis
and it is thus the flow of the elementary Hamiltonian $H_{S}=\frac{1}{2}p^{2}%
$. The reduced Hamiltonian $H_{R}$, generating the flow $(R_{t})$, is given by%
\begin{equation}
H_{R}=-\frac{1}{2(1+t^{2})}x^{2}+\frac{tpx}{1+t^{2}}. \label{hr}%
\end{equation}
Let us now go one step further: while deforming the disk as just described, we
simultaneously move its center along an arbitrary curve $z_{t}=(x_{t},p_{t})$
starting from the origin at time $t=0$.\ Assuming that $z_{t}$ is continuously
differentiable we can view it as a Hamiltonian trajectory, and this in many
ways. The simplest choice is to take the \textquotedblleft translation
Hamiltonian\textquotedblright\ $H=p\dot{x}_{t}-x\dot{p}_{t}$ whose associated
Hamilton equations are $\dot{x}=\dot{x}_{t}$ and $\dot{p}=\dot{p}_{t}$. As
time elapses, the disk is being stretched and deformed while moving along
$z_{t}$, and at time $t$ it has become the ellipsoid $T(z_{t})R_{t}%
D(\varepsilon)$ where $T(z_{t})$ is the translation $z\longmapsto z+z_{t}$. An
absolutely not obvious fact is that the motion of this ellipsoid is
\textit{always} Hamiltonian! In fact, it corresponds to the inhomogeneous
quadratic Hamiltonian function%
\begin{equation}
H=H_{S}(x,p,t)+(a_{t}\dot{x}_{t}+b_{t}\dot{p}_{t})p-(c_{t}\dot{x}_{t}+d\dot
{p}_{t}) \label{hsa}%
\end{equation}
where the first term $H_{S}$ is given by formula (\ref{HS}) and represents the
deformation, while the second term corresponds to the motion. It is not
difficult to see that we can actually recover \textit{all} time-dependent
Hamiltonian functions of the type%
\[
H=\alpha_{t}x^{2}+\beta_{t}p^{2}+\gamma_{t}px+\delta_{t}p+\varepsilon_{t}x
\]
that is, all affine Hamiltonian flows by using \textquotedblleft chalkboard
motions\textquotedblright!

Let us now see what these manipulations become at the quantum level. Setting
$\varepsilon=\sqrt{\hbar}$ the disk $D(\sqrt{\hbar})=D(0,\sqrt{\hbar})$
corresponds to the Wigner ellipsoid of the standard Gaussian state%
\[
\phi_{0}(x)=(\pi\hbar)^{-1/4}e^{-x^{2}/2\hbar}.
\]
Now, to the family $(S_{t})$ one associates canonically a family of unitary
operators $(\widehat{S}_{t})$ (these are the metaplectic operators familiar
from quantum optics), and the deformation of $D(\sqrt{\hbar})$ by $S_{t}$
corresponds to the action of the metaplectic operator $\widehat{S}_{t}$ on
$\phi_{0}$. Now in general the new function $\widehat{S}_{t}\phi_{0}$ is
rather cumbersome to calculate. For instance, returning to the simple case of
free motion (\ref{free}) the function $\widehat{S}_{t}\phi_{0}$ is given by
the integral%
\begin{equation}
\widehat{S}_{t}\phi_{0}(x)=\left(  \tfrac{1}{2\pi i\hbar t}\right)  ^{1/2}%
\int_{-\infty}^{\infty}e^{\frac{i}{\hbar}\frac{(x-x^{\prime})^{2}}{2t}}%
\phi_{0}(x^{\prime})dx^{\prime} \label{stouffi}%
\end{equation}
(this formula actually holds for any $\phi_{0}$, not just the standard
coherent state); after some calculations involving the Fresnel formula for
Gaussian integrals one finds that%
\begin{equation}
\widehat{S}_{t}\phi_{0}(x)=\frac{1}{(\pi\hbar)^{1/4}}\frac{1}{\sqrt{1+it}}%
\exp\left(  -\frac{x^{2}}{2(1+it)\hbar}\right)  \label{stouffibis}%
\end{equation}
which is well-known in the literature on coherent states \cite{Littlejohn}. If
we now replace as in the geometric discussion above $(S_{t})$ with $(R_{t})$
we will have to replace $(\widehat{S}_{t})$ with the corresponding family
$(\widehat{R}_{t})$ of metaplectic operators. It turns out that these are
quite generally obtained from formula (\ref{reducedbis}) by%
\[
\widehat{R}_{t}\phi_{0}(x)=e^{\frac{i}{2\hbar}tx^{2}}(1+t^{2})^{-1/2}\phi
_{0}((1+t^{2})^{-1/2}x)
\]
which leads to
\begin{equation}
\widehat{R}_{t}\phi_{0}(x)=\frac{i^{\phi(t)}}{(\pi\hbar)^{1/4}\sqrt{1+t^{2}}%
}\exp\left(  -\frac{x^{2}}{2\hbar(1+t^{2})}\right)  . \label{stouffiter}%
\end{equation}
This is exactly formula (\ref{stouffibis}) up to $i^{\phi(t)}$ where $\phi(t)$
is a phase coming from the argument of $1+it$.. We have thus recovered the
propagation formula for the standard coherent state without any calculation of
integrals at all. In fact, this procedure works as well for arbitrary families
$(S_{t})$: if $S_{t}$ is given by (\ref{st}) with $a_{t}d_{t}-b_{t}c_{t}=1$
and then applying the corresponding metaplectic operator $\widehat{S}_{t}$ to
$\phi_{0}$ yields \cite{Littlejohn}%
\begin{equation}
\widehat{S}_{t}\phi_{0}(x)=\frac{1}{(\pi\hbar)^{1/4}}\frac{i^{\phi(t)}}%
{\sqrt{a_{t}+ib_{t}}}\exp\left(  -\frac{(d_{t}-ic_{t})x^{2}}{2(a_{t}%
+ib_{t})\hbar}\right)  . \label{stex}%
\end{equation}
We obtain the same result by applying the metaplectic version of
(\ref{reducedbis}), and this immediately yields%
\[
\widehat{R}_{t}\phi_{0}(x)=e^{-\frac{i}{2\hbar}\mu_{t}x^{2}}\lambda_{t}%
^{-1/2}\phi_{0}(\lambda_{t}x)
\]
where $\lambda_{t}=(a_{t}^{2}+b_{t}^{2})^{-1/2}$ and $\mu_{t}=a_{t}c_{t}%
+b_{t}d_{t}$ and this is seen to coincide with the expression above after some
trivial calculations, that is we have
\begin{equation}
\widehat{S}_{t}\phi_{0}=\widehat{R}_{t}\phi_{0}. \label{strt}%
\end{equation}
It should now be remarked that the time evolution of a Gaussian is -- as is
the evolution of any wavefunction -- governed by a Schr\"{o}dinger equation
(at least in a nonrelativistic setting); for instance for the free particle
considered above the function $\psi=\widehat{S}_{t}\phi_{0}$ satisfies%
\[
i\hbar\frac{\partial\psi}{\partial t}=-\frac{\hbar^{2}}{2}\frac{\partial
^{2}\psi}{\partial x^{2}}%
\]
as can be verified by a direct calculation using the explicit formula
(\ref{stouffibis}); the operator $\widehat{H}=-(\hbar^{2}/2)(\partial
^{2}/\partial x^{2})$ appearing in this equation is in fact the quantization
of the free-particle Hamiltonian function $H=p^{2}/2$. However we have seen
above (\textit{cf.} the equality (\ref{strt})) that we also have
$\psi=\widehat{R}_{t}\phi_{0}$; since the operators $\widehat{R}_{t}$
correspond to the flow $(R_{t})$ generated by the Hamiltonian function
(\ref{hr}), it seems plausible that the function $\psi$ should satisfy the
Schr\"{o}dinger equation corresponding to the quantization $\widehat{H}_{R}$
of this function, that is%
\[
i\hbar\frac{\partial\psi}{\partial t}=-t\left(  \frac{2+t^{2}}{1+t^{2}%
}\right)  x^{2}-\frac{i\hbar t}{1+t^{2}}\frac{1}{2}\left(  x\frac{\partial
\psi}{\partial x}+x\frac{\partial\psi}{\partial x}\right)  .
\]
It turns out that this guess is correct.

\subsection{Why ellipsoids are so useful in classical and quantum mechanics}

Mathematical points do not have any operative meaning in physics, be it
classical or quantum. Points live in the Platonic realm and are
epistemologically inaccessible. As Gazeau \cite{ga18} jokingly notes
\textquotedblleft...\textit{nothing is mathematically exact from the physical
point of view}\textquotedblright. What is however accessible to us are gross
approximations, like the chalk dots on a blackboard to take one naive example.
On a slightly more elaborate level, ellipsoids are good candidates as
substitutes for points. Consider for instance a particle moving in the
configuration space $\mathbb{R}_{x}^{n}$. We perform a succession of
simultaneous position and momentum measurements and find a cloud of points
concentrated in a small region of the phase space $\mathbb{R}_{x}^{n}%
\times\mathbb{R}_{p}^{n}$. Position and momentum measurements lead to a cloud
of points, and we can use a method familiar from multivariate statistical
analysis to associate to our cloud a phase space ellipsoid. It works as
follows: after having eliminated possible outliers, we associate a phase space
ellipsoid $\Omega$ of minimum volume containing the convex hull of the
remaining set of points. This is the \textquotedblleft John--L\"{o}wner
ellipsoid\textquotedblright\ \cite{Ball,jo48,kumar,sch08} which plays an
extremely important role not only in statistics \cite{silver,stat}, but also
in many other related and unrelated disciplines (\textit{e.g. }convex geometry
and optimization, optimal design, computational geometry, computer graphics,
and pattern recognition). Since (ideally) the precision of measurements can be
arbitrarily increased, the volume of $\Omega$ can become as small as we want:
for every $\varepsilon>0$ we can make a sequence of measurements such that
$\operatorname*{Vol}(\Omega)<\varepsilon$. What about quantum systems? We
again perform measurements leading to a plot of points in phase space. But
while we can arbitrarily decrease the volume of the John--L\"{o}wner ellipsoid
$\Omega$ by increasing the precision of measurements in the classical case,
this does not work out in quantum mechanics because of the uncertainty
principle (for the precision limits in quantum metrology see the recent review
article \cite{haase}). Suppose in fact that we project $\Omega$ on the planes
of conjugate variables $\mathcal{P}_{j}=\mathcal{P}(x_{j},p_{j})$. We thus
obtain $n$ ellipses $\omega_{1},\omega_{2},...,\omega_{n}$ contained in the
planes $\mathcal{P}_{1},\mathcal{P}_{2},...,\mathcal{P}_{n}$ and the
Heisenberg inequalities imply that we must have
\begin{equation}
\operatorname*{Area}(\omega_{j})\gtrsim\hbar. \label{vol2}%
\end{equation}
This does not however lead to any estimate on the volume of the ellipsoid
$\Omega$ as one is tempted to believe by inference from the classical case
\cite{ro18}; in particular it does \textit{not} imply that phase\ space is
coarse grained by minimum cells with volume $\sim\hbar^{n}$. Suppose indeed
the volume of the smallest John--L\"{o}wner ellipsoid $\Omega$ is, say,
\begin{equation}
\operatorname*{Vol}(\Omega)=\frac{\pi^{n}}{n!}\hbar^{n} \label{vol3}%
\end{equation}
which is the volume of a phase space ball with radius $\sqrt{\hbar}$. The
projection of this ball on each of the planes $\mathcal{P}_{j}$ of conjugate
variables is a disk with area $\pi\hbar=\frac{1}{2}h$ in conformity with
(\ref{vol2}). However, if we choose for $\Omega$ any ellipsoid
\begin{equation}
\Omega:\sum_{j=1}^{n}\frac{x_{j}^{2}+p_{j}^{2}}{R_{j}^{2}}\leq1 \label{ell1}%
\end{equation}
the equality (\ref{vol3}) will still hold provided that $R_{1}^{2}R_{2}%
^{2}\cdot\cdot\cdot R_{n}^{2}=\hbar^{2}$ but for this it is not necessary at
all that the projections of $\Omega$ on all the planes $\mathcal{P}_{j}$ have
area at least $\pi R_{j}^{2}=\frac{1}{2}h$. For instance, in the case $n=2$
the ellipsoid
\begin{equation}
\frac{x_{1}^{2}+p_{1}^{2}}{N\hbar}+\frac{x_{2}^{2}+p_{2}^{2}}{N^{-1}\hbar}%
\leq1 \label{ell2}%
\end{equation}
indeed has volume $\pi^{2}\hbar^{2}/2$ for every value of $N>0$ and its
projection on the $\mathcal{P}_{1}$ plane has area $N\pi\hbar$ but its
projection on the $\mathcal{P}_{2}$ plane has area $N^{-1}\pi\hbar$ so the
uncertainty principle is violated for large $N$. This discussion shows that
the volume condition $\operatorname*{Vol}(\Omega)\sim\hbar^{n}$ is not
sufficient to describe a phase space coarse graining; it does not allow us to
tell whether a phase space ellipsoid (or more general phase space domains) is
in compliance with the most basic feature of quantum mechanics, the
uncertainty principle (which it would be better to call the
\textit{indeterminacy principle}).

\subsection{Notation and terminology}

We will equip $\mathbb{R}^{2n}=T^{\ast}\mathbb{R}$ with the standard
symplectic structure $\sigma=\sum_{j=1}^{n}dp_{j}\wedge dx_{j}$ that is
\[
\sigma(z,z^{\prime})=p\cdot x^{\prime}-p^{\prime}\cdot x
\]
where $z=(x,p)$, $z^{\prime}=(x^{\prime},p^{\prime})$. In matrix notation
$\sigma(z,z^{\prime})=(z^{\prime})^{T}Jz$ where%
\[
J=%
\begin{pmatrix}
0_{n\times n} & I_{n\times n}\\
-I_{n\times n} & 0_{n\times n}%
\end{pmatrix}
.
\]
The symplectic group of $\mathbb{R}^{2n}$ is denoted by $\operatorname*{Sp}%
(n)$; it consists of all linear automorphisms of $\mathbb{R}^{2n}$ such that
$S^{\ast}\sigma=\sigma$, that is $\sigma(Sz,Sz^{\prime})=\sigma(z,z^{\prime})$
for all $z,z^{\prime}\in\mathbb{R}^{2n}$. Working in the canonical basis
$\operatorname*{Sp}(n)$ is identified with the group of all real $2n\times2n$
matrices $S$ such that $S^{T}JS=J$ (or, equivalently, $SJS^{T}=J$).

\section{Hamiltonian Flows\label{sechamn}}

We review some results from symplectic mechanics, that is Hamiltonian
mechanics expressed in the language of symplectic geometry; it is mainly
concerned with geometric properties and ultimately cooks down to the study of
symplectic isotopies. See for instance \cite{feng,Wang} for the implementation
of symplectic algorithms allowing the numerical resolution of Hamiltonian
systems. 

\subsection{Symplectic matrices}

It is convenient to fix a symplectic basis of $\mathbb{R}^{2n}$ and to
identify the symplectic automorphisms of $(\mathbb{R}^{2n},\sigma)$ with their
matrices in that basis. We will mainly work in the canonical basis (which is
both orthogonal and symplectic) and write symplectic matrices in block-form%
\begin{equation}
S=%
\begin{pmatrix}
A & B\\
C & D
\end{pmatrix}
\label{1sab}%
\end{equation}
where the submatrices $A,B,C,D$ all have same dimension $n\times n$. The
condition $S^{T}JS=J$ can then be expressed as conditions on these
submatrices; for instance
\begin{equation}
A^{T}D-C^{T}B=I_{\mathrm{d}}\text{ , }A^{T}C=CA^{T}\text{ , }B^{T}D=D^{T}B
\label{1cond}%
\end{equation}
or
\begin{equation}
AD^{T}-BC^{T}=I_{\mathrm{d}}\text{ , }AB^{T}=BA^{T}\text{ , }CD^{T}=DC^{T}.
\label{2cond}%
\end{equation}
Also, the inverse of $S$ is explicitly given by
\begin{equation}
S^{-1}=%
\begin{pmatrix}
D^{T} & -B^{T}\\
-C^{T} & A^{T}%
\end{pmatrix}
. \label{1inv}%
\end{equation}
See \cite{Birk}, \S 2.1, for details and additional material.

The symplectic group is generated by the set of all matrices $J$, $V_{-P}$ and
$M_{L}$ where
\begin{equation}
V_{-P}=%
\begin{pmatrix}
I & 0\\
P & I
\end{pmatrix}
\text{ , }M_{L}=%
\begin{pmatrix}
L^{-1} & 0\\
0 & L^{T}%
\end{pmatrix}
\end{equation}
with $P=P^{T}$ and $\det L\neq0$.

\subsection{The group $\operatorname*{Ham}(n)$}

Let $I_{T}$ be the closed interval $[-T,T]$ with $T>0$ (or $T=\infty$). A
function $H\in C^{\infty}(\mathbb{R}^{2n}\times I_{T},\mathbb{R})$ will be
called a \textquotedblleft Hamiltonian\textquotedblright; whenever necessary
it will be convenient to assume that there exists a compact subset $K$ of
$\mathbb{R}^{2n}$ such that the support of $H$ satisfies $\operatorname*{supp}%
H(\cdot,t)$ is contained in $K$ for all $t\in I_{T}$. This apparently
restrictive assumption avoids problems arising with Hamiltonians defined on
open manifolds which can have bad behavior at infinity, as discussed in
\cite{Polter}, \S 1.3 (for instance, solutions to Hamilton's equations can
blow up at finite time). When applied, this assumption guarantees that the
flow generated by the Hamiltonian vector field $X_{H}=J\partial_{z}H$ exists
for all $t\in I_{T}$. Some afterthought shows that this condition is after all
not too stringent. Assume for instance that we want to deal with the standard
physical Hamiltonian%
\[
H(x,p,t)=\frac{1}{2}|p|^{2}+V(x,t).
\]
The latter is never compactly supported. But for all (or most) practical
purposes we want to study the Hamilton equations $\dot{x}=p$, $\dot
{p}=-\partial_{x}V(x,t)$ in a bounded domain $D$ of phase space. Choosing a
compactly supported cutoff function $\chi\in C_{0}^{\infty}(\mathbb{R}^{2n})$
such that $\chi(z)=1$ for $z\in\overline{D}$ the solutions $t\longmapsto(x,p)$
of the Hamilton equations for $\chi H$ with initial value $z_{0}\in D$ are
just those of the initial problem $\dot{x}=p$, $\dot{p}=-\partial_{x}V(x,t)$
as long as $(x,p)$ remains in $D$.

Let $D$ be an open subset of $\mathbb{R}^{2n}$. A diffeomorphism
$f:D\longrightarrow\mathbb{R}^{2n}$ is called a symplectomorphism (or
\textquotedblleft canonical transformation\textquotedblright\ \cite{Arnold})
if $f^{\ast}\sigma=f$, that is, the Jacobian matrix $Df(z)$ is symplectic at
every $z\in D$:
\begin{equation}
(Df(z)^{T})JDf(z)=Df(z)J(Df(z)^{T})=J. \label{symplecto}%
\end{equation}
The symplectomorphisms of $D=\mathbb{R}^{2n}$ form a subgroup
$\operatorname*{Symp}(n)$ of the group $\operatorname*{Diff}(n)$ of all
diffeomorphisms of $\mathbb{R}^{2n}$: this easily follows from the relations
(\ref{symplecto}) above using the chain rule. The symplectic group
$\operatorname*{Sp}(n)$ is a subgroup of $\operatorname*{Symp}(n)$.

Let $H$ be a Hamiltonian in the sense above. The Hamiltonian vector field
$X_{H}=J\partial_{z}H$ is compactly supported and hence complete, so that
Hamilton's equations
\begin{equation}
\dot{z}(t)=J\partial_{z}H(z(t),t)\text{ \ , \ }z(0)=z_{0} \label{hameq1}%
\end{equation}
have a unique solution for every choice of initial point $z_{0}\in
\mathbb{R}^{2n}$. The time-dependent flow $(f_{t}^{H})$ generated by $X_{H}$
is the family of diffeomorphisms $z_{0}\longmapsto z(t)=f_{t}^{H}(z_{0})$
associating to $z_{0}$ the solution $z(t)$ at time $t\in I_{T}$. Each
$f_{t}^{H}$ is a symplectomorphism of $M$: $f_{t,t^{\prime}}^{H}%
\in\operatorname*{Symp}(n)$:%
\begin{equation}
(Df_{t}^{H}(z))^{T}JDf_{t}^{H}(z)=Df_{t}^{H}(z)J(Df_{t}^{H}(z)^{T})=J
\label{jac1}%
\end{equation}
where $Df_{t}^{H}(z)$ is the Jacobian matrix of $f_{t}^{H}$ calculated at $z$.

A remarkable fact is that composition and inversion of Hamiltonian flows also
yield Hamiltonian flows \cite{Birk,Birkbis,RMP,Polter,HZ}. Let $(f_{t}^{H})$
and $(f_{t}^{K})$ be determined by two Hamiltonian functions $H=H(z,t)$ and
$K=K(z,t)$; we have the following composition and inversion rules:

\begin{proposition}
\label{propham}(i) Let $H$ and $K$ be Hamiltonians; we have:%
\begin{align}
f_{t}^{H}f_{t}^{K}  &  =f_{t}^{H\#K}\text{ \ \ \textit{with} \ \ }%
H\#K(z,t)=H(z,t)+K((f_{t}^{H})^{-1}(z),t)\label{ch1}\\
(f_{t}^{H})^{-1}  &  =f_{t}^{\bar{H}}\text{ \ \ \textit{with} \ \ }\bar
{H}(z,t)=-H(f_{t}^{H}(z),t)\label{ch2}\\
f_{t}^{H+K}  &  =f_{t}^{H}f_{t}^{K^{\prime}}\text{ \ \textit{with} }K^{\prime
}(z,t)=K(f_{t}^{H}(z),t). \label{ch3}%
\end{align}
(ii) The composition law $\#$ defined by (\ref{ch1}) is associative: if $L$ is
a third Hamiltonian then
\begin{equation}
(H\#K)\#L=H\#(K\#L). \label{hkl}%
\end{equation}
(iii) For every $g\in\operatorname*{Symp}(n)$ we have the conjugation
property
\begin{equation}
g^{-1}f_{t}^{H}g=f_{t}^{H\circ g}. \label{conj}%
\end{equation}

\end{proposition}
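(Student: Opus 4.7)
The plan is to reduce all three items to one basic computation: identify the time derivative of a composite flow with a Hamiltonian vector field, and then invoke uniqueness of solutions to Hamilton's equations (with the common initial condition being the identity at $t=0$). The strategy hinges on one auxiliary fact, which I will prove first, and on the linearity $X_{H+K}=X_H+X_K$.

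\emph{Key lemma (naturality).} For any $g\in\operatorname{Symp}(n)$ and any Hamiltonian $K$,
\[
(Dg(z))^{-1}\,X_K(g(z),t)=X_{K\circ g}(z,t).
\]
This is because $g$ symplectic is equivalent to $(Dg)^{-1}=-J(Dg)^T J$, so
\[
(Dg(z))^{-1} J\nabla K(g(z),t)=J(Dg(z))^T\nabla K(g(z),t)=J\nabla(K\circ g)(z,t),
\]
the last equality by the chain rule. Equivalently, $g_*X_K=X_{K\circ g^{-1}}$. This lemma is the only place the symplectic condition enters.

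\emph{Proof of (\ref{ch1}).} Set $\phi_t(z):=f_t^H(f_t^K(z))$. Differentiating in $t$ by the chain rule,
\[
\tfrac{d}{dt}\phi_t(z)=X_H(\phi_t(z),t)+Df_t^H(f_t^K(z))\,X_K(f_t^K(z),t).
\]
Applying the lemma with $g=f_t^H$ (symplectic by (\ref{jac1})) to the second summand rewrites it as $X_{K\circ(f_t^H)^{-1}}(\phi_t(z),t)$. Linearity of $X$ in the Hamiltonian then gives $\tfrac{d}{dt}\phi_t=X_{H\#K}(\phi_t,t)$ with $\phi_0=\operatorname{id}$; uniqueness yields $\phi_t=f_t^{H\#K}$. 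Formulas (\ref{ch2}) and (\ref{ch3}) are immediate corollaries: impose $f_t^H f_t^{\bar H}=\operatorname{id}$, i.e.\ $H\#\bar H=0$, to obtain $\bar H=-H\circ f_t^H$; and choose $K'(z,t)=K(f_t^H(z),t)$ so that $K'\circ(f_t^H)^{-1}=K$, making $H\#K'=H+K$.

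\emph{Proof of (ii) and (iii).} Associativity is a purely algebraic check: using $f_t^{H\#K}=f_t^H f_t^K$, expand
\[
(H\#K)\#L=H+K\circ(f_t^H)^{-1}+L\circ(f_t^K)^{-1}(f_t^H)^{-1},
\]
\[
H\#(K\#L)=H+\big(K+L\circ(f_t^K)^{-1}\big)\circ(f_t^H)^{-1},
\]
and match the two expressions term by term. For the conjugation identity (iii), set $\psi_t:=g^{-1}f_t^H g$ and differentiate; using $g(\psi_t(z))=f_t^H(g(z))$ together with the naturality lemma applied to $g$ (not $f_t^H$), one finds $\tfrac{d}{dt}\psi_t(z)=X_{H\circ g}(\psi_t(z),t)$, and $\psi_0=\operatorname{id}$, so uniqueness gives $\psi_t=f_t^{H\circ g}$.

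The only nontrivial step is the naturality lemma; everything else is chain-rule bookkeeping plus the uniqueness theorem for Hamilton's ODE (which holds here thanks to the compact-support assumption on $H$ discussed earlier in the section). The main pitfall to watch for is the evaluation point of the Hamiltonian — whether it is the argument $z$, the image $f_t^H(z)$, or the pre-image $(f_t^H)^{-1}(z)$ — since a misplacement there is what distinguishes the pushforward from the pullback version of the lemma.
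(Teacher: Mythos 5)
Your proof is correct, and the key lemma you isolate --- $(Dg)^{-1}X_K\circ g = X_{K\circ g}$, equivalently $g_*X_K = X_{K\circ g^{-1}}$ --- is precisely the transformation property $X_{g_*H}=g_*X_H$ that the paper's proof cites as the basis for all three identities, so the underlying mechanism is the same; you simply carry out explicitly the chain-rule-plus-uniqueness computations that the paper delegates to the references \cite{Birk,RMP,HZ,Polter}. The one place you genuinely diverge is associativity: the paper argues at the level of flows, $(f_t^Hf_t^K)f_t^L=f_t^H(f_t^Kf_t^L)$, hence $f_t^{(H\#K)\#L}=f_t^{H\#(K\#L)}$, while you expand both formulas for the Hamiltonians and match terms directly. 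Your route is a little more careful, since equality of flows determines the Hamiltonian only up to an additive function of $t$; that ambiguity is in fact harmless here (the compact-support assumption forces the $t$-only discrepancy to vanish), but your term-by-term expansion makes the equality of Hamiltonians manifest without appealing to that side observation. One small wording nit: when you ``apply the lemma with $g=f_t^H$'' in the proof of (\ref{ch1}), what you actually use is the pushforward form you stated equivalently, $g_*X_K = X_{K\circ g^{-1}}$, applied at the point $w=f_t^K(z)$; you state the equivalence, but the sentence as written points at the pullback form, and for a step you correctly flag as the main pitfall (where the composition sits) it would be worth saying it in the form you actually invoke.
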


\begin{proof}
The proofs of these formulas are based on the transformation property
$X_{g_{\ast}H}=g_{\ast}X_{H}$ of Hamiltonian vector fields for $g\in
\operatorname*{Symp}(n)$ \cite{Birk,RMP,HZ,Polter}. The associativity
(\ref{hkl}) follows from the associativity of the composition of mappings in a
same space: $(f_{t}^{H}f_{t}^{K})f_{t}^{L}=f_{t}^{H}(f_{t}^{K}f_{t}^{L})$
hence $f_{t}^{(H\#K)\#L}=f_{t}^{H\#(K\#L)}$.
\end{proof}

Notice that even when $H$ and $K$ are time-independent Hamiltonians the
functions $H\#K$ and $\bar{H}$ are generally time-dependent. Formula
(\ref{conj}) is often expressed in physics by saying that \textquotedblleft
Hamilton's equations are covariant under canonical
transformations\textquotedblright\ \cite{Arnold}.

Let $f\in\operatorname*{Diff}(n)$ such that $\ f=f_{t_{0}}^{H}$ for some
Hamiltonian function $H$ and time $t_{0}$. We will say that $f$ is a
\emph{Hamiltonian symplectomorphism}. (Rescaling time if necessary one can
always assume $t_{0}=1$.)

As immediately follows from formulas (\ref{ch1}), (\ref{ch2}), and
(\ref{conj}) Hamiltonian symplectomorphisms form a normal subgroup
$\operatorname*{Ham}(n)$ of $\operatorname*{Symp}(n)$. A somewhat surprising
fact is that every continuous path of Hamiltonian symplectomorphisms passing
through the identity is the phase flow of a Hamiltonian function; this was
first proved by Banyaga \cite{Banyaga} in the very general context of
symplectic manifolds; see \cite{Wang,RMP} for elementary proofs:

\begin{proposition}
\label{thm1} Let $(f_{t})$ be a smooth one-parameter family of Hamiltonian
symplectomorphisms such that $f_{0}=I_{\mathrm{d}}$. Then $(f_{t})$ is the
flow determined by the Hamiltonian function
\begin{equation}
H(z,t)=-\int_{0}^{1}\sigma(\dot{f}_{t}f_{t}^{-1}(\lambda z),z)d\lambda
\label{hzt}%
\end{equation}
where $\dot{f}_{t}=df_{t}/dt$.
\end{proposition}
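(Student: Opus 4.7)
The plan is to pass from the family $(f_t)$ to its generating time-dependent vector field and then exhibit that vector field as Hamiltonian by applying the Poincar\'e lemma to a suitable closed 1-form; the radial primitive produced will be exactly the integral in (\ref{hzt}). I first set
$$X_t(z):=\dot f_t\!\left(f_t^{-1}(z)\right),$$
so that by construction $\tfrac{d}{dt}f_t(z_0)=X_t(f_t(z_0))$ with $f_0=I_{\mathrm d}$, i.e.\ $(f_t)$ is the time-dependent flow of $X_t$. It then suffices to produce a smooth $H(z,t)$ with $X_H:=J\partial_z H = X_t$, which (using the paper's convention $\sigma(u,v) = v^{T} J u$) is equivalent to $\iota_{X_t}\sigma = -dH(\cdot,t)$.

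Next I would check that $\iota_{X_t}\sigma$ is closed. Since each $f_t$ is a Hamiltonian, hence symplectic, diffeomorphism, $f_t^{\ast}\sigma = \sigma$ for every $t$; differentiating in $t$ gives $f_t^{\ast}(\mathcal{L}_{X_t}\sigma) = 0$, hence $\mathcal{L}_{X_t}\sigma = 0$, and Cartan's formula $\mathcal{L}_{X_t} = d\,\iota_{X_t} + \iota_{X_t}\,d$ combined with $d\sigma = 0$ yields $d(\iota_{X_t}\sigma) = 0$.

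Because $\mathbb{R}^{2n}$ is star-shaped at the origin, the Poincar\'e lemma furnishes a primitive via the radial homotopy operator: for any closed 1-form $\alpha = \sum_i \alpha_i\, dz^i$ the function
$$F(z) := \int_0^1 \alpha|_{\lambda z}(z)\,d\lambda = \int_0^1 \sum_i \alpha_i(\lambda z)\,z^i\,d\lambda$$
satisfies $dF = \alpha$; indeed, differentiating under the integral and using $\partial_j\alpha_i = \partial_i\alpha_j$ turns the integrand into $\tfrac{d}{d\lambda}[\lambda\,\alpha_j(\lambda z)]$, which telescopes to $\alpha_j(z)$. Applying this construction to the closed 1-form $\alpha = -\iota_{X_t}\sigma$, and noting $\iota_{X_t}\sigma|_{\lambda z}(z) = \sigma(X_t(\lambda z), z)$, produces exactly
$$H(z,t) = -\int_0^1 \sigma\!\left(\dot f_t f_t^{-1}(\lambda z),\, z\right) d\lambda,$$
which is (\ref{hzt}). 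Smoothness in $(z,t)$ is immediate from smoothness of $(f_t)$ and differentiation under the integral, and $X_H = X_t$ follows from the nondegeneracy of $\sigma$ and $\iota_{X_H}\sigma = -dH = \iota_{X_t}\sigma$.

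The only real source of friction is the bookkeeping of signs: one must carefully align the paper's conventions $X_H = J\partial_z H$ and $\sigma(u,v) = v^{T} J u$ (which together give $\iota_{X_H}\sigma = -dH$) with the orientation of the radial homotopy operator so that the final primitive sits with a minus sign in front of the integral in (\ref{hzt}). Once this is reconciled, the proof is essentially the explicit incarnation of the principle ``symplectic vector field on a contractible manifold is Hamiltonian'' via the Poincar\'e radial primitive, applied pointwise in $t$ to the vector field generated by the isotopy $(f_t)$.
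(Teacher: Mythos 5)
Your proof is correct and complete. Note that the paper does not actually prove Proposition~\ref{thm1}; it defers to the references \cite{Wang,RMP} (and to Banyaga's theorem for the general manifold case), so there is no in-text argument to compare against. Your route --- define $X_t = \dot f_t \circ f_t^{-1}$, deduce from $f_t^{\ast}\sigma=\sigma$ that $\mathcal{L}_{X_t}\sigma=0$ and hence $d(\iota_{X_t}\sigma)=0$ via Cartan's formula, then invoke the radial Poincar\'e homotopy on the star-shaped domain $\mathbb{R}^{2n}$ to produce the explicit primitive --- is exactly the standard elementary argument, and your sign bookkeeping ($\iota_{X_H}\sigma = -dH$ under the paper's conventions $X_H = J\partial_z H$, $\sigma(u,v)=v^T J u$) checks out against formula (\ref{hzt}).

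One observation worth recording: your argument uses only that each $f_t$ is a \emph{symplectomorphism}, not the stronger hypothesis that each $f_t$ is a \emph{Hamiltonian} symplectomorphism as the proposition states. This is legitimate and actually sharpens the result on $\mathbb{R}^{2n}$: because the phase space is contractible, the closedness of $\iota_{X_t}\sigma$ already gives exactness, so every smooth symplectic isotopy through the identity is Hamiltonian. The stronger hypothesis (Banyaga's setting) only matters on symplectic manifolds with nontrivial first de~Rham cohomology, where a symplectic vector field need not be Hamiltonian.
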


We will call a smooth path $(f_{t})$, $t\in I_{T}$, in $\operatorname*{Ham}%
(n)$ joining the identity to some element $f\in\operatorname*{Ham}(n)$ a
\emph{Hamiltonian isotopy}.

\subsection{Time-dependent quadratic Hamiltonians\label{secham}}

Quadratic Hamiltonian functions are widely used both in theoretical and
practical situations. They are easy to manipulate in the time-independent case
and lead to rich structures (see the excellent review by Combescure and Robert
\cite{coro06}). Here we study general quadratic Hamiltonians with
time-dependent coefficients.

Let $S\in\operatorname*{Sp}(n)$. Since $\operatorname*{Sp}(n)$ is arcwise
connected we can find a $C^{1}$ path $t\longmapsto S_{t}$, $t\in I_{T}$ in
$\operatorname*{Sp}(n)$ joining the identity to $S$. We will write $S_{t}$ in
$n\times n$ block-matrix form%
\begin{equation}
S_{t}=%
\begin{pmatrix}
A_{t} & B_{t}\\
C_{t} & D_{t}%
\end{pmatrix}
\text{ \ , \ }S_{0}=I_{\mathrm{d}}. \label{st1}%
\end{equation}

\begin{proposition}
\label{propabcd}(i) $(S_{t})$ is the phase flow determined by the quadratic
Hamiltonian%
\begin{equation}
H(z,t)=-\frac{1}{2}J\dot{S}_{t}S_{t}^{-1}z^{2}; \label{hamzo}%
\end{equation}
(ii) the latter is explicitly given by
\begin{equation}
H=\tfrac{1}{2}(\dot{D}_{t}C_{t}^{T}-\dot{C}_{t}D_{t}^{T})x^{2}-(\dot{D}%
_{t}A_{t}^{T}-\dot{C}_{t}B_{t}^{T})p\cdot x+\tfrac{1}{2}(\dot{B}_{t}A_{t}%
^{T}-\dot{A}_{t}B_{t}^{T})p^{2}. \label{hamabcd}%
\end{equation}

\end{proposition}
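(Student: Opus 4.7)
My plan is to derive both parts of Proposition \ref{propabcd} from Proposition \ref{thm1}, exploiting the linearity of $S_t$.

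For part (i), I substitute $f_t = S_t$ into formula (\ref{hzt}). Since $S_t$ is linear, so is $\dot S_t S_t^{-1}$, so $\dot S_t S_t^{-1}(\lambda z) = \lambda\,\dot S_t S_t^{-1} z$, and bilinearity of $\sigma$ pulls $\lambda$ outside; the $\lambda$-integral evaluates to $\tfrac12$ and yields
\begin{equation*}
H(z,t) \;=\; -\tfrac12\,\sigma\bigl(\dot S_t S_t^{-1} z,\,z\bigr) \;=\; -\tfrac12\, z^{T} J\dot S_t S_t^{-1} z,
\end{equation*}
which is (\ref{hamzo}) with the standard shorthand $Mz^{2} := \langle Mz,z\rangle$. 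A self-contained check independent of Proposition \ref{thm1} is also easy: differentiating $S_t^{T} J S_t = J$ shows that $S_t^{T} J\dot S_t$ is symmetric, hence $J\dot S_t S_t^{-1} = S_t^{-T}(S_t^{T} J\dot S_t)S_t^{-1}$ is symmetric; this makes $\partial_z H = -J\dot S_t S_t^{-1} z$, so $J\partial_z H = \dot S_t S_t^{-1} z$, which is exactly the velocity $\dot z(t)$ along $z(t) = S_t z_0$.

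For part (ii), I compute the block form of $J\dot S_t S_t^{-1}$ and read off the coefficients of $x^{2}$, $p\cdot x$ and $p^{2}$ in $-\tfrac12 z^{T}(J\dot S_t S_t^{-1})z$. Using the block inverse formula (\ref{1inv}) for $S_t^{-1}$ and then multiplying on the left by $J$ (which swaps the two block-rows with a sign change), the diagonal blocks of $J\dot S_t S_t^{-1}$ come out to be $\dot C_t D_t^{T} - \dot D_t C_t^{T}$ and $\dot A_t B_t^{T} - \dot B_t A_t^{T}$, and its top-right block is $\dot D_t A_t^{T} - \dot C_t B_t^{T}$; the bottom-left block is the transpose of the top-right by the symmetry established above. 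Expanding the quadratic form then reproduces (\ref{hamabcd}) verbatim, the single cross-term $-(\dot D_t A_t^{T} - \dot C_t B_t^{T})p\cdot x$ being the sum of the two transpose-related off-diagonal contributions.

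The main obstacle is not conceptual but bookkeeping: one has to multiply the block matrices carefully and confirm that the two off-diagonal blocks of $J\dot S_t S_t^{-1}$ really are transposes of one another, so that they combine into one $p\cdot x$ term rather than two. This transposition identity is equivalent to the symmetry of $J\dot S_t S_t^{-1}$ used in part (i), and at the level of blocks it is what one obtains by differentiating the symplectic relations (\ref{2cond}) -- for instance the identity $\tfrac{d}{dt}(A_t D_t^{T} - B_t C_t^{T}) = 0$ furnishes exactly the cancellation required.
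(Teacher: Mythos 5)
Your proof is correct and follows essentially the same route as the paper: specialize formula (\ref{hzt}) to the linear case and integrate out $\lambda$ for part (i), then compute the block form of $J\dot S_t S_t^{-1}$ via the inverse formula (\ref{1inv}) and read off the coefficients for part (ii). The extra observations you add — the direct verification that $J\dot S_t S_t^{-1}$ is symmetric by differentiating $S_t^T J S_t = J$, and the identification of the off-diagonal blocks as transposes of one another via $\tfrac{d}{dt}(A_t D_t^T - B_t C_t^T)=0$ — are correct and make explicit a bookkeeping point the paper leaves implicit.
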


\begin{proof}
\textit{(i)} In view of formula (\ref{hzt}) we have%
\begin{equation}
H(z,t)=-\int_{0}^{1}\sigma\left(  \dot{S}_{t}S_{t}^{-1}(\lambda z),z\right)
d\lambda; \label{hamzobis}%
\end{equation}
formula (\ref{hamzo}) follows since%
\[
\sigma\left(  \dot{S}_{t}S_{t}^{-1}(\lambda z),z\right)  =\lambda J\dot{S}%
_{t}S_{t}^{-1}z\cdot z.
\]
\textit{(ii)} The inverse of $S_{t}$ is given by%
\begin{equation}
S_{t}^{-1}=%
\begin{pmatrix}
D_{t}^{T} & -B_{t}^{T}\\
-C_{t}^{T} & A_{t}^{T}%
\end{pmatrix}
\end{equation}
and hence%
\begin{equation}
J\dot{S}_{t}S_{t}^{-1}=%
\begin{pmatrix}
\dot{C}_{t}D_{t}^{T}-\dot{D}_{t}C_{t}^{T} & \dot{D}_{t}A_{t}^{T}-\dot{C}%
_{t}B_{t}^{T}\\
\dot{B}_{t}C_{t}^{T}-\dot{A}_{t}D_{t}^{T} & \dot{A}_{t}B_{t}^{T}-\dot{B}%
_{t}A_{t}^{T}%
\end{pmatrix}
; \label{jstst}%
\end{equation}
formula (\ref{hamabcd}) now follows from (\ref{hamzo}).
\end{proof}

In particular, if $S_{t}=e^{tX}$ with $X\in\mathfrak{sp}(n)$ (the symplectic
Lie algebra) one recovers the usual formula $H=-\frac{1}{2}JXz^{2}$.

The group $U(n)=\operatorname*{Sp}(2n)\cap O(2n,\mathbb{R})$ of symplectic
rotations can be identified with the unitary group $U(n,\mathbb{C})$ via the
embedding
\[
U(n,\mathbb{C})\ni X+iY\longmapsto%
\begin{pmatrix}
X & Y\\
-Y & X
\end{pmatrix}
\in U(n);
\]
notice that $X$ and $Y$ must satisfy the conditions
\begin{align}
XX^{T}+YY^{T}  &  =I_{\mathrm{d}}\text{ \ , \ }XY^{T}-YX^{T}=0\label{xxyy1}\\
X^{T}X+Y^{T}Y  &  =I_{\mathrm{d}}\text{ \ , \ }X^{T}Y-Y^{T}X=0. \label{xxyy2}%
\end{align}
Suppose that the symplectic isotopy consists of symplectic rotations
\begin{equation}
U_{t}=%
\begin{pmatrix}
X_{t} & Y_{t}\\
-Y_{t} & X_{t}%
\end{pmatrix}
\text{ \ , \ }U_{0}=I_{\mathrm{d}}. \label{utxy}%
\end{equation}
Proposition \ref{propabcd} implies:

\begin{corollary}
\label{coro3}(i) The family $(U_{t})$ of symplectic rotations (\ref{utxy}) is
the flow generated by the quadratic Hamiltonian%
\begin{equation}
H_{U}=\tfrac{1}{2}Z_{t}x^{2}+\tfrac{1}{2}Z_{t}p^{2} \label{hamabcdxy}%
\end{equation}
where
\begin{equation}
Z_{t}=\dot{Y}_{t}X_{t}^{T}-\dot{X}_{t}Y_{t}^{T}=Z^{T}. \label{Z}%
\end{equation}
(ii) Conversely, every every Hamiltonian (\ref{hamabcdxy}) such that $Z=Z^{T}$
generates a flow $(U_{t})$ with $U_{t}\in U(n)$ and thus uniquely determines
$X_{t},Y_{t}$ such that (\ref{Z}) holds.
\end{corollary}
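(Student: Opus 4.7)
The plan is to deduce part (i) as a direct application of Proposition \ref{propabcd} with the identifications $A_t = D_t = X_t$, $B_t = Y_t$, $C_t = -Y_t$. Substituting into formula (\ref{hamabcd}), the coefficient of $\tfrac{1}{2}x^2$ should become
\[
\dot{D}_t C_t^T - \dot{C}_t D_t^T = -\dot{X}_t Y_t^T + \dot{Y}_t X_t^T = Z_t,
\]
and the coefficient of $\tfrac{1}{2}p^2$ should become
\[
\dot{B}_t A_t^T - \dot{A}_t B_t^T = \dot{Y}_t X_t^T - \dot{X}_t Y_t^T = Z_t,
\]
so the $x^2$ and $p^2$ coefficients coincide. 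The symmetry $Z_t = Z_t^T$ then drops out by differentiating the unitary relation $X_t Y_t^T = Y_t X_t^T$ in (\ref{xxyy1}).

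The main obstacle is showing that the cross coupling in $p\cdot x$ vanishes. Under the substitution, the coefficient $\dot{D}_t A_t^T - \dot{C}_t B_t^T$ becomes $\dot{X}_t X_t^T + \dot{Y}_t Y_t^T$, which is \emph{not} zero as a matrix. The resolution is that this matrix is skew-symmetric: differentiating the identity $X_t X_t^T + Y_t Y_t^T = I_{\mathrm{d}}$ from (\ref{xxyy1}) yields
\[
(\dot{X}_t X_t^T + \dot{Y}_t Y_t^T) + (\dot{X}_t X_t^T + \dot{Y}_t Y_t^T)^T = 0.
\]
Since the bilinear expression $Mp\cdot x$ with $M$ skew-symmetric makes no contribution to the underlying symmetric quadratic form on $z$ (equivalently, the off-diagonal block of $J\dot{S}_tS_t^{-1}$ in (\ref{jstst}) is forced to be symmetric), the alleged $p\cdot x$ term is annihilated and $H_U$ collapses to (\ref{hamabcdxy}).

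For the converse (ii), the plan is to write the Hamiltonian vector field explicitly. Setting $Q_t = \left(\begin{smallmatrix} Z_t & 0 \\ 0 & Z_t\end{smallmatrix}\right)$, we have $X_{H_U}(z) = JQ_t z$, so the flow satisfies $\dot{U}_t = JQ_t U_t$. The key observation is that
\[
JQ_t =
\begin{pmatrix}
0 & Z_t\\
-Z_t & 0
\end{pmatrix}
\]
has the block form $\left(\begin{smallmatrix} A & B \\ -B & A\end{smallmatrix}\right)$ with $A = 0$ skew-symmetric and $B = Z_t$ symmetric, which is precisely the Lie algebra condition isolating $\mathfrak{u}(n)$ inside $\mathfrak{sp}(n)$. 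Since $U_0 = I_{\mathrm{d}} \in U(n)$ and the generator lies in $\mathfrak{u}(n)$ at every time, $U_t$ stays in $U(n)$ and therefore inherits the block form (\ref{utxy}). Writing $\dot{U}_t = JQ_t U_t$ block-by-block yields the coupled linear system $\dot{X}_t = -Z_t Y_t$, $\dot{Y}_t = Z_t X_t$, which together with $X_0 = I_{\mathrm{d}}$, $Y_0 = 0$ uniquely determines $X_t$ and $Y_t$. The consistency with (\ref{Z}) is then a one-line check:
\[
\dot{Y}_t X_t^T - \dot{X}_t Y_t^T = Z_t X_tX_t^T + Z_t Y_tY_t^T = Z_t(X_tX_t^T + Y_tY_t^T) = Z_t,
\]
using the unitarity relation $X_tX_t^T + Y_tY_t^T = I_{\mathrm{d}}$.
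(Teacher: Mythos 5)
Part (ii) of your argument is sound and essentially parallels the paper's: the paper establishes $U_tJ=JU_t$ by noting that $U_tJ$ and $JU_t$ solve the same linear ODE with the same initial condition, whereas you argue that the generator lies in $\mathfrak{u}(n)$ so the flow stays in $U(n)$; these are equivalent. Your explicit extraction of the block system $\dot X_t=-Z_tY_t$, $\dot Y_t=Z_tX_t$ and the one-line check of (\ref{Z}) is a nice addition.

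Part (i), however, contains a genuine gap, and it sits exactly where you flag the ``main obstacle''. You correctly compute that the cross-term coefficient becomes $M:=\dot X_tX_t^T+\dot Y_tY_t^T$ and correctly deduce from $X_tX_t^T+Y_tY_t^T=I$ that $M$ is skew-symmetric. But your resolution --- that a skew $M$ annihilates $Mp\cdot x$ because ``the off-diagonal block of $J\dot S_tS_t^{-1}$ is forced to be symmetric'' --- is false on both counts. Symmetry of the full $2n\times 2n$ matrix $J\dot S_tS_t^{-1}$ forces the $(2,1)$ block to equal the transpose of the $(1,2)$ block; it does \emph{not} force the $(1,2)$ block itself to be symmetric. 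And the bilinear form $x^TMp$ is not zero for skew $M$: with $n=2$ and $M=\left(\begin{smallmatrix}0&1\\-1&0\end{smallmatrix}\right)$ one gets $x^TMp=x_1p_2-x_2p_1\not\equiv 0$. In fact, the $J$-invariance $H_U(Jz)=H_U(z)$ that the paper invokes only forces the cross coefficient to be \emph{skew}, not to vanish. A concrete counterexample makes the point sharply: take $U_t=\left(\begin{smallmatrix}e^{tA_0}&0\\0&e^{tA_0}\end{smallmatrix}\right)$ with $A_0=-A_0^T\neq 0$. Then $Y_t\equiv 0$, so $Z_t=\dot Y_tX_t^T-\dot X_tY_t^T\equiv 0$ and formula (\ref{hamabcdxy}) would give $H_U\equiv 0$ (the identity flow), yet $(U_t)$ is a nontrivial rotation, generated by $H=-x^TA_0\,p$ --- a pure cross term with skew coefficient. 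So the term $-x^TMp$ must be retained in the Hamiltonian of $(U_t)$. You should be aware that the paper's own proof makes the same erroneous claim that $(\dot XX^T+\dot YY^T)px=0$, so the gap is not of your making, but it is a gap: as written, neither your argument nor the paper's justifies dropping the cross term, and the example above shows that part (i) as stated cannot be salvaged without adding $-x^T(\dot X_tX_t^T+\dot Y_tY_t^T)p$ to (\ref{hamabcdxy}).
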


\begin{proof}
\textit{(i)} In view of formula (\ref{hamzo}) $(U_{t})$ is generated by the
Hamiltonian $H_{U}(z,t)=-\frac{1}{2}J\dot{U}_{t}U_{t}^{-1}z^{2}$. Noting that
$J$ commutes with both $U_{t}$ and $\dot{U}_{t}$ it follows that
$H_{U}(Jz,t)=H_{U}(z,t)$ and formula (\ref{hamabcdxy}) follows from
(\ref{hamabcd}) since the cross terms are $(\dot{X}X^{T}+\dot{Y}Y^{T})px=0$.
The matrix $Z_{t}$ is symmetric: we have $X_{t}Y_{t}^{T}-Y_{t}X_{t}^{T}=0$
hence, differentiating with respect to $t$, $\dot{Y}_{t}X_{t}^{T}-\dot{X}%
_{t}Y_{t}^{T}=X_{t}\dot{Y}_{t}^{T}-Y_{t}\dot{X}_{t}^{T}$. \textit{(ii)} Let
$(U_{t})$ be the flow determine by $H_{U}$. Since $U_{t}\in\operatorname*{Sp}%
(n)$ for all $t$ it is sufficient to show that in addition $U_{t}J=JU_{t}$.
Let $D_{t}=%
\begin{pmatrix}
Z_{t} & 0\\
0 & Z_{t}%
\end{pmatrix}
$. Since $\dot{U}_{t}=JD_{t}U_{t}$ we have%
\[
\frac{d}{dt}(U_{t}J)=\left(  \frac{d}{dt}U_{t}\right)  J=JD_{t}(U_{t}J);
\]
similarly, since $D_{t}J=JD_{t}$
\[
\frac{d}{dt}(JU_{t})=(-JD_{t}J)JU_{t}=JD_{t}(JU_{t})
\]
hence $U_{t}J$ and $JU_{t}$ satisfy the same first order differential equation
with same initial value $U_{0}J=JU_{0}=J$ so we must have $U_{t}J=JU_{t}$.
\end{proof}

The method described above extends to the case of affine symplectic isotopies
without difficulty:

\begin{proposition}
\label{thm2}Let $(S_{t})$ be a symplectic isotopy in $\operatorname*{Sp}(n)$
and $I_{T}\ni t\longmapsto z_{t}$ a $C^{1}$ path in $\mathbb{R}^{2n}$ with
$z_{0}=0$. (i) The affine symplectic isotopy $(f_{t})$ defined by $f_{t}%
=S_{t}T(z_{t})$ is the phase flow determined by the Hamiltonian%
\begin{equation}
H(z,t)=-\frac{1}{2}J\dot{S}_{t}S_{t}^{-1}z^{2}+\sigma\left(  z,S_{t}\dot
{z}_{t}\right)  \label{hamzoter}%
\end{equation}
and that defined by $g_{t}=T(z_{t})S_{t}$ is
\begin{equation}
H(z,t)=-\frac{1}{2}J\dot{S}_{t}S_{t}^{-1}(z-z_{t})+\sigma\left(  z,\dot{z}%
_{t}\right)  . \label{hamzoterbis}%
\end{equation}
(ii) Conversely, every Hamiltonian function%
\begin{equation}
H(z,t)=\frac{1}{2}M(t)z^{2}+m(t)z \label{hzm}%
\end{equation}
with $M(t)=M(t)^{T}$ and $m(t)\in\mathbb{R}^{2n}$ depending continuously on
$t\in I_{T}$ can be rewritten in the form (\ref{hamzoter}) (or
(\ref{hamzoterbis})).
\end{proposition}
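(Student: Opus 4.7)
My approach is to prove part (i) by writing $f_t = S_t T(z_t)$ as a composition of two Hamiltonian flows and applying Proposition \ref{propham}. By Proposition \ref{propabcd}, the symplectic isotopy $(S_t)$ is the flow of the quadratic Hamiltonian $H_S(z,t) = -\tfrac{1}{2} J\dot{S}_t S_t^{-1} z^2$. The translation isotopy $(T(z_t))$ is the flow of the linear Hamiltonian $H_T(z,t) = \sigma(z,\dot z_t)$, which one checks directly from Hamilton's equations $\dot z = J\partial_z H_T = \dot z_t$ (or alternatively from formula (\ref{hzt}) in Proposition \ref{thm1}, noting that $\dot f_t f_t^{-1}(\lambda z) = \dot z_t$ is independent of $\lambda$). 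The composition rule (\ref{ch1}) then gives, for $f_t = S_t T(z_t)$,
\[
H(z,t) = H_S(z,t) + H_T\bigl(S_t^{-1} z,t\bigr) = -\tfrac{1}{2} J\dot{S}_t S_t^{-1} z^2 + \sigma(S_t^{-1} z, \dot z_t),
\]
and the symplectic invariance $\sigma(S_t^{-1}z,\dot z_t)=\sigma(z,S_t\dot z_t)$ delivers (\ref{hamzoter}). For $g_t = T(z_t) S_t$, the same rule applied in the opposite order, using $T(z_t)^{-1}(z) = z - z_t$, yields (\ref{hamzoterbis}).

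For part (ii), I would run this correspondence backwards: given $H(z,t) = \tfrac{1}{2} M(t) z^2 + m(t) z$ with $M(t)^T = M(t)$, I construct $(S_t)$ and $(z_t)$ matching the quadratic and linear coefficients separately. Matching the quadratic coefficient amounts to $-J\dot S_t S_t^{-1} = M(t)$, i.e.\ the linear ODE
\[
\dot S_t = JM(t)\, S_t,\qquad S_0 = I_{\mathrm{d}}.
\]
Because $M(t)$ is symmetric, the product $J(JM(t)) = -M(t)$ is symmetric, so $JM(t)$ lies in $\mathfrak{sp}(n)$ and the unique solution $(S_t)$ stays in $\operatorname*{Sp}(n)$ throughout $I_T$. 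With $(S_t)$ fixed, matching the linear coefficient reduces to solving $\sigma(z, S_t \dot z_t) = m(t)\, z$ for all $z$; nondegeneracy of $\sigma$ determines $S_t \dot z_t$ uniquely, hence $\dot z_t$, and integration from $z_0 = 0$ produces the curve. Substituting back into (\ref{hamzoter}) recovers $H$.

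The only subtlety requiring care is the bookkeeping in the dictionary between symmetric matrices and quadratic Hamiltonians, and between covectors and linear Hamiltonians — both resting on non-degeneracy of $\sigma$ and the identification of $\mathfrak{sp}(n)$ with the space of symmetric matrices via $M \mapsto JM$. There is no genuine obstacle: part (i) is a direct application of the composition law of Proposition \ref{propham} to a two-factor decomposition, and part (ii) inverts it by solving one linear symplectic ODE followed by one quadrature.
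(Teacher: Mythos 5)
Your proof is correct and follows essentially the same route as the paper: part~(i) applies the composition rule~(\ref{ch1}) to the factorization $f_t = S_t T(z_t)$ (resp.\ $g_t = T(z_t)S_t$) together with the symplectic invariance $\sigma(S_t^{-1}z,\dot z_t)=\sigma(z,S_t\dot z_t)$, and part~(ii) inverts the correspondence by solving $\dot S_t = JM(t)S_t$ and then $\dot z_t = S_t^{-1}Jm(t)$, exactly as in the paper (the only cosmetic difference being that the paper verifies $f_t = S_tT(z_t)$ satisfies Hamilton's equations directly, while you match coefficients in~(\ref{hamzoter}); your observation that $JM(t)\in\mathfrak{sp}(n)$ is a useful clarification the paper leaves implicit).
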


\begin{proof}
\textit{(i) }Formula (\ref{hamzoter}) follows from formula (\ref{hamzobis})
using the product formula (\ref{ch1}) for Hamiltonian flows with $H=-\frac
{1}{2}J\dot{S}_{t}S_{t}^{-1}z^{2}$ and $K=\sigma\left(  z,\dot{z}_{t}\right)
$, and noticing that $\sigma\left(  S_{t}^{-1}z,\dot{z}_{t}\right)
=\sigma\left(  z,S_{t}\dot{z}_{t}\right)  $. Formula (\ref{hamzoterbis}) is
proven likewise swapping $H$ and $K$. \textit{(ii) }Let $(S_{t})$ be the flow
determined by the homogeneous part $H_{0}(z,t)=\frac{1}{2}M(t)z^{2}$ of
$H(z,t)$. We have $\dot{S}_{t}=JM(t)S_{t}$ hence $H_{0}(z,t)=-\frac{1}{2}%
J\dot{S}_{t}S_{t}^{-1}z^{2}$. Set now%
\begin{equation}
z_{t}=\int_{0}^{t}S_{t^{\prime}}^{-1}Jm(t^{\prime})dt^{\prime}, \label{fthm}%
\end{equation}
that is $\dot{z}_{t}=S_{t}^{-1}Jm(t)$; the Hamilton equations for (\ref{hzm})
are%
\[
\dot{z}(t)=JM(t)z(t)+Jm(t)=\dot{S}_{t}S_{t}^{-1}z(t)+S_{t}\dot{z}_{t}%
\]
and are solved by $z(t)=S_{t}T(z_{t})z(0)$; it follows that the flow $(f_{t})$
determined by $H$ is given by $f_{t}=S_{t}T(z_{t})$ and we can thus rewrite
$H$ as (\ref{hamzoter}).
\end{proof}

Assume for instance that the coefficients $M$ and $m$ are time-independent and
$\det M\neq0$; the solution of Hamilton's equations
\[
\dot{z}(t)=JMz(t)+Jm\text{ \ , \ }z(0)=z_{0}%
\]
are given by
\begin{equation}
z(t)=e^{tJM}z_{0}+(JM)^{-1}(e^{tJM}-I)Jm. \label{burdet}%
\end{equation}
(If $M$ fails to be invertible, this formula remains formally correct,
expanding $e^{tJM}$ in a Taylor series \cite{burdet}.)

\section{Symplectic Capacities}

We denote by $B^{2n}(z_{0},R)$ the ball $|z-z_{0}|\leq R$ in $\mathbb{R}^{2n}%
$; we write $B^{2n}(0,R)=B^{2n}(R)$. Denoting by $T(z_{0})$ the translation
$z\longmapsto z+z_{0}$ we have $B^{2n}(z_{0},R)=T(z_{0})B^{2n}(R)$.

\subsection{Definition of a symplectic capacity}

It is well-known that Hamiltonian flows are volume preserving
\cite{Arnold,Polter}; this property is an easy consequence of the fact that
any Hamiltonian flow consists of symplectomorphisms and hence preserves the
successive powers $\sigma$, $\sigma\wedge\sigma$,...,$\sigma^{\wedge n}$ of
the symplectic form. This property is however not characteristic of
Hamiltonian flows, because any flow generated by a divergence-free vector
fields has this property. It however turns out that there exist quantities
whose preservation is characteristic of Hamiltonian flows (and, more
generally, of symplectomorphisms). These are the symplectic capacities of
subsets of phase space.

A (normalized, or intrinsic) symplectic capacity on $(\mathbb{R}^{2n},\sigma)$
assigns to every $\Omega\subset\mathbb{R}^{2n}$ a number $c(\Omega)\geq0$, or
$+\infty$, and must satisfy the following axioms \cite{HZ}:

\begin{description}
\item[(SC1)] \textbf{Monotonicity}: \textit{If} $\Omega\subset\Omega^{\prime}$
\textit{then} $c(\Omega)\leq c(\Omega^{\prime})$;

\item[(SC2)] \textbf{Symplectic invariance}: \textit{If} $f\in
\operatorname*{Symp}(n)$\textit{ then} $c(f(\Omega))=c(\Omega)$;

\item[(SC3)] \textbf{Conformality}: \textit{If} $\lambda\in\mathbb{R}$
\textit{then} $c(\lambda\Omega)=\lambda^{2}c(\Omega)$;

\item[(SC4)] \textbf{Non-triviality}: \textit{We have }$c(B^{2n}(R))=\pi
R^{2}=c(Z_{j}^{2n}(R))$ \textit{where} $Z_{j}^{2n}(R)$ \textit{is the
cylinder} $x_{j}^{2}+p_{j}^{2}\leq R^{2}$ in $\mathbb{R}^{2n}$.
\end{description}

The archetypical example of a symplectic capacity is the \textquotedblleft
Gromov width\textquotedblright\ \cite{HZ}\ defined by
\begin{equation}
c_{\min}(\Omega)=\sup_{f\in\operatorname*{Symp}(n)}\{\pi R^{2}:f(B^{2n}%
(R))\subset\Omega\}; \label{cmin}%
\end{equation}
that $c_{\min}$ indeed satisfies axiom (SC4) is a non-trivial property,
equivalent to Gromov's symplectic non-squeezing theorem \cite{Gromov}: if
there exists $f\in\operatorname*{Symp}(n)$ such that $f(B^{2n}(R))\subset
Z_{j}^{2n}(r)$ then $R\leq r$ (see \cite{Birk,FOOP,goluPR} for discussions of
Gromov's result). As the notation suggests, $c_{\min}$ is the smallest of all
symplectic capacities: $c_{\min}\leq c\leq c_{\max}$ where%
\begin{equation}
c_{\max}(\Omega)=\inf_{f\in\operatorname*{Symp}(n)}\{\pi R^{2}:f(\Omega
)\subset Z_{j}^{2n}(R)\} \label{cmax}%
\end{equation}
so that we have $c_{\min}\leq c\leq c_{\max}$ for every symplectic capacity
$c$ on $(\mathbb{R}^{2n},\sigma)$.

Let $c$ be a symplectic capacity on the phase plane $\mathbb{R}^{2}$. Then
$c(\Omega)=\operatorname*{Area}(\Omega)$ for every connected and simply
connected surface $\Omega$. In higher dimensions the symplectic capacity can
be finite while the volume is infinite: for instance the symplectic capacity
of a cylinder $Z_{j}^{2n}(R)$ is finite, whereas its volume is infinite. It
follows in fact from the monotonicity and non-triviality properties of a
symplectic capacity that
\[
B^{2n}(R)\subset\Omega\subset Z_{j}^{2n}(R)\mathit{\ }\Longrightarrow
c(\Omega)=\pi R^{2}%
\]
so $\Omega$ can have arbitrarily large volume (even infinite). Symplectic
capacities are not related to volume when $n>1$; for instance if $\Omega$ and
$\Omega^{\prime}$ are disjoint we do not in general have $c(\Omega\cup
\Omega^{\prime}\mathcal{)}=c(\Omega)+c(\Omega^{\prime})$.

There exist infinitely many symplectic capacities, but they all agree on phase
space ellipsoids. The symplectic capacity of an ellipsoid%
\[
\Omega=\{z:M(z-z_{0})^{2}\leq R^{2}\}
\]
($M=M^{T}>0$) is calculated as follows. Recall Williamson's symplectic
diagonalization theorem \cite{Birk,HZ}: for every positive-definite symmetric
real $2n\times2n$ matrix $M$ there exists $S\in\operatorname*{Sp}(n)$ such
that
\begin{equation}
S^{T}MS=%
\begin{pmatrix}
\Lambda & 0\\
0 & \Lambda
\end{pmatrix}
\text{ } \label{will5}%
\end{equation}
where\ $\Lambda=\operatorname*{diag}(\lambda_{1}^{\sigma},...,\lambda
_{j}^{\sigma})$ is the diagonal matrix whose diagonal entries $\lambda
_{j}^{\sigma}$ are the symplectic eigenvalues of $M$: $\lambda_{j}^{\sigma}>0$
and the numbers $\pm i\lambda_{j}$ are the eigenvalues of $JM$ (that these
eigenvalues are indeed of the type $\pm i\lambda_{j}$ follows from the fact
that $JM$ has the same eigenvalues as the antisymmetric matrix $M^{1/2}%
JM^{1/2}$). This allows us to put the equation of the ellipsoid $\Omega$ in
the diagonal form%
\[
\sum_{j}\lambda_{j}^{\sigma}((x_{j}-x_{0,j})^{2}+(p_{j}-p_{0,j})^{2})\leq
R^{2}%
\]
and it is then easy to see \cite{Birk,Birkbis,goluPR}, using Gromov's
non-squeezing theorem that
\begin{equation}
c_{\min}(\Omega)=c_{\max}(\Omega)=\pi R^{2}/\lambda_{\max} \label{cw}%
\end{equation}
where $\lambda_{\max}$ is the largest symplectic eigenvalue of $M$. It follows
that $c(\Omega)=\pi R^{2}/\lambda_{\max}$ for every symplectic capacity $c$.

\subsection{A continuity property}

Let $\Omega$ and $\Omega^{\prime}$ be two nonempty compact subsets of
$\mathbb{R}^{2n}$. The numbers%
\[
d_{1}(\Omega,\Omega^{\prime})=\sup_{z\in\Omega}d(z,\Omega^{\prime})\text{ \ ,
\ }d_{2}(\Omega,\Omega^{\prime})=\sup_{z^{\prime}\in\Omega^{\prime}%
}d(z^{\prime},\Omega)\text{ }%
\]
are called, respectively, the directed Hausdorff distance from $\Omega$ to
$\Omega^{\prime}$ and from $\Omega^{\prime}$ to $\Omega$. The number
\begin{equation}
d_{\mathrm{H}}(\Omega,\Omega^{\prime})=\max(d_{1}(\Omega,\Omega^{\prime
}),d_{2}(\Omega,\Omega^{\prime})) \label{Haus1}%
\end{equation}
is called the \textit{Hausdorff distance} of $\Omega$ and $\Omega^{\prime}$.
(In some texts one defines this distance as the sum $d_{1}(\Omega
,\Omega^{\prime})+d_{2}(\Omega,\Omega^{\prime})$; both choices of course lead
to the same topology since the metrics are equivalent). The Hausdorff distance
is a metric on the set $\mathcal{K}(2n)$ of all nonempty compact subsets of
$\mathbb{R}^{2n}$ and $(\mathcal{K}(2n),d_{\mathrm{H}})$ is a complete metric
space \cite{federer}. We have $d(\Omega,\Omega^{\prime})<\infty$ since
$\Omega$ and $\Omega^{\prime}$ are bounded. If $0\in\Omega$, for every
$\varepsilon>0$ there exists $\delta>0$ such that \cite{abbobis}%
\begin{equation}
d_{\mathrm{H}}(\Omega,\Omega^{\prime})<\delta\Longrightarrow(1-\varepsilon
)\Omega\subset\Omega^{\prime}\subset(1+\varepsilon)\Omega. \label{Haus2}%
\end{equation}

Using (\ref{Haus2}) one shows that

\begin{proposition}
\label{PropHaus}Let $c$ be a symplectic capacity on $(\mathbb{R}^{2n},\sigma
)$. The restriction of $c$ to the set of all convex compact subsets equipped
with the Hausdorff distance is continuous. That is, for every $\varepsilon>0$
there exists $\delta>0$ such that if $\Omega$ and $\Omega^{\prime}$ are convex
and compact then
\begin{equation}
d_{H}(\Omega,\Omega^{\prime})<\delta\Longrightarrow|c(\Omega)-c(\Omega
^{\prime})|<\varepsilon\label{ccont}%
\end{equation}

\end{proposition}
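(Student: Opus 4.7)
The plan is to deduce (\ref{ccont}) directly from the capacity axioms, using the geometric squeeze (\ref{Haus2}) to sandwich $\Omega'$ between two dilations of $\Omega$. Since every translation is a symplectomorphism and the Hausdorff distance is translation invariant, axiom (SC2) lets me replace $(\Omega,\Omega')$ by $(\Omega - z_0, \Omega' - z_0)$ for any chosen $z_0$; I would pick $z_0$ to be an interior point of $\Omega$, so that after translation one has $0$ in the interior of $\Omega$ and (\ref{Haus2}) is available.

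Given a parameter $\eta>0$ to be chosen later, (\ref{Haus2}) supplies $\delta_\eta > 0$ such that $d_H(\Omega,\Omega') < \delta_\eta$ forces the two-sided inclusion $(1-\eta)\Omega \subset \Omega' \subset (1+\eta)\Omega$. Monotonicity (SC1) combined with conformality (SC3) then yields
\[
(1-\eta)^2\, c(\Omega) \;\le\; c(\Omega') \;\le\; (1+\eta)^2\, c(\Omega),
\]
whence $|c(\Omega') - c(\Omega)| \le \bigl((1+\eta)^2 - (1-\eta)^2\bigr)\,c(\Omega) = 4\eta\, c(\Omega)$. Since $\Omega$ is compact, $\Omega \subset B^{2n}(R)$ for some $R$, and (SC1) together with (SC4) gives $c(\Omega) \le \pi R^2 < \infty$. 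For any prescribed $\varepsilon > 0$ it then suffices to pick $\eta$ with $4\eta\, c(\Omega) < \varepsilon$ and set $\delta = \delta_\eta$.

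The main obstacle I expect is the degenerate case in which $\Omega$ has empty interior in $\mathbb{R}^{2n}$, i.e.\ lies in a proper affine subspace. Then no translate places $0$ in the interior of $\Omega$ and the form of (\ref{Haus2}) cited above may fail (already the toy example $\Omega = \{0\}$ with $\Omega'$ a nearby singleton shows that the dilation $(1+\eta)\Omega$ cannot capture $\Omega'$). One handles this case separately: a compact convex set $\Omega$ lying in a proper affine subspace can, after a suitable symplectic change of frame, be enclosed in a cylinder $Z_j^{2n}(r)$ of arbitrarily small radius, so (SC1) and (SC4) give $c(\Omega) = 0$; any convex $\Omega'$ with $d_H(\Omega,\Omega') < \delta$ is contained in the $\delta$-enlargement of that enclosure, which again fits in a cylinder of arbitrarily small capacity as $\delta\to 0$. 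Convexity is essential throughout, both for the nested dilations $(1\pm\eta)\Omega$ to behave correctly and for the slab/cylinder enclosure of the degenerate case.
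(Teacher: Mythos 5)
The paper itself offers no argument here---it defers to McDuff--Salamon---so you are really reconstructing the standard proof, and your main line is exactly the right one: translate so $0$ lies in the interior of $\Omega$, sandwich $\Omega'$ between dilations via (\ref{Haus2}), and use (SC1) together with (SC3) to obtain $(1-\eta)^{2}c(\Omega)\le c(\Omega')\le(1+\eta)^{2}c(\Omega)$, then let $\eta\to 0$ using $c(\Omega)<\infty$ (which follows from (SC1), (SC4) and compactness). Two comments. First, you are tacitly (and correctly) strengthening the hypothesis of (\ref{Haus2}): as printed it requires only $0\in\Omega$, but the two-sided inclusion genuinely needs $0\in\operatorname*{int}\Omega$ (for $\Omega$ a point, or with $0$ on the boundary, the right-hand inclusion $\Omega'\subset(1+\eta)\Omega$ can fail), which is why you insert a translation and why the flat case must be treated separately. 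Second, the displayed quantifier order makes (\ref{ccont}) read like uniform continuity, which is false already for balls $B^{2n}(R)$ of unbounded radius; your $\delta=\delta_{\eta}(\Omega)$ depends on $\Omega$, and this pointwise reading is the intended one. Your handling of the degenerate case $\operatorname*{int}\Omega=\emptyset$ is sound but would need one explicit computation to be complete: after a linear symplectic change of frame and a translation one may take the supporting hyperplane to be $\{p_{1}=0\}$, and then the rescaling $(x_{1},p_{1})\longmapsto(\lambda x_{1},\lambda^{-1}p_{1})$ carries the $\delta$-enlargement of $\Omega$ into $Z_{1}^{2n}(r)$ with $r^{2}$ of order $\delta\cdot\operatorname*{diam}(\Omega)$ after optimizing over $\lambda$; this shows both $c(\Omega)=0$ and $c(\Omega')\to0$ as $\delta\to0$. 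With that detail filled in, the proof is complete and is, as far as one can tell, the same dilation-sandwich argument the paper's reference has in mind.
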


\noindent(see \textit{e.g.} \cite{duff}, p.376).

The Hausdorff distance is used in pattern recognition and computer vision, and
also plays an essential role in medical imaging \cite{aziz}.

\subsection{Moving phase space ellipsoids}

We now give a fundamental characterization of symplectomorphisms initially due
to Ekeland and Hofer \cite{ekho1}.

Let $\Omega$ be a phase space ellipsoid as above; in \cite{ekho1,HZ} it is
proven that the only $C^{1}$ mappings that preserve the symplectic capacities
of all ellipsoids in $(\mathbb{R}^{2n},\sigma)$ are either symplectic or
antisymplectic (an antisymplectic mapping $F:\mathbb{R}^{2n}\longrightarrow
\mathbb{R}^{2n}$ is such that $F^{\ast}\sigma=-\sigma$; if $F$ is linear and
identified with its matrix this means that $F^{T}JF=-J$). In \cite{digopr14}
we have proven a refinement of this result in the linear case, by showing that
it is sufficient to consider a particular class of ellipsoids, called
symplectic balls. By definition, a symplectic ball is the image of a phase
space ball by an element of the inhomogeneous symplectic group
$\operatorname*{ISp}(n)$. It is thus an ellipsoid
\begin{equation}
B_{S}^{2n}(z_{0},R)=T(z_{0})SB^{2n}(R) \label{sball1}%
\end{equation}
or, equivalently
\begin{equation}
B_{S}^{2n}(z_{0},R)=ST(S^{-1}z_{0})B^{2n}(R). \label{sball2}%
\end{equation}
As follows from axioms (SC2) and (SC4) characterizing symplectic capacities, a
symplectic ball has symplectic capacity
\[
c(B_{S}^{2n}(z_{0},R))=c(B^{2n}(R))=\pi R^{2}.
\]

The proof of our refinement relies on the following algebraic result:

\begin{lemma}
\label{LemmaSymp}Let $F\in GL(2n,\mathbb{R})$. If $F^{T}M_{L}F\in
\operatorname*{Sp}(n)$ for every symplectic matrix
\begin{equation}
M_{L}=%
\begin{pmatrix}
L^{-1} & 0\\
0 & L
\end{pmatrix}
\text{ \ , \ }L=L^{T}>0 \label{mlf}%
\end{equation}
then $F$ is either symplectic or antisymplectic: $F^{T}JF=\pm J$.
\end{lemma}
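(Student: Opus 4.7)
The plan is to use the symmetry of $M_{L}$ together with the symplectic hypothesis to force $K:=FJF^{T}$ to be a scalar multiple of $J$, and then to pin down the scalar by two independent computations of $F^{-T}JF^{-1}$.

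Since $M_{L}^{T}=M_{L}$ when $L=L^{T}$, the matrix $G(L):=F^{T}M_{L}F$ is symmetric. The hypothesis $G(L)\in\operatorname*{Sp}(n)$ combined with symmetry gives $G(L)JG(L)=J$, which expands to $F^{T}M_{L}KM_{L}F=J$. Since $F$ is invertible, this is equivalent to
\[
M_{L}KM_{L}=F^{-T}JF^{-1}
\]
for every $L=L^{T}>0$; in particular, the left-hand side is independent of $L$.

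Next I would write $K$ in $n\times n$ block form $K=\bigl(\begin{smallmatrix}K_{11}&K_{12}\\K_{21}&K_{22}\end{smallmatrix}\bigr)$ and compute
\[
M_{L}KM_{L}=\begin{pmatrix}L^{-1}K_{11}L^{-1}&L^{-1}K_{12}L\\LK_{21}L^{-1}&LK_{22}L\end{pmatrix}.
\]
Invariance under the scaling $L=tI$ ($t>0$) immediately forces $K_{11}=K_{22}=0$. The surviving off-diagonal condition $L^{-1}K_{12}L=K_{12}$ says that $K_{12}$ commutes with every positive-definite symmetric $L$; by a standard density argument (any symmetric matrix is a difference of two positive ones, and commuting with all symmetrics forces $K_{12}$ to be diagonal with equal entries) this yields $K_{12}=\lambda I$ for some $\lambda\in\mathbb{R}$, and analogously $K_{21}=\mu I$. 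Antisymmetry $K^{T}=-K$ then gives $\mu=-\lambda$, so $K=\lambda J$, i.e.\ $FJF^{T}=\lambda J$.

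The final step is to show $\lambda=\pm1$. Setting $L=I$ in the displayed identity gives $K=F^{-T}JF^{-1}$, hence $F^{-T}JF^{-1}=\lambda J$. On the other hand, taking the inverse of both sides of $FJF^{T}=\lambda J$ (using $J^{-1}=-J$) yields $F^{-T}JF^{-1}=\lambda^{-1}J$. Comparing these two evaluations gives $\lambda^{2}=1$, so $\lambda=\pm1$. The $L=I$ instance of the hypothesis also says that $F^{T}F$ is symplectic, i.e.\ $F^{T}(FJF^{T})F=J$; substituting $FJF^{T}=\lambda J$ gives $\lambda F^{T}JF=J$, whence $F^{T}JF=\lambda^{-1}J=\pm J$, as required. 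The $+$ case is $F\in\operatorname*{Sp}(n)$ and the $-$ case is $F$ antisymplectic.

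The main obstacle is the last paragraph: the block-matrix analysis only pins down $FJF^{T}$ up to an arbitrary nonzero scalar, and plugging in a single $L$ does not rule out $|\lambda|\neq 1$. The trick is to extract $F^{-T}JF^{-1}$ via two genuinely different routes---the $L=I$ specialization of the boxed identity on one hand, and the direct inversion of $FJF^{T}=\lambda J$ on the other---and to exploit the fact that these two evaluations differ by a factor of $\lambda$ versus $\lambda^{-1}$.
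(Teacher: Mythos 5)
Your proof is correct, and it is worth noting that the paper does not actually reproduce a proof of this lemma: it states that the argument is ``rather long and technical'' and refers to an external reference (Dias--de Gosson--Prata, \emph{Proc.\ Amer.\ Math.\ Soc.}\ 142, 2014, \S 1.2). So there is no internal proof to compare against; what you have written is a self-contained and quite streamlined argument that one would not describe as long or technical, which suggests your route is genuinely simpler than the one the paper cites.

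A few remarks on the proof itself, all of which check out. The observation that $F^{T}M_{L}F$ is simultaneously symmetric and symplectic, hence satisfies $GJG=J$, is the right entry point; it converts the hypothesis into the $L$-independence of $M_{L}KM_{L}$ with $K=FJF^{T}$. The block computation and the scaling $L=tI$ correctly kill $K_{11}$ and $K_{22}$, and the commutation argument for $K_{12}$ is sound: commuting with every $L=L^{T}>0$ forces commuting with every symmetric matrix (write a symmetric $A$ as $(A+cI)-cI$ with $c$ large), and commuting with every symmetric matrix forces a scalar multiple of the identity (test against $E_{jj}$ and $E_{jk}+E_{kj}$). Antisymmetry of $K=FJF^{T}$ then pins down $K=\lambda J$. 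The invertibility of $F$ (hence of $K$) guarantees $\lambda\neq 0$, so the inversion step is licit. Comparing the two evaluations of $F^{-T}JF^{-1}$ — from the $L=I$ specialization and from inverting $FJF^{T}=\lambda J$ — is an elegant way to get $\lambda^{2}=1$, and the final substitution into $F^{T}F\in\operatorname*{Sp}(n)$ gives $F^{T}JF=\pm J$. As an aside, you could instead finish by taking determinants: $\det(F^{T}F)=1$ because a symmetric positive-definite symplectic matrix has determinant one, and $\det(FJF^{T})=\lambda^{2n}$ forces $\lambda^{2n}=\det(F)^{2}=1$, hence $\lambda=\pm1$; but your two-evaluation trick is equally clean and avoids invoking that fact about symmetric symplectic matrices.
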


The proof of this lemma is rather long and technical, we therefore refer to
\cite{digopr14}, \S 1.2, for a detailed argument. The particular symplectic
matrices (\ref{mlf}) will play an essential role in Section \ref{seciwa} where
we study the pre-Iwasawa factorization of general symplectic matrices. Notice
that the matrices (\ref{mlf}) do not form a subgroup of $\operatorname*{Sp}%
(n)$: we have $M_{L}M_{L^{\prime}}=M_{L^{\prime}L}$ but in general $L^{\prime
}L$ is not symmetric if $L$ and $L^{\prime}$ are.

\begin{proposition}
\label{propfund}(i) Assume that $K\in GL(2n,\mathbb{R})$ takes the symplectic
ball $B_{S}^{2n}(z_{0},R)$ to a symplectic ball $B_{S^{\prime}}^{2n}%
(z_{0}^{\prime},R)$ with the same radius. Then, $K$ is either symplectic or
antisymplectic. (ii) More generally, if $K$ takes every ellipsoid in
$\mathbb{R}^{2n}$ to an ellipsoid with the same symplectic capacity, then $K$
is symplectic or antisymplectic.
\end{proposition}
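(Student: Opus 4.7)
My plan is to reduce both parts of the proposition to Lemma \ref{LemmaSymp}, which characterizes symplectic and antisymplectic matrices through the conjugation action on the family $\{M_{L}:L=L^{T}>0\}$ of symplectic matrices defined in (\ref{mlf}).

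For (i), I unpack the hypothesis. Linearity of $K$ together with $B_{S}^{2n}(z_{0},R)=T(z_{0})SB^{2n}(R)$ give $KB_{S}^{2n}(z_{0},R)=T(Kz_{0})KSB^{2n}(R)$, and matching this with $B_{S'}^{2n}(z_{0}',R)=T(z_{0}')S'B^{2n}(R)$ forces $z_{0}'=Kz_{0}$ together with $KSB^{2n}(R)=S'B^{2n}(R)$. An ellipsoid of the form $AB^{2n}(R)$ is determined by the symmetric matrix $AA^{T}$, so this last equality reads $K(SS^{T})K^{T}=S'(S')^{T}$. A short calculation using $S^{T}JS=J$ yields $(SS^{T})J(SS^{T})=S(S^{T}JS)S^{T}=SJS^{T}=J$, so $SS^{T}$ is itself symplectic (and similarly $S'(S')^{T}$). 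Choosing $S=M_{L^{1/2}}$ for an arbitrary $L=L^{T}>0$ gives $SS^{T}=M_{L}$, hence $KM_{L}K^{T}\in\operatorname*{Sp}(n)$ for every such $L$. Setting $F=K^{T}$ in Lemma \ref{LemmaSymp}, so that $F^{T}M_{L}F=KM_{L}K^{T}$, then concludes that $K^{T}$, and therefore $K$, is symplectic or antisymplectic.

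For (ii), the hypothesis is weaker, because by (\ref{cw}) the symplectic capacity of the ellipsoid $\{z:Mz^{2}\leq R^{2}\}$ depends only on the \emph{largest} symplectic eigenvalue $\lambda_{\max}^{\sigma}(M)$. Capacity preservation rewrites as $\lambda_{\max}^{\sigma}(K^{T}MK)=\lambda_{\max}^{\sigma}(M)$ for every symmetric positive-definite $M$. The main obstacle is that this only controls the top symplectic eigenvalue, whereas placing $K^{T}M_{L}K$ in $\operatorname*{Sp}(n)$ requires \emph{all} of its symplectic eigenvalues to be $1$. I would close this gap with a determinant argument. Setting $M=I$ gives $\lambda_{\max}^{\sigma}(K^{T}K)=1$, so every symplectic eigenvalue of $K^{T}K$ lies in $(0,1]$; Williamson's theorem (\ref{will5}) yields the identity $\sqrt{\det N}=\prod_{j}\lambda_{j}^{\sigma}(N)$ for any symmetric positive-definite $N$, whence $|\det K|\leq 1$. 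Since $K^{-1}$ inherits the hypothesis (if $\Omega'=K\Omega$ then $c(K^{-1}\Omega')=c(\Omega)=c(\Omega')$), the same argument applied to $K^{-1}$ gives $|\det K|\geq 1$, so $|\det K|=1$. Specializing now $M=M_{L}$ and using $\lambda_{\max}^{\sigma}(M_{L})=1$ (since $M_{L}$ is itself symplectic, all its symplectic eigenvalues are $1$), I obtain $\lambda_{\max}^{\sigma}(K^{T}M_{L}K)=1$ while $\det(K^{T}M_{L}K)=(\det K)^{2}\det M_{L}=1$. Hence the symplectic eigenvalues of $K^{T}M_{L}K$ are each at most $1$ and their product equals $1$, forcing all to equal $1$: that is, $K^{T}M_{L}K\in\operatorname*{Sp}(n)$ for every $L=L^{T}>0$. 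Lemma \ref{LemmaSymp}, applied directly with $F=K$, completes the proof.
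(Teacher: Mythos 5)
Your proof of part (i) is correct and, at its core, the same as the paper's: you reduce to Lemma~\ref{LemmaSymp}, using the observation that a centered symplectic ball $SB^{2n}(R)$ is encoded by the symplectic positive-definite matrix $SS^{T}$, and that $M_{L}=SS^{T}$ for $S=M_{L^{1/2}}$. The only cosmetic difference is your choice $F=K^{T}$ where the paper takes $F=K^{-1}$ (the paper parametrizes the ellipsoid by $(KS)(KS)^{T}$ being replaced by its inverse $((KS)^{-1})^{T}(KS)^{-1}$ and plugs $(SS^{T})^{-1}=M_{L}$); since symplectic/antisymplectic is stable under transpose and inverse, these are interchangeable. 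For part (ii), however, the paper's proof block contains no argument at all -- only (i) is proved, and (ii) is effectively cited as the linear case of Ekeland--Hofer -- so your determinant argument is genuine added content. It is correct: you use the Williamson identity $\prod_{j}\lambda_{j}^{\sigma}(N)=\sqrt{\det N}$ (which follows from (\ref{will5}) because $\det S=1$), the fact that the hypothesis passes to $K^{-1}$ to pin down $|\det K|=1$, and that $M_{L}\in\operatorname*{Sp}(n)$ has all symplectic eigenvalues equal to $1$, to upgrade the preservation of the \emph{top} symplectic eigenvalue of $K^{T}M_{L}K$ to the statement that \emph{all} of them equal $1$, i.e.\ $K^{T}M_{L}K\in\operatorname*{Sp}(n)$, at which point Lemma~\ref{LemmaSymp} closes the argument. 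This is a clean self-contained proof of the linear Ekeland--Hofer rigidity that the paper leaves implicit, and it buys you independence from the external reference \cite{ekho1} for the linear case.
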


\begin{proof}
\textit{(i)} Since translations are symplectomorphisms we can assume $z_{0}=0$
so that the symplectic ball is just $SB^{2n}(R)$ and thus defined by the
inequality $|S^{-1}z|\leq R$. It follows that its image by $K$ is the set of
all $z\in\mathbb{R}^{2n}$ such that $|(KS)^{-1}z|\leq R$ that is
\[
((KS)^{-1})^{T}(KS)^{-1}z^{2}\leq R^{2}.
\]
If $KSB^{2n}(R)$ is a symplectic ball we must thus have
\[
((KS)^{-1})^{T}(KS)^{-1}=(K^{T})^{-1}(SS^{T})^{-1}K^{-1}\in\operatorname*{Sp}%
(n).
\]
Taking $F=K^{-1}$ then in view of Lemma \ref{LemmaSymp} the matrix $F$ and
hence $K$ must be either symplectic or antisymplectic.
\end{proof}

In the nonlinear case we have (\cite{ekho1}, Thm. 4):

\begin{proposition}
\label{propek1}Let $f:\mathbb{R}^{2n}\longrightarrow\mathbb{R}^{2n}$ be a
$C^{1}$ diffeomorphism such that $c(f(\Omega))=c(\Omega)$ for every ellipsoid
$\Omega\subset\mathbb{R}^{2n}$. Then either $f^{\ast}\sigma=\sigma$ or
$f^{\ast}\sigma=-\sigma$, that is $f$ is either a symplectomorphism or an anti-symplectomorphism.
\end{proposition}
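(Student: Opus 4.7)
The plan is to linearize the problem at each point and then invoke the linear classification already established in Proposition \ref{propfund}. Fix $z_{0}\in\mathbb{R}^{2n}$, set $K=df(z_{0})$ (which is invertible since $f$ is a diffeomorphism), and let $\Omega\subset\mathbb{R}^{2n}$ be an arbitrary ellipsoid centered at the origin. For small $\varepsilon>0$ consider the shrunken, translated ellipsoid $\Omega_{\varepsilon}=z_{0}+\varepsilon\Omega$. By the symplectic invariance of $c$ under translations, $c(\Omega_{\varepsilon})=c(\varepsilon\Omega)=\varepsilon^{2}c(\Omega)$ thanks to the conformality axiom (SC3). Likewise $c(f(z_{0})+\varepsilon K\Omega)=\varepsilon^{2}c(K\Omega)$. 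My goal is to show $c(K\Omega)=c(\Omega)$; once this is established for every ellipsoid $\Omega$, Proposition \ref{propfund}(ii) forces $K$ to be either symplectic or antisymplectic.

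The core analytic step is to compare $f(\Omega_{\varepsilon})$ with $f(z_{0})+\varepsilon K\Omega$. Writing $f(z_{0}+\varepsilon w)=f(z_{0})+\varepsilon Kw+\varepsilon\rho_{\varepsilon}(w)$ where $\rho_{\varepsilon}(w)\to 0$ uniformly on the compact set $\Omega$ as $\varepsilon\to 0$ (a standard consequence of $f\in C^{1}$ and compactness), one obtains the Hausdorff estimate
\[
d_{\mathrm{H}}\bigl(\varepsilon^{-1}(f(\Omega_{\varepsilon})-f(z_{0})),\,K\Omega\bigr)\;\longrightarrow\;0\qquad(\varepsilon\to 0^{+}).
\]
Both sets involved are convex compact (for $\varepsilon$ small, $f(\Omega_{\varepsilon})$ is the smooth image of an ellipsoid, hence at least close in Hausdorff distance to the convex set $f(z_{0})+\varepsilon K\Omega$; passing to the closed convex hull if necessary preserves the capacity by monotonicity in the linear limit). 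Invoking the continuity of $c$ on convex compact sets (Proposition \ref{PropHaus}), together with the hypothesis $c(f(\Omega_{\varepsilon}))=c(\Omega_{\varepsilon})$, translation invariance, and conformality, yields $c(K\Omega)=c(\Omega)$ in the limit.

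Applying this for every ellipsoid $\Omega$ and every base point $z_{0}$, Proposition \ref{propfund}(ii) gives: for each $z\in\mathbb{R}^{2n}$, either $df(z)^{T}Jdf(z)=J$ or $df(z)^{T}Jdf(z)=-J$. To finish I would upgrade this pointwise dichotomy to a global one. The map $z\mapsto df(z)^{T}Jdf(z)$ is continuous and takes values in the discrete two-point set $\{J,-J\}$; since $\mathbb{R}^{2n}$ is connected, it must be constant. Hence either $f^{\ast}\sigma=\sigma$ on all of $\mathbb{R}^{2n}$ or $f^{\ast}\sigma=-\sigma$ on all of $\mathbb{R}^{2n}$.

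The main obstacle is the Hausdorff approximation and limit step: one must justify carefully that the difference between $f(\Omega_{\varepsilon})$ and its first-order approximation is of order $o(\varepsilon)$ uniformly in the relevant sense, and that this $o(\varepsilon)$ error translates, after rescaling by $\varepsilon^{-1}$ and applying the continuity of $c$, into a negligible discrepancy between $c(f(\Omega_{\varepsilon}))$ and $\varepsilon^{2}c(K\Omega)$. Everything else (conformality, connectedness argument, invocation of the linear classification) is essentially bookkeeping once this infinitesimal rigidity step is in place.
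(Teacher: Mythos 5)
The paper itself does not prove Proposition~\ref{propek1}; it is quoted from Ekeland--Hofer (\cite{ekho1}, Thm.~4), so there is no in-paper proof to compare against. Your strategy --- linearize at each point, show the differential preserves capacities of linear images of ellipsoids, invoke the linear rigidity statement (Proposition~\ref{propfund}(ii)), and globalize by connectedness --- is the standard route, and the final connectedness step is correctly done.

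There is, however, a genuine gap in the analytic core. You invoke Proposition~\ref{PropHaus} (continuity of $c$ in the Hausdorff metric) to pass to the limit, but that proposition is stated only for convex compact sets, and $f(\Omega_{\varepsilon})$ is the image of an ellipsoid under a \emph{nonlinear} map, hence has no reason to be convex. Your parenthetical fix --- ``passing to the closed convex hull if necessary preserves the capacity by monotonicity'' --- only yields the one-sided estimate $c(f(\Omega_{\varepsilon}))\leq c(\operatorname{conv} f(\Omega_{\varepsilon}))$; it does not give equality in the limit and so does not close the argument. A cleaner way around this, still within the tools of the paper, is to avoid the Hausdorff-continuity proposition altogether and use only monotonicity plus conformality: since $f$ is a $C^{1}$ diffeomorphism and $\Omega$ is compact, for every $\delta>0$ there is $\varepsilon_{0}>0$ such that for $0<\varepsilon<\varepsilon_{0}$ one has the two-sided inclusion
\[
f(z_{0})+(1-\delta)\,\varepsilon K\Omega \;\subset\; f(\Omega_{\varepsilon}) \;\subset\; f(z_{0})+(1+\delta)\,\varepsilon K\Omega .
\]
The outer inclusion is the uniform estimate you already state; the inner one needs an extra word, because the image of a set under a nonlinear map can fail to contain a prescribed convex body --- here one uses that $f$ is a homeomorphism, so $f(\Omega_{\varepsilon})$ is the open region bounded by $f(\partial\Omega_{\varepsilon})$ (Jordan--Brouwer / invariance of domain), and since $f(\partial\Omega_{\varepsilon})$ lies in a shrinking tubular neighbourhood of $\partial\bigl(f(z_{0})+\varepsilon K\Omega\bigr)$ the enclosed region must contain the slightly shrunk ellipsoid. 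Monotonicity and conformality then give
\[
(1-\delta)^{2}\varepsilon^{2}c(K\Omega)\;\leq\; c(f(\Omega_{\varepsilon}))=\varepsilon^{2}c(\Omega)\;\leq\;(1+\delta)^{2}\varepsilon^{2}c(K\Omega),
\]
and letting $\delta\to 0$ yields $c(K\Omega)=c(\Omega)$. After that, your appeal to Proposition~\ref{propfund}(ii) and the connectedness argument go through as written. So the plan is sound, but the Hausdorff-continuity step should be replaced (or supplemented) by the explicit inclusion sandwich and a justification of the inner inclusion.
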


In \cite{ekho1} (Thm. 5) Ekeland and Hofer prove the following nonlinear
version of Proposition \ref{propek1} using a mild differentiability requirement:

\begin{proposition}
\label{thmchalk}(i) Let $(f_{t})_{t\in I_{T}}$ be a family of $C^{1}$
diffeomorphisms $\mathbb{R}^{2n}\longrightarrow\mathbb{R}^{2n}$ such that
$f_{0}=I_{\mathrm{d}}$ and $c(f_{t}(\Omega))=c(\Omega)$ for every ellipsoid
$\Omega\subset\mathbb{R}^{2n}$ and $t\in I_{T}$. Then $(f_{t})$ is a
Hamiltonian flow. (ii) If the $f_{t}$ are affine mappings then $H$ is a
quadratic polynomial of the type (\ref{hzm}).
\end{proposition}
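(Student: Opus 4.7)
The plan is to combine the Ekeland--Hofer dichotomy (Proposition \ref{propek1}) with a continuity argument in $t$ to produce a symplectic isotopy, and then invoke the construction of Proposition \ref{thm1} to realize it as a Hamiltonian flow; part (ii) will then follow by direct application of Proposition \ref{thm2}.

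First, I would apply Proposition \ref{propek1} at each fixed $t \in I_T$: since $f_t$ is a $C^1$ diffeomorphism preserving the symplectic capacity of every ellipsoid in $\mathbb{R}^{2n}$, one has either $f_t^{\ast}\sigma = \sigma$ or $f_t^{\ast}\sigma = -\sigma$. The normalization $f_0 = I_{\mathrm{d}}$ gives $f_0^{\ast}\sigma = +\sigma$. Assuming the regularity in $t$ implicit in the statement (namely that $t \mapsto f_t$ is continuous in a topology making $t \mapsto Df_t(z)$ continuous on compact sets), the $2$-form $f_t^{\ast}\sigma$ depends continuously on $t$, so the sign cannot flip and $f_t^{\ast}\sigma = \sigma$ for every $t \in I_T$. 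Thus $(f_t)$ is a smooth isotopy in $\operatorname{Symp}(n)$ through the identity.

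Second, I would produce the generating Hamiltonian directly. Set $X_t(z) = \dot{f}_t(f_t^{-1}(z))$. Differentiating $f_t^{\ast}\sigma = \sigma$ in $t$ gives $\mathcal{L}_{X_t}\sigma = 0$, and Cartan's formula together with $d\sigma = 0$ reduces this to $d(\iota_{X_t}\sigma) = 0$. Because $\mathbb{R}^{2n}$ is simply connected, the Poincaré lemma produces an antiderivative; executing the standard radial homotopy reproduces exactly formula (\ref{hzt}) of Proposition \ref{thm1},
\[
H(z,t) = -\int_0^1 \sigma\!\left(\dot{f}_t f_t^{-1}(\lambda z),\, z\right) d\lambda.
\]
By construction $X_t = X_H$, so $(f_t)$ coincides with the Hamiltonian flow $(f_t^H)$, settling (i).

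For (ii), if each $f_t$ is affine then by (i) it is symplectic, hence of the form $f_t(z) = S_t z + b_t$ with $S_t \in \operatorname{Sp}(n)$ and $b_t \in \mathbb{R}^{2n}$; writing $f_t = T(b_t)\,S_t$, Proposition \ref{thm2}(i) applied with $z_t = b_t$ immediately produces the Hamiltonian in the shape (\ref{hamzoterbis}), which is of the form (\ref{hzm}) with symmetric $M(t) = -J\dot{S}_t S_t^{-1}$ and linear part built from $\dot{b}_t$ via the $\sigma(z,\cdot)$ term. The main obstacle is the continuity step: one must rule out that the pointwise-in-$t$ Ekeland--Hofer dichotomy oscillates between the symplectic and anti-symplectic branches, which is precisely why joint continuity of $(t,z) \mapsto Df_t(z)$ is the natural (and here implicit) regularity hypothesis.
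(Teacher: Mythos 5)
Your proof is correct and takes essentially the same route as the paper: apply the Ekeland--Hofer dichotomy (Proposition \ref{propek1}) pointwise in $t$, use continuity of $t\mapsto Df_t(z)$ together with $f_0=I_{\mathrm{d}}$ to exclude the anti-symplectic branch, then invoke Proposition \ref{thm1} (and Proposition \ref{thm2} for the affine case). The only cosmetic differences are that you unpack the content of Proposition \ref{thm1} via Cartan's formula and the radial Poincar\'e homotopy instead of simply citing it, and that you make explicit the regularity-in-$t$ hypothesis the paper leaves implicit -- both harmless.
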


\begin{proof}
\textit{(i)} Let $z\in\mathbb{R}^{2n}$; we have
\[
(Df_{t}(z))^{T}JDf_{t}(z)=\pm J
\]
for all $t\in I_{T}$. Since $Df_{0}(z)=z$ and the mapping $t\longmapsto
Df_{t}(z)$ is continuous the only possible choice is%
\[
(Df_{t}(z))^{T}JDf_{t}(z)=J
\]
and hence the $f_{t}$ are symplectomorphisms; the conclusion now follows from
Proposition \ref{thm1}; the Hamiltonian \ is given by
\begin{equation}
H(z,t)=-\int_{0}^{1}\sigma(\dot{f}_{t}f_{t}^{-1}(\lambda z),z)d\lambda.
\end{equation}
\textit{(ii)} Immediately follows using Proposition \ref{thm2}.
\end{proof}

We urge the Reader to note that the assumption that symplectic capacities of
ellipsoids -- and not volumes -- are preserved is essential. As soon as $n>1$
the conclusions of Proposition \ref{thmchalk} for families of mappings which
are volume preserving are no longer true. Consider a divergence-free vector
field on $\mathbb{R}^{2n}$; by Liouville's theorem \cite{Arnold} the flow it
generates is certainly volume-preserving, but has no reason in general to
preserve symplectic capacities. The properties above are of a topological
nature, and show that general volume-preserving mappings cannot be
approximated in the $C^{0}$ topology by symplectomorphisms. In this context we
remark \cite{RMP} that Katok \cite{Katok} has shown that given two subsets
$\Omega$ and $\Omega^{\prime}$ with the same volume, then for every
$\varepsilon>0$ there exists $f\in\operatorname*{Symp}(n)$ such that
$\operatorname*{Vol}(f(\Omega)\setminus\Omega^{\prime})<\varepsilon$. Thus, an
arbitrarily large part of $\Omega$ can be symplectically embedded inside
$\Omega^{\prime}$ -- but not all of it! This again shows how different the
notions of volume conservation and symplectic capacity conservation are. 

\section{Symplectic Actions on Ellipsoids}

We introduce here the notion of local symplectic automorphisms; the
terminology will be justified in Section \ref{secmet} where we will show that
these automorphisms are the projections on $\operatorname*{Sp}(n)$ of local
metaplectic operators, \textit{i.e.} those which preserve the supports of
functions or tempered distributions.

\subsection{The local symplectic group\label{seclocal}}

Let $\ell$ be a Lagrangian subspace of the symplectic phase space
$(\mathbb{R}^{2n},\sigma)$: $\dim\ell=n$ and $\sigma$ vanishes identically on
$\ell$. Such a maximal isotropic subspace is also called a \textquotedblleft
Lagrangian plane\textquotedblright. One proves (\cite{Birk}, \S 2.2) that for
every $S\in\operatorname*{Sp}(n)$ there exist $S_{1},S_{2}\in
\operatorname*{Sp}(n)$ such that $S_{1}\ell\cap\ell=S_{2}\ell\cap\ell=0$ and
$S=S_{1}S_{2}$. Choosing for $\ell$ the momentum space $0\times\mathbb{R}^{n}$
the symplectic matrices $S_{1},S_{2}$ are of the type
\begin{equation}
S_{j}=%
\begin{pmatrix}
A_{j} & B_{j}\\
C_{j} & D_{j}%
\end{pmatrix}
\text{ \ , }\det B_{j}\neq0 \label{s12}%
\end{equation}
with $n\times n$ blocks. (Such symplectic matrices are called \textit{free.
}We will return to them in Section \ref{secmet}.) A straightforward
calculation leads to the factorization%
\begin{equation}
S_{j}=V_{-D_{j}B_{j}^{-1}}M_{B_{j}^{-1}}JV_{-B_{j}^{-1}A_{j}} \label{sdba}%
\end{equation}
where we define, for $P=P^{T}$ and $\det L\neq0$,
\begin{equation}
V_{-P}=%
\begin{pmatrix}
I & 0\\
P & I
\end{pmatrix}
\text{ , }M_{L}=%
\begin{pmatrix}
L^{-1} & 0\\
0 & L^{T}%
\end{pmatrix}
\label{vpml}%
\end{equation}
($DB^{-1}$ and $B^{-1}A$ are indeed symplectic due to the constraints imposed
on the blocks $A,B,C,D$ by the conditions $S^{T}JS=SJS^{T}=J$). It follows
that the set of all matrices $V_{-P}$ and $M_{L}$ together with the standard
symplectic matrix $J$ generate $\operatorname*{Sp}(n)$. Notice that these
matrices obey the product formulas%
\begin{equation}
V_{-P}V_{-P^{\prime}}=V_{-(P+P^{\prime})}\text{ \ , \ }M_{L}M_{L^{\prime}%
}=M_{L^{\prime}L}. \label{vpmlprod}%
\end{equation}
Let $\operatorname*{St}(\ell)$ be the stabilizer of $\ell$ in
$\operatorname*{Sp}(n)$: it is the subgroup of all $S\in\operatorname*{Sp}(n)$
such that $S\ell=\ell$. Of special importance for us is the case $\ell
=0\times\mathbb{R}^{n}$, we will write $\operatorname*{Sp}_{0}%
(n)=\operatorname*{St}(0\times\mathbb{R}^{n})$. It consists of all symplectic
block matrices with upper corner $B=0$. Since%
\[
V_{-P}M_{L}=%
\begin{pmatrix}
L^{-1} & 0\\
PL^{-1} & L^{T}%
\end{pmatrix}
\text{ \ , \ }M_{L}V_{-P}=%
\begin{pmatrix}
L^{-1} & 0\\
L^{T}P & L^{T}%
\end{pmatrix}
\]
the group $\operatorname*{Sp}_{0}(n)$ is generated by the symplectic matrices
$V_{-P}$ and $M_{L}$. It is thus the extension of the group $\{M_{L}:\det
L\neq0\}$ of symplectic rescalings by the group of symplectic shears
$\{V_{-P}:P=P^{T}\}$. Using the obvious identities
\begin{equation}
M_{L}V_{-P}=V_{-L^{T}PL}M_{L}\text{ \ },\text{ \ }V_{-P}M_{L}=M_{L}%
V_{-(L^{-1})^{T}PL^{-1}} \label{mlvp}%
\end{equation}
and
\begin{equation}
(V_{-P}M_{L})^{-1}=V_{-(L^{-1})^{T}PL^{-1}}M_{L^{-1}} \label{mlpinv}%
\end{equation}
we see that the group $\operatorname*{Sp}_{0}(n)$ in fact simply consists of
all products $V_{-P}M_{L}$ (or $M_{L}V_{-P}$). In particular, if
$S=V_{-P}M_{L}$ and $S^{\prime}=V_{-P^{\prime}}M_{L^{\prime}}$ we have%
\begin{equation}
S^{\prime}S^{-1}=V_{-P^{\prime}+(L^{-1}L^{\prime})^{T}P(L^{-1}L^{\prime}%
)}M_{L^{-1}L^{\prime}}. \label{xinvy}%
\end{equation}

The affine (or inhomogeneous) extension \cite{burdet}
\begin{equation}
\operatorname*{ISp}(n)=\operatorname*{Sp}(n)\ltimes\mathbb{R}^{2n}
\label{ispn}%
\end{equation}
of the symplectic group consists of all products $ST(z)=T(Sz)S$ where $T(z)$
is the translation operator in $\mathbb{R}^{2n}$ . Every element of
$\operatorname*{ISp}(n)$ can be written as either a product $ST(z_{0})$, or a
product $T(z_{0})S$. For each element of $\operatorname*{ISp}(n)$ this
factorization is unique: for example if $ST(z_{0})=S^{\prime}T(z_{0}^{\prime
})$ then $(S^{\prime})^{-1}S=T(z_{0}^{\prime}-z_{0})$ which is only possible
if $z_{0}^{\prime}=z_{0}$ and hence $S^{\prime}=S$. We call the subgroup
\begin{equation}
\operatorname*{ISp}\nolimits_{0}(n)=\operatorname*{Sp}\nolimits_{0}%
(n)\ltimes\mathbb{R}^{2n} \label{ispon}%
\end{equation}
the \textquotedblleft local inhomogeneous symplectic group\textquotedblright.
It consists of all affine symplectic transformations of the type%
\begin{equation}
T(z_{0})V_{-P}M_{L}=V_{-P}M_{L}T(M_{L^{-1}}V_{P}z_{0}). \label{pr1}%
\end{equation}
The main formulas are recapitulated in the table below:

\begin{center}%
\begin{tabular}
[c]{|l|l|}\hline
$V_{-P}V_{-P^{\prime}}=V_{-(P+P^{\prime})}$ & $M_{L}M_{L^{\prime}%
}=M_{L^{\prime}L}$\\\hline
$M_{L}V_{-P}=V_{-L^{T}PL}M_{L}$ & $V_{-P}M_{L}=M_{L}V_{-(L^{-1})^{T}PL^{-1}}%
$\\\hline
$(V_{-P}M_{L})^{-1}=V_{-(L^{-1})^{T}PL^{-1}}M_{L^{-1}}$ & $T(z_{0})V_{-P}%
M_{L}=V_{-P}M_{L}T(M_{L^{-1}}V_{P}z_{0})$\\\hline
\end{tabular}

\end{center}

\subsection{The pre-Iwasawa factorization\label{seciwa}}

We have seen in formula (\ref{sdba}) that every symplectic matrix
\begin{equation}
S=%
\begin{pmatrix}
A & B\\
C & D
\end{pmatrix}
\label{block}%
\end{equation}
with $\det B\neq0$ can be factorized as $S=V_{-P}M_{L}JV_{-P}$. The
pre-Iwasawa factorization generalizes this result; it says that every
$S\in\operatorname*{Sp}(n)$ can be written as a product of an element of a
subgroup of the local symplectic group $\operatorname*{Sp}_{0}(n)$ and of a
symplectic rotation. More precisely, writing $S\in\operatorname*{Sp}(n)$ in
block-matrix form ($n\times n$ blocks) there exist unique matrices $P=P^{T}$
and $L=L^{T}>0$ and $U_{X,Y}\in U(n)$ such that%
\begin{equation}
S=%
\begin{pmatrix}
I & 0\\
P & I
\end{pmatrix}%
\begin{pmatrix}
L^{-1} & 0\\
0 & L
\end{pmatrix}%
\begin{pmatrix}
X & Y\\
-Y & X
\end{pmatrix}
=V_{-P}M_{L}U_{X,Y}. \label{preiwa}%
\end{equation}
These matrices are given by%
\begin{align}
P  &  =(CA^{T}+DB^{T})(AA^{T}+BB^{T})^{-1}=P^{T}\label{pl1}\\
L  &  =(AA^{T}+BB^{T})^{-1/2}=L^{T}>0\label{pl2}\\
X  &  =(AA^{T}+BB^{T})^{-1/2}A\text{ \ },Y=(AA^{T}+BB^{T})^{-1/2}B.
\label{unixy}%
\end{align}

The proof of these formulas is purely computational; see
\cite{bera,Dutta,Birk,kbw}. It is clear that $L=L^{T}$ is positive definite;
that $P$ is also symmetric follows from the fact that the relation
$S=V_{-P}M_{L}U$ implies that $V_{-P}$ is symplectic which requires that
$P=P^{T}$. The uniqueness follows from the observation that if $V_{-P}%
M_{L}U_{X,Y}=V_{-P^{\prime}}M_{L^{\prime}}U_{X^{\prime},Y^{\prime}}$ then
\[
M_{L^{\prime}}^{-1}V_{P-P^{\prime}}M_{L}=U_{X^{\prime},Y^{\prime}}U_{X,Y}%
^{-1}=U_{X^{\prime\prime},Y^{\prime\prime}}%
\]
and thus%
\[%
\begin{pmatrix}
L^{\prime}L^{-1} & 0\\
(L^{\prime})^{-1}(P-P^{\prime})L & (L^{\prime})^{-1}L
\end{pmatrix}
=%
\begin{pmatrix}
X^{\prime\prime} & Y^{\prime\prime}\\
-Y^{\prime\prime} & X^{\prime\prime}%
\end{pmatrix}
\]
hence $P=P^{\prime}$ and $L^{\prime}=L$ since $L,L^{\prime}>0$.

Notice that, as a particular case, any dilation $M_{K}=%
\begin{pmatrix}
K^{-1} & 0\\
0 & K^{T}%
\end{pmatrix}
$, $\det K\neq0$, has the pre-Iwasawa factorization%
\begin{equation}%
\begin{pmatrix}
(K^{T}K)^{-1/2} & 0\\
0 & (K^{T}K)^{1/2}%
\end{pmatrix}%
\begin{pmatrix}
(K^{T}K)^{1/2}K^{-1} & 0\\
0 & (K^{T}K)^{1/2}K^{-1}%
\end{pmatrix}
. \label{KL}%
\end{equation}

Summarizing:

\begin{center}%
\begin{tabular}
[c]{|l|}\hline
The symplectic matrix (\ref{block}) has a unique factorization\\\hline
$S=RU$ where $R\in\operatorname*{Sp}_{0}(n)$ and $U\in U(n)$ are given
by:$\medskip$\\\hline
$R=%
\begin{pmatrix}
(AA^{T}+BB^{T})^{1/2} & 0\\
(CA^{T}+DB^{T})(AA^{T}+BB^{T})^{-1/2} & (AA^{T}+BB^{T})^{-1/2}%
\end{pmatrix}
\medskip$\\\hline
$U=%
\begin{pmatrix}
(AA^{T}+BB^{T})^{-1/2}A\text{ } & (AA^{T}+BB^{T})^{-1/2}B\\
-(AA^{T}+BB^{T})^{-1/2}B & (AA^{T}+BB^{T})^{-1/2}A\text{ }%
\end{pmatrix}
\medskip$\\\hline
\end{tabular}

\end{center}

When writing $S=RU$ we will call $R$ and $U$ respectively the \textit{local}
and the \textit{unitary} components of $S$. They are uniquely defined.

\subsection{Iwasawa factorization of a quadratic Hamiltonian}

Let $(S_{t})$ be a symplectic isotopy and
\begin{equation}
H(z,t)=-\frac{1}{2}J\dot{S}_{t}S_{t}^{-1}z^{2} \label{hst}%
\end{equation}
the associated Hamiltonian. Writing
\[
S_{t}=%
\begin{pmatrix}
A_{t} & B_{t}\\
C_{t} & D_{t}%
\end{pmatrix}
\]
the pre-Iwasawa factorization yields $S_{t}=R_{t}U_{t}$ where $R_{t}%
=V_{-P_{t}}M_{L_{t}}$ is given by
\begin{equation}
R_{t}=%
\begin{pmatrix}
L_{t}^{-1} & 0\\
P_{t}L_{t}^{-1} & L_{t}%
\end{pmatrix}
=%
\begin{pmatrix}
L_{t}^{-1} & 0\\
Q_{t} & L_{t}%
\end{pmatrix}
, \label{rt}%
\end{equation}
the symmetric $n\times n$ matrices $P_{t}$ and $L_{t}$ being calculated using
the formulas (\ref{pl1}) and (\ref{pl2}):%
\begin{align}
P_{t}  &  =(C_{t}A_{t}^{T}+D_{t}B_{t}^{T})(A_{t}A_{t}^{T}+B_{t}B_{t}^{T}%
)^{-1}\label{pl1bis}\\
L_{t}  &  =(A_{t}A_{t}^{T}+B_{t}B_{t}^{T})^{-1/2}\label{pl2bis}\\
Q_{t}  &  =(C_{t}A_{t}^{T}+D_{t}B_{t}^{T})(A_{t}A_{t}^{T}+B_{t}B_{t}%
^{T})^{-1/2}. \label{pl3bis}%
\end{align}
Similarly the symplectic rotations
\begin{equation}
U_{t}=%
\begin{pmatrix}
X_{t} & Y_{t}\\
-Y_{t} & X_{t}%
\end{pmatrix}
\label{utxybis}%
\end{equation}
are given by
\begin{equation}
X_{t}=(A_{t}A_{t}^{T}+B_{t}B_{t}^{T})^{-1/2}A_{t}\ \ ,\text{\ }Y_{t}%
=(A_{t}A_{t}^{T}+B_{t}B_{t}^{T})^{-1/2}B_{t}. \label{xyab}%
\end{equation}
The families $(V_{-P_{t}})$, $(M_{L_{t}})$, and $(U_{t})$ are symplectic
isotopies in their own right; they correspond to Hamiltonians that we will
denote by $H_{V}$, $H_{L}$, $H_{U}$. It is easy to find explicit expressions
for these Hamiltonians, and to show that the Hamiltonian function (\ref{hst})
determining the symplectic isotopy $(S_{t})$ can be written as a sum of
Hamiltonians. This is what we call the \textquotedblleft Iwasawa
sum\textquotedblright:

\begin{proposition}
(i) The symplectic isotopies $(V_{-P_{t}})$, $(M_{L_{t}})$, and $(U_{t})$ are
the flows determined by the Hamiltonians:
\begin{equation}
H_{V}(z,t)=\frac{1}{2}\dot{P}_{t}x^{2}\text{ \ },\text{ \ }H_{L}(z,t)=-\dot
{L}_{t}L_{t}^{-1}x\cdot p\label{hvlu}%
\end{equation}
and
\begin{equation}
H_{U}(z,t)=\tfrac{1}{2}(\dot{Y}_{t}X_{t}^{T}-\dot{X}_{t}Y_{t}^{T})x^{2}%
+\tfrac{1}{2}(\dot{Y}_{t}X_{t}^{T}-\dot{X}_{t}Y_{t}^{T})p^{2}.\label{HU}%
\end{equation}
(ii) The Hamiltonian function $H$ can be written%
\begin{equation}
H(z,t)=H_{V}(z,t)+H_{L}(V_{P_{t}}z,t)+H_{U}(M_{L_{t}^{-1}}V_{P_{t}%
}z,t)\label{iwasum}%
\end{equation}
(iii) We also have
\begin{equation}
H(z,t)=H_{R}(z,t)+H_{U}(R_{t}^{-1}z,t)\label{hru}%
\end{equation}
where
\begin{equation}
H_{R}(z,t)=\frac{1}{2}(\dot{L}_{t}Q_{t}^{T}-\dot{Q}_{t}L_{t})x^{2}-\dot{L}%
_{t}L_{t}^{-1}p\cdot x.\label{hoho}%
\end{equation}

\end{proposition}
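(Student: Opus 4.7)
The plan is to establish parts (i), (ii), and (iii) in that order, with (i) being a direct specialization of Proposition \ref{propabcd} (and Corollary \ref{coro3}) to the three building blocks of the pre-Iwasawa decomposition, and (ii)--(iii) following from the Hamiltonian composition law (\ref{ch1}) applied to the factorizations $S_t=V_{-P_t}M_{L_t}U_t$ and $S_t=R_tU_t$.

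For (i), I would feed each of the three isotopies into formula (\ref{hamabcd}). For $(V_{-P_t})$ the block data is $A_t=I,\;B_t=0,\;C_t=P_t,\;D_t=I$, so every term of (\ref{hamabcd}) vanishes except the $x^2$-coefficient, which up to the sign convention of $V_{-P}$ produces $H_V$. For $(M_{L_t})$ the data is $A_t=L_t^{-1}$, $D_t=L_t^T=L_t$, $B_t=C_t=0$; the quadratic terms drop out and, after using the symmetry $L_t=L_t^T$ to turn $\dot L_tL_t^{-T}$ into $\dot L_tL_t^{-1}$, one obtains $H_L=-\dot L_tL_t^{-1}x\cdot p$. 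For the symplectic rotations $(U_t)$, the statement is packaged as Corollary \ref{coro3}, which delivers (\ref{HU}) at once with $Z_t=\dot Y_tX_t^T-\dot X_tY_t^T$.

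For (ii), I would note that $(V_{-P_t})$, $(M_{L_t})$, $(U_t)$ are smooth one-parameter families through the identity, hence individually generated by the Hamiltonians of part (i). Iterated application of (\ref{ch1}) -- legitimate thanks to the associativity (\ref{hkl}) of the $\#$-product -- together with the elementary inversion identities $V_{-P}^{-1}=V_P$ and $M_L^{-1}=M_{L^{-1}}$ yields
\begin{equation*}
H=H_V\#H_L\#H_U=H_V(z,t)+H_L(V_{P_t}z,t)+H_U(M_{L_t^{-1}}V_{P_t}z,t),
\end{equation*}
which is exactly (\ref{iwasum}). Part (iii) is the two-factor analogue of the same argument: I would first compute $H_R$ by substituting the block data $A_t=L_t^{-1}$, $B_t=0$, $C_t=Q_t$, $D_t=L_t$ of $R_t$ into (\ref{hamabcd}), obtaining (\ref{hoho}) once one uses $L_t=L_t^T$ to simplify the transposes and observes that only the symmetric part of $\dot L_tQ_t^T-\dot Q_tL_t$ contributes to the quadratic form $x\cdot x$; then a single application of (\ref{ch1}) with $(f_t^{H_R})^{-1}=R_t^{-1}$ produces (\ref{hru}).

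The main obstacle is bookkeeping rather than substance. One must keep careful track of the convention that $V_{-P}$ has lower-left block equal to $+P$, exploit the symmetry $L_t=L_t^T$ systematically to collapse transposes, and check that the cross terms in $\dot U_tU_t^{-1}$ indeed cancel to produce the equal $x^2$- and $p^2$-coefficients in $H_U$; this last cancellation is exactly the content of the unitarity relations (\ref{xxyy1})--(\ref{xxyy2}) after differentiation in $t$, as already exploited in the proof of Corollary \ref{coro3}. Once these clerical verifications are in hand, both decompositions (\ref{iwasum}) and (\ref{hru}) follow from the composition law without any further calculation.
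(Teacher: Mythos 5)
Your proposal is correct and follows essentially the same route as the paper: for part (i) the paper substitutes $(V_{-P_t})$, $(M_{L_t})$ directly into $H=-\tfrac12 J\dot S_t S_t^{-1}z^2$ and invokes Corollary~\ref{coro3} for $H_U$, while you expand the same computation through the explicit block formula (\ref{hamabcd}); these are identical in content. For parts (ii) and (iii) both you and the paper invoke the composition law (\ref{ch1}) together with its associativity (\ref{hkl}) applied to $S_t=V_{-P_t}M_{L_t}U_t$ and $S_t=R_tU_t$, and compute $H_R$ from the block data of $R_t$ in (\ref{rtbis}); no genuinely different ideas are used. One small remark: the sign issue you flag with ``up to the sign convention of $V_{-P}$'' is real -- with $V_{-P_t}$ having lower-left block $+P_t$, substituting $A=I,B=0,C=P_t,D=I$ into (\ref{hamabcd}) yields $H_V=-\tfrac12\dot P_t x^2$, not $+\tfrac12\dot P_t x^2$ as written in (\ref{hvlu}); this appears to be a typo in the statement itself (analogous sign slips occur in the introductory formula (\ref{HR})), not a defect in your argument. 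Also, you need not appeal to ``only the symmetric part of $\dot L_tQ_t^T-\dot Q_tL_t$ contributes'': that matrix is already symmetric, since $Q_tL_t=L_tQ_t^T=P_t$, which is what guarantees (\ref{hoho}) is a well-posed quadratic Hamiltonian.
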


\begin{proof}
\textit{(i)} The formulas (\ref{hvlu}) immediately follow from (\ref{hst})
replacing $(S_{t})$ with $(V_{-P_{t}})$ and $(M_{L_{t}})$, respectively.
Notice that $\dot{L}_{t}L_{t}^{-1}$ is symmetric since $L_{t}>0$ is: $\dot
{L}_{t}L_{t}^{-1}=\frac{d}{dt}\operatorname{Log}L_{t}$. Formula (\ref{HU}) is
obtained by writing
\[
H_{U}(z,t)=-\frac{1}{2}J\dot{U}_{t}U_{t}^{-1}z^{2}%
\]
(see Corollary \ref{coro3}). \textit{(ii)} In view of formula (\ref{hkl}) in
Proposition \ref{propham} $(S_{t})=(V_{-P_{t}}M_{L_{t}}U_{t})$ is the flow
determined by $H=H_{V}\#H_{L}\#H_{U}$ hence (\ref{iwasum}). \textit{(iii)}
Formula (\ref{hru}) is obtained in a similar fashion writing $(S_{t}%
)=(R_{t}U_{t})$ and using the equality $H=H_{R}\#H_{U}$. Formula (\ref{hoho})
follows from the equality $H_{R}=-\frac{1}{2}J\dot{R}_{t}R_{t}^{-1}$ and using
formula (\ref{rtbis}).
\end{proof}

Notice that formula (\ref{hoho}) can be rewritten%
\begin{equation}
H_{R}=\frac{1}{2}N(t)x^{2}-\dot{L}_{t}L_{t}^{-1}p\cdot x \label{honk}%
\end{equation}
where the symmetric matrix $N(t)$ is given by
\begin{equation}
N(t)=\dot{L}_{t}L_{t}^{-1}P_{t}+P_{t}\dot{L}_{t}L_{t}^{-1}-\dot{P}_{t}.
\label{klp}%
\end{equation}
Notice that, conversely, every Hamiltonian of the type%
\[
H_{0}=\frac{1}{2}N(t)x^{2}-K(t)p\cdot x
\]
with $N(t)=N(t)^{T}$ and $K(t)=K(t)^{T}$ leads to a flow in
$\operatorname*{Sp}_{0}(n)$: the corresponding Hamilton equations are%
\begin{align*}
\dot{x}(t)  &  =-K(t)x(t)\\
\ \dot{p}(t)  &  =-N(t)x(t)+K(t)p(t)
\end{align*}
and the corresponding symplectic isotopy is of the type (\ref{rt}) since the
equation $\dot{x}(t)=-K(t)x(t)$ contains no term $p(t)$.

\section{Chalkboard Motions and their Shadows}

We now apply the notions developed in the previous sections to what we call
\textquotedblleft chalkboard motion\textquotedblright\ in phase space, and
thereafter study the orthogonal projections (or \textquotedblleft
shadows\textquotedblright) of these motions on subspaces. Chalkboard motion
essentially consists in moving and distorting an ellipsoid in phase space
while preserving its symplectic capacity (or area in the case $n=1$).

\subsection{The action of $\operatorname*{ISp}\nolimits_{0}(n)$ on symplectic
balls}

From now on the real number $\varepsilon>0$ has the vocation to be a small
radius. For instance, in our applications to quantum mechanics we will choose
$\varepsilon=\sqrt{\hbar}$; the need for \textquotedblleft
smallness\textquotedblright\ in classical considerations will actually only be
necessary in Section \ref{secnearby} where we study nonlinear evolution. We
call \textit{symplectic ball} the image of a ball $B^{2n}(\varepsilon)$ by
some element of $\operatorname*{ISp}(n)$. A symplectic ball can always be
written
\[
B_{S}^{2n}(z_{0},\varepsilon)=T(z_{0})SB^{2n}(\varepsilon)
\]
for some $S\in\operatorname*{Sp}(n)$ and $z_{0}$ will be called the center of
$B_{S}^{2n}(z_{0},\varepsilon)$. Let $\operatorname*{SBall}_{\varepsilon}(2n)$
be the set of all symplectic balls in $(\mathbb{R}^{2n},\varepsilon)$ with the
same radius $\varepsilon$ (equivalently, with the same symplectic capacity
$\pi\varepsilon^{2}$); we have a natural transitive action
\[
\operatorname*{ISp}(n)\times\operatorname*{SBall}\nolimits_{\varepsilon
}(2n)\longrightarrow\operatorname*{SBall}\nolimits_{\varepsilon}(2n).
\]
It turns out that the restriction
\[
\operatorname*{ISp}\nolimits_{0}(n)\times\operatorname*{SBall}%
\nolimits_{\varepsilon}(2n)\longrightarrow\operatorname*{SBall}%
\nolimits_{\varepsilon}(2n)
\]
of this action to the local inhomogeneous symplectic group
$\operatorname*{ISp}\nolimits_{0}(n)$ is also transitive:

\begin{proposition}
\label{propsplp}(i) Every symplectic ball $B_{S}^{2n}(z_{0},\varepsilon)$ can
be obtained from the ball $B^{2n}(\varepsilon)$ using the local subgroup
$\operatorname*{ISp}_{0}(n)$ of $\operatorname*{ISp}(n)$. In fact, for every
$S\in\operatorname*{Sp}(n)$ there exist unique $P=P^{T}$, $L=L^{T}$, and
$z_{0}\in\mathbb{R}^{2n}$ such that \
\begin{equation}
B_{S}^{2n}(z_{0},\varepsilon)=T(z_{0})V_{P}M_{L}B^{2n}(\varepsilon).
\label{bs}%
\end{equation}
(ii) More generally, if $S=V_{P}M_{L}$ and $S^{\prime}=V_{P^{\prime}%
}M_{L^{\prime}}$ then%
\begin{equation}
B_{S^{\prime}}^{2n}(z_{0}^{\prime},\varepsilon)=S(P,L,P^{\prime},L^{\prime
},z_{0},z_{0}^{\prime})B_{S}^{2n}(z_{0},\varepsilon) \label{A10}%
\end{equation}
with $S(P,L,P^{\prime},L^{\prime},z_{0},z_{0}^{\prime})\in\operatorname*{ISp}%
_{0}(n)$ given by%
\begin{equation}
S(P,L,P^{\prime},L^{\prime},z_{0},z_{0}^{\prime})=T(z_{0}^{\prime}-Rz_{0})R
\label{B10}%
\end{equation}
where $R\in\operatorname*{ISp}_{0}(n)$ is the product
\begin{equation}
R=V_{P^{\prime}-(L^{\prime}L^{-1})^{T}PL^{-1}L^{\prime}}M_{L^{-1}L^{\prime}}.
\label{C11}%
\end{equation}

\end{proposition}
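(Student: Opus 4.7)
For part (i), the key ingredients are the pre-Iwasawa factorization (\ref{preiwa}) and the fact that symplectic rotations preserve origin-centered balls. Given $S \in \operatorname{Sp}(n)$, I would factor $S = V_{-P} M_L U$ with $P = P^T$, $L = L^T > 0$, and $U \in U(n)$. Since $U$ is orthogonal, $U B^{2n}(\varepsilon) = B^{2n}(\varepsilon)$, so $SB^{2n}(\varepsilon) = V_{-P} M_L B^{2n}(\varepsilon)$; applying $T(z_0)$ yields (\ref{bs}), the $V_P$ in the statement matching $V_{-P}$ once the sign is absorbed into $P$. For uniqueness, $z_0$ is forced as the geometric center of the ball. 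If two such representations $V_{-P} M_L B^{2n}(\varepsilon) = V_{-P'} M_{L'} B^{2n}(\varepsilon)$ with $L, L' > 0$ agree, then $(V_{-P'} M_{L'})^{-1} V_{-P} M_L$ stabilizes $B^{2n}(\varepsilon)$ and therefore lies in $U(n)$, since $U(n) = \operatorname{Sp}(n) \cap O(2n)$ is the stabilizer of the unit ball in $\operatorname{Sp}(n)$; comparing with the pre-Iwasawa decomposition, whose $P, L$ components are unique, gives $P = P'$ and $L = L'$.

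For part (ii), the approach is a direct construction followed by an algebraic verification. I would simply set $g = T(z_0' - Rz_0)\, R$ with $R$ defined by (\ref{C11}). Using the conjugation rule $R T(z_0) = T(Rz_0) R$, the desired identity $g \cdot T(z_0) V_P M_L B^{2n}(\varepsilon) = T(z_0') V_{P'} M_{L'} B^{2n}(\varepsilon)$ reduces, after peeling off the outer translations, to the stronger matrix equality $R V_P M_L = V_{P'} M_{L'}$. To establish this, I would apply in turn the two product rules collected in the table of Section \ref{seclocal}: first $M_K V_Q = V_{K^T Q K} M_K$ with $K = L^{-1} L'$ to commute $M_{L^{-1}L'}$ past $V_P$, which produces precisely the correction term hidden inside the middle factor of $R$ and cancels it via $V_Q V_{Q'} = V_{Q+Q'}$, leaving $V_{P'}$; the remaining $M$-factors then combine as $M_{L^{-1}L'} M_L = M_{L \cdot L^{-1}L'} = M_{L'}$, using $M_K M_{K'} = M_{K'K}$.

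The main obstacle is purely organisational: keeping sign conventions straight between $V_P$ and $V_{-P}$ in the various formulas, and respecting the reversed multiplication order $M_K M_{K'} = M_{K'K}$ when manipulating the dilations. There is no conceptual or topological difficulty beyond the pre-Iwasawa factorization, the identification of $U(n)$ as the stabilizer of $B^{2n}(\varepsilon)$, and the product identities already established for $\operatorname{ISp}_0(n)$.
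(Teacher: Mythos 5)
Your proof follows the same route as the paper. For part (i) you invoke the pre-Iwasawa factorization together with the fact that $U(n)=\operatorname{Sp}(n)\cap O(2n)$ is the stabilizer of $B^{2n}(\varepsilon)$, and your uniqueness argument (peeling off translations, then concluding that the would-be rotation factor is trivial) is the paper's argument verbatim. For part (ii) you verify $R\,V_PM_L=V_{P'}M_{L'}$ rather than computing $S'S^{-1}$ from scratch; these are the same manipulation run in opposite directions, both resting on the product identities (\ref{vpmlprod}) and (\ref{mlvp}).

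One point to be careful about if you push the algebra through: the commutation $M_{L^{-1}L'}V_P=V_{(L^{-1}L')^{T}P(L^{-1}L')}M_{L^{-1}L'}$ produces the term $(L^{-1}L')^{T}P(L^{-1}L')=L'L^{-1}PL^{-1}L'$, which is what formula (\ref{xinvy}) gives after the sign substitution $P\mapsto -P$, $P'\mapsto -P'$. This does \emph{not} literally match (\ref{C11}) as printed, where the transpose is placed on $L'L^{-1}$ rather than on $L^{-1}L'$: the printed expression $(L'L^{-1})^{T}PL^{-1}L'=L^{-1}L'PL^{-1}L'$ is not even symmetric when $L$ and $L'$ fail to commute, so it cannot be a valid subscript for a $V$-matrix. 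Use the form coming from (\ref{xinvy}); with that reading your cancellation closes exactly as you describe, and the $M$-factors combine via $M_{L^{-1}L'}M_{L}=M_{L\cdot L^{-1}L'}=M_{L'}$ as you noted.
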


\begin{proof}
\textit{(i)} Using a pre-Iwasawa\ factorization we can find unique $P=P^{T}$,
$L=L^{T}>0$ given by (\ref{pl1}), (\ref{pl2}), and a symplectic rotation
$U_{X,Y}\in U(n)$ such that $S=V_{P}M_{L}U_{X,Y}$; formula (\ref{bs}) follows
since by rotational symmetry, $U_{X,Y}B^{2n}(\varepsilon)=B^{2n}(\varepsilon)$
and thus%
\begin{align*}
Q_{S}^{2n}(z_{0})  &  =T(z_{0})V_{P}M_{L}U_{X,Y}B^{2n}(\varepsilon)\\
&  =T(z_{0})V_{P}M_{L}B^{2n}(\varepsilon).
\end{align*}
The uniqueness of a transformation $T(z_{0})V_{P}M_{L}\in\operatorname*{ISp}%
_{0}(n)$ such that (\ref{bs}) holds is easily verified: suppose that
\[
T(z_{0})V_{P}M_{L}B^{2n}(\varepsilon)=T(z_{0}^{\prime})V_{P^{\prime}%
}M_{L^{\prime}}B^{2n}(\varepsilon)
\]
then there exists $U\in U(n)$ such that
\[
T(z_{0})V_{P}M_{L}U=T(z_{0}^{\prime})V_{P^{\prime}}M_{L^{\prime}}.
\]
This implies that we must have $z_{0}=z_{0}^{\prime}$, applying both sides to
$z=0$. But then $V_{P}M_{L}U=V_{P^{\prime}}M_{L^{\prime}}$ which is only
possible if $U$ is the identity, so we have $V_{P}M_{L}=V_{P^{\prime}%
}M_{L^{\prime}}$ which implies $P=P^{\prime}$ and $L=L^{\prime}$.
\textit{(ii)} Let $B_{S}^{2n}(\varepsilon)$ $=SB^{2n}(\varepsilon)$ and
$B_{S^{\prime}}^{2n}(\varepsilon)$ $=S^{\prime}B^{2n}(\varepsilon)$ be two
symplectic balls centered at $0$; we thus have $B_{S^{\prime}}^{2n}%
(\varepsilon)=S^{\prime}S^{-1}B_{S}^{2n}(\varepsilon)$. Taking $S=V_{P}M_{L}$
and $S^{\prime}=V_{P^{\prime}}M_{L^{\prime}}$ we have, in view of formula
(\ref{xinvy}),
\[
S^{\prime}S^{-1}=V_{P^{\prime}-(L^{\prime}L^{-1})^{T}PL^{-1}L^{\prime}%
}M_{L^{-1}L^{\prime}}%
\]
proving (\ref{A10}) for $z_{0}=z_{0}^{\prime}=0$. The case of arbitrary
centers $z_{0}$ and $z_{0}^{\prime}$ $\frac{{}}{{}}$readily follows: assume
that $B_{S}^{2n}(z_{0},\varepsilon)$ and $B_{S^{\prime}}^{2n}(z_{0}^{\prime
},\varepsilon)$ are centered at $z_{0}$ and $z_{0}^{\prime}$, respectively. We
have $B_{S}^{2n}(z_{0},\varepsilon)=T(z_{0})SB^{2n}(\varepsilon)$ hence
\begin{align*}
B_{S^{\prime}}^{2n}(z_{0}^{\prime},\varepsilon)  &  =T(z_{0}^{\prime
})S^{\prime}(T(z_{0})S)^{-1}B_{S}^{2n}(z_{0},\varepsilon)\\
&  =T(z_{0}^{\prime}-S^{\prime}S^{-1}z_{0})S^{\prime}S^{-1}B_{S}^{2n}%
(z_{0},\varepsilon).
\end{align*}
Choosing $S=V_{P}M_{L}$ and $S^{\prime}=V_{P^{\prime}}M_{L^{\prime}}$ as above
we are done.
\end{proof}

\subsection{Linear and affine chalkboard motions\label{secaff}}

It will be convenient to write (\ref{rt}) as above in the form%
\begin{equation}
R_{t}=%
\begin{pmatrix}
L_{t}^{-1} & 0\\
Q_{t} & L_{t}%
\end{pmatrix}
\text{ \ \textit{with} \ }Q_{t}=P_{t}L_{t}^{-1}. \label{rtbis}%
\end{equation}

Recall that the symplectic isotopy $(R_{t})$ is determined by a Hamiltonian
which does not contain any kinetic term.

Let $(T(z_{t})S_{t})$ be a chalkboard motion; then%
\[
T(z_{t})S_{t}B^{2n}(\varepsilon)=T(z_{t})R_{t}B^{2n}(\varepsilon)
\]
which shows that this motion is entirely determined by the trajectory of the
center of $B^{2n}(\varepsilon)$ and the flow determined by a reduced
Hamiltonian of the type (\ref{hoho})--(\ref{honk}). This immediately follows
from the Iwasawa factorization $S_{t}=R_{t}U_{t}$ since balls centered at the
origin are invariant under the action of the group $U(n)$.

More generally:

\begin{proposition}
\label{propchalk1}Let $B_{S}^{2n}(a,\varepsilon)=T(a)SB^{2n}(\varepsilon)$ be
a symplectic ball. We have
\begin{equation}
(T(z_{t})S_{t})B_{S}^{2n}(a,\varepsilon)=B_{R_{t}S}^{2n}(a_{t},\varepsilon)
\label{omegat1}%
\end{equation}
where $a_{t}=S_{t}a$ and $(R_{t})$ is a symplectic isotopy in
$\operatorname*{Sp}_{0}(n)$ defined as follows: let $H$ be the quadratic
Hamiltonian generating $(S_{t})$, that is $H=-\frac{1}{2}J\dot{S}_{t}%
S_{t}^{-1}z^{2}$. Then $(R_{t})$ is the reduced flow determined by $H\circ R$
where $R$ is the local part in the pre-Iwasawa factorization $S=RU$.
\end{proposition}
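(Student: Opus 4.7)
The plan is to unfold the action of $T(z_{t})S_{t}$ on $B_{S}^{2n}(a,\varepsilon)=T(a)SB^{2n}(\varepsilon)$ in stages, using three tools: the translation law $S_{t}T(a)=T(S_{t}a)S_{t}$, the pre-Iwasawa factorization combined with the $U(n)$-invariance of $B^{2n}(\varepsilon)$, and the conjugation identity (\ref{conj}) for Hamiltonian flows.

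First I would push $T(a)$ through $S_{t}$ and then through $T(z_{t})$ to obtain
\[
(T(z_{t})S_{t})(T(a)SB^{2n}(\varepsilon))=T(a_{t})\,S_{t}S\,B^{2n}(\varepsilon),
\]
with $a_{t}=z_{t}+S_{t}a$ (the statement's $a_{t}=S_{t}a$ tacitly absorbs $z_{t}$). Next I would invoke the pre-Iwasawa factorization $S=RU$ with $U\in U(n)$; since $UB^{2n}(\varepsilon)=B^{2n}(\varepsilon)$ by rotational invariance, the right-hand side reduces to $T(a_{t})S_{t}RB^{2n}(\varepsilon)$.

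The remaining ingredient is the conjugation property (\ref{conj}). Since $(S_{t})=(f_{t}^{H})$, taking $g=R$ gives $R^{-1}S_{t}R=f_{t}^{H\circ R}$, hence $S_{t}R=Rf_{t}^{H\circ R}$. Applying the pre-Iwasawa factorization to the conjugated isotopy, $f_{t}^{H\circ R}=\bar{R}_{t}\bar{U}_{t}$ with $\bar{R}_{t}$ in the local factor of $\operatorname{Sp}_{0}(n)$ and $\bar{U}_{t}\in U(n)$, and using $U(n)$-invariance once more, the action on $B^{2n}(\varepsilon)$ further reduces to $R\bar{R}_{t}B^{2n}(\varepsilon)$. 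By definition the isotopy $(\bar{R}_{t})$ is the reduced flow determined by $H\circ R$; identifying it with the $(R_{t})$ of the statement, we have $R_{t}S\,B^{2n}(\varepsilon)=R_{t}R\,B^{2n}(\varepsilon)=R\bar{R}_{t}B^{2n}(\varepsilon)$, and the claimed identity $(T(z_{t})S_{t})B_{S}^{2n}(a,\varepsilon)=B_{R_{t}S}^{2n}(a_{t},\varepsilon)$ follows. Conservation of the radius $\varepsilon$ is automatic, because each step preserves the symplectic capacity, consistent with Proposition \ref{thmchalk}.

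The main technical subtlety is that the pre-Iwasawa factorization is not multiplicative: the local part of a product is not the product of local parts. The reductions must therefore be performed in a specific order — first peeling off the unitary factor $U$ from $S$, and only after conjugating by $R$ applying pre-Iwasawa a second time to $f_{t}^{H\circ R}$ to extract the reduced flow $(R_{t})$. That the resulting $(R_{t})$ is a symplectic isotopy in $\operatorname{Sp}_{0}(n)$ generated by a Hamiltonian of the reduced form (\ref{honk})--(\ref{klp}) follows directly from the explicit pre-Iwasawa formulas (\ref{preiwa})--(\ref{unixy}) applied to the quadratic Hamiltonian $H\circ R$, together with Proposition \ref{propabcd} identifying the generating Hamiltonian of any symplectic isotopy.
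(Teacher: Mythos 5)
Your overall strategy — push $T(a)$ through $S_{t}$, strip the unitary factor $U$ of $S=RU$ using rotational invariance of the ball, conjugate to get $S_{t}R=R\,f_{t}^{H\circ R}$, and apply pre-Iwasawa a second time — is exactly the paper's sequence of moves, and you correctly flagged the discrepancy in the center (the computation gives $z_{t}+S_{t}a$, not $S_{t}a$).

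However, the final identification is a genuine gap. You set $R_{t}=\bar{R}_{t}$ (the local pre-Iwasawa factor of $f_{t}^{H\circ R}$) and then assert
\[
R_{t}\,R\,B^{2n}(\varepsilon)=R\,\bar{R}_{t}\,B^{2n}(\varepsilon),
\]
i.e.\ $\bar{R}_{t}R\,B^{2n}(\varepsilon)=R\,\bar{R}_{t}\,B^{2n}(\varepsilon)$. This is false in general: both $\bar{R}_{t}R$ and $R\bar{R}_{t}$ already lie in $\operatorname{Sp}_{0}(n)$, and two elements of $\operatorname{Sp}_{0}(n)$ map $B^{2n}(\varepsilon)$ to the same ellipsoid only if they differ by an element of $U(n)\cap\operatorname{Sp}_{0}(n)=\{M_{X}:X\in O(n)\}$, which forces the commutator $R^{-1}\bar{R}_{t}^{-1}R\bar{R}_{t}$ to be of that very restricted form. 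Since $\operatorname{Sp}_{0}(n)$ is non-abelian, this fails generically — already for $n=1$ with $R=M_{\lambda}$ and $\bar{R}_{t}=V_{-P_{t}}$. The paper's proof avoids this by taking $R_{t}=R\,R_{t}'R^{-1}$ (your $R\bar{R}_{t}R^{-1}$), which is still in $\operatorname{Sp}_{0}(n)$ since conjugation by the block-lower-triangular $R$ preserves the block-lower-triangular form, and which makes the last step $R_{t}R\,B^{2n}(\varepsilon)=R\bar{R}_{t}B^{2n}(\varepsilon)$ an identity rather than an unjustified commutation. The proposition's phrase \textquotedblleft the reduced flow determined by $H\circ R$\textquotedblright\ is thus an abuse of language (it really means the $R$-conjugate of the local pre-Iwasawa factor of $(f_{t}^{H\circ R})$), and your proof, by reading it literally, inherits that looseness as an error in the final line.
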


\begin{proof}
Let $S=RU$ be the pre-Iwasawa factorization of $S$. Since $S_{t}%
T(a)=T(S_{t}a)S_{t}$ and $UB^{2n}(\varepsilon)=B^{2n}(\varepsilon)$ we have
\begin{align*}
(T(z_{t})S_{t})B_{S}^{2n}(a,\varepsilon)  &  =T(z_{t})S_{t}T(a)RUB^{2n}%
(\varepsilon)\\
&  =T(z_{t}+a_{t})S_{t}RB^{2n}(\varepsilon)\\
&  =T(z_{t}+a_{t})R(R^{-1}S_{t}R)B^{2n}(\varepsilon).
\end{align*}
In view of the conjugation formula (\ref{conj}), $(S_{t}^{\prime}%
)=(R^{-1}S_{t}R)$ is the symplectic isotopy generated by the quadratic
Hamiltonian $H\circ S$; the latter is given by%
\[
H\circ R(z,t)=-\frac{1}{2}JR^{-1}\dot{S}_{t}S_{t}^{-1}Rz^{2}.
\]
We now apply the pre-Iwasawa factorization to $S_{t}^{\prime}$ and write
$S_{t}^{\prime}=R_{t}^{\prime}U_{t}^{\prime}$ so that
\begin{align*}
(T(z_{t})S_{t})B_{S}^{2n}(a,\varepsilon)  &  =T(z_{t}+a_{t})RR_{t}^{\prime
}U_{t}^{\prime}B^{2n}(\varepsilon)\\
&  =T(z_{t}+a_{t})RR_{t}^{\prime}B^{2n}(\varepsilon)\\
&  =T(z_{t}+a_{t})(RR_{t}^{\prime}R^{-1})RB^{2n}(\varepsilon)\\
&  =T(z_{t}+a_{t})(RR_{t}^{\prime}R^{-1})SB^{2n}(\varepsilon).
\end{align*}
hence the equality (\ref{omegat1}) with $R_{t}=RR_{t}^{\prime}R^{-1}$.
\end{proof}

\subsection{The nonlinear case: nearby orbit approximation\label{secnearby}}

Assume now that we have a method allowing us to displace and deform any
ellipsoid $\Omega=T(z_{0})FB^{2n}(\varepsilon)$, $F\in GL(2n,\mathbb{R})$, in
such a way that its symplectic capacity remains constant. More explicitly, we
make the assumption that there exists a $C^{1}$ curve $t\longmapsto z_{t}$
starting from $z_{0}$ and a family $(g_{t})$ of $C^{1}$ diffeomorphisms
satisfying $g_{0}=I_{\mathrm{d}}$ together with the equilibrium condition%
\begin{equation}
g_{t}(0)=0\text{ \ \textit{for} \ }t\in I_{T}. \label{equil}%
\end{equation}
At time $t\in I_{T}$ the ellipsoid $\Omega$ becomes a (usually not elliptic)
set
\[
\Omega_{t}=T(z_{t})g_{t}T(-z_{0})\Omega
\]
such that $c(\Omega_{t})=c(\Omega)$; we thus have $\Omega_{t}=f_{t}(\Omega)$
where the symplectomorphisms $f_{t}$ are defined by
\begin{equation}
f_{t}=T(z_{t})g_{t}T(-z_{0})=T(z_{t}-z_{0})T(z_{0})g_{t}T(-z_{0}) \label{fto}%
\end{equation}
and it follows from Proposition \ref{thmchalk} that this motion must be
Hamiltonian: $(f_{t})$ is a\ symplectic isotopy generated by a Hamiltonian
function $H$ which we determine now.

\begin{proposition}
\label{propchalk3}The symplectic isotopy $(f_{t})$ defined by
\begin{equation}
f_{t}=T(z_{t})g_{t}T(-z_{0}) \label{ftg}%
\end{equation}
is the Hamiltonian flow determined by the Hamiltonian%
\begin{equation}
H(z,t)=H_{2}(z-z_{t},t)+\sigma(z,\dot{z}_{t}) \label{hsig}%
\end{equation}
where
\begin{equation}
H_{2}(z,t)=-\int_{0}^{1}\sigma(\dot{g}_{t}g_{t}^{-1}(\lambda z),z)d\lambda
\label{ha2}%
\end{equation}
is the Hamiltonian function generating $(g_{t})$.
\end{proposition}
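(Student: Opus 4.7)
The plan is to prove this by decomposing $f_t$ as a product of two Hamiltonian flows whose generators we can read off, and then assemble the total Hamiltonian via the composition formula (\ref{ch1}) of Proposition \ref{propham}.

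First I would write
\[
f_t = T(z_t)\,g_t\,T(-z_0) = T(z_t-z_0)\cdot\bigl[T(z_0)g_tT(-z_0)\bigr].
\]
The two factors on the right are each smooth isotopies starting from the identity at $t=0$ (since $z_0-z_0=0$ and $g_0=I_{\mathrm{d}}$), so each is generated by its own Hamiltonian, and $f_t$ will be generated by their $\#$-product.

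Next I would identify the two generators. For the bracketed factor, apply the conjugation property (\ref{conj}) with $g=T(-z_0)\in\operatorname{Symp}(n)$: since $g_t=f_t^{H_2}$, we get
\[
T(z_0)g_tT(-z_0)=f_t^{H_2\circ T(-z_0)},
\]
and $H_2\circ T(-z_0)$ evaluated at $z$ is simply $H_2(z-z_0,t)$. For the translation factor $T(z_t-z_0)$, I would verify by a one-line computation, exactly as in the \textquotedblleft introductory example\textquotedblright\ of the paper, that the \textquotedblleft translation Hamiltonian\textquotedblright\ $K(z,t)=\sigma(z,\dot z_t)$ does the job: indeed $\partial_zK=-J\dot z_t$ so $J\partial_zK=\dot z_t$, whence the Hamiltonian flow of $K$ starting at the identity is $z\mapsto z+(z_t-z_0)=T(z_t-z_0)z$.

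Finally I would apply the composition formula (\ref{ch1}) with $H$ replaced by $K$ and $K$ replaced by $\widetilde H_2(z,t):=H_2(z-z_0,t)$. Since $(f_t^K)^{-1}=T(z_0-z_t)$, we obtain
\[
(K\#\widetilde H_2)(z,t)=\sigma(z,\dot z_t)+H_2\bigl((z-z_t+z_0)-z_0,\,t\bigr)=\sigma(z,\dot z_t)+H_2(z-z_t,t),
\]
which is precisely (\ref{hsig}). The expression (\ref{ha2}) for $H_2$ itself is just Proposition \ref{thm1} applied to the isotopy $(g_t)$. The only real pitfall is bookkeeping: keeping track of which variable gets shifted by which translation under the conjugation (\ref{conj}) and under the inversion $(f_t^K)^{-1}$, and being careful that the sign conventions in $\sigma(z,\dot z_t)$ match those yielding $\dot z=\dot z_t$ rather than $-\dot z_t$. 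Everything else is an assembly of previously established formulas.
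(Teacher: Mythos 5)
Your proposal is correct and follows essentially the same route as the paper's proof: the same decomposition $f_t = T(z_t-z_0)\cdot\bigl[T(z_0)g_tT(-z_0)\bigr]$, the same identification of $\sigma(z,\dot z_t)$ as the translation Hamiltonian, the same use of the conjugation property (\ref{conj}) for the middle factor, and the same application of the composition formula (\ref{ch1}) to assemble the result. Your write-up is merely a bit more explicit in the final bookkeeping step.
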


\begin{proof}
Let us write
\begin{equation}
f_{t}=T(z_{t}-z_{0})T(z_{0})g_{t}T(-z_{0}).
\end{equation}
We first remark that $t\longmapsto T(z_{t}-z_{0})$ is the flow determined by
the translation Hamiltonian $H_{1}(z,t)=\sigma(z,\dot{z}_{t})$. The symplectic
isotopy $(g_{t})$ is determined by (\ref{ha2}) (Proposition \ref{thm1}), and
in view of the conjugation property (\ref{conj}) the flow $t\longmapsto
T(z_{0})g_{t}T(-z_{0})$ is thus determined by $H_{2}\circ T(-z_{0})$. Formula
(\ref{hsig}) now follows from the product property (\ref{ch1}) of Hamiltonian flows.
\end{proof}

Assuming that the radius $\varepsilon$ is small it makes sense to replace the
symplectic isotopy $(g_{t})$ with its linearization $(g_{t}^{0})$ around its
equilibrium point $z=0$ (Arnol'd \cite{Arnold}, \S 5.22), that is, we take
\begin{equation}
g_{t}^{0}(z)=g_{t}(0)+Dg_{t}(0)z=S_{t}^{0}z \label{gto}%
\end{equation}
where $S_{t}^{0}=Dg_{t}(0)$ is the Jacobian matrix of $g_{t}$ calculated at
the origin. A classical result (see \textit{e.g.} \cite{Birk}, \S 2.3.2) tells
us that $t\longmapsto S_{t}^{0}$ satisfies the \textquotedblleft variational
equation\textquotedblright\
\[
\frac{d}{dt}S_{t}^{0}=JD_{z}^{2}H(g_{t}(0),t)S_{t}^{0}=JD_{z}^{2}%
H(0,t)S_{t}^{0}%
\]
and hence $(g_{t}^{0})=(S_{t}^{0})$ is the flow determined by the quadratic
Hamiltonian function%
\begin{equation}
H_{2}^{0}(z,t)=\frac{1}{2}D_{z}^{2}H(0,t)z^{2}. \label{hotwo}%
\end{equation}
With this approximation the symplectomorphisms (\ref{ftg}) are replaced with%
\begin{equation}
f_{t}^{0}=T(z_{t}-z_{0})T(z_{0})S_{t}^{0}T(-z_{0}) \label{ftobis}%
\end{equation}
and $(f_{t}^{0})$ is the flow determined by the Hamiltonian%
\begin{equation}
H^{0}(z,t)=\frac{1}{2}D_{z}^{2}H(0,t)(z-z_{t})^{2}+\sigma(z,\dot{z}_{t}).
\label{ho}%
\end{equation}

We remark that $H_{2}^{0}(z,t)$ is obtained from the \textquotedblleft
exact\textquotedblright\ Hamiltonian (\ref{ha2}) by truncating the Taylor
series of $H_{2}$ at $z=0$ by dropping third order terms and above: noting
that $\partial_{z}H(0,t)=0$ since $0$ is an equilibrium point we have
\begin{equation}
H_{2}(z,t)=H_{2}(0,t)+\tfrac{1}{2}D_{z}^{2}H_{2}(0,t)z^{2}+O(z^{3}) \label{k}%
\end{equation}
and the term $H_{2}(0,t)$ can be neglected. Similarly, dismissing the terms
$H_{2}(0,t)$ and $\sigma(z_{t},\dot{z}_{t})$,
\begin{equation}
H(z,t)=\tfrac{1}{2}D_{z}^{2}H_{2}(0,t)(z-z_{t})^{2}+\sigma(z-z_{t},\dot{z}%
_{t})+O((z-z_{t})^{3}) \label{kbis}%
\end{equation}
so our method is closely related to the so-called \textquotedblleft nearby
orbit method\textquotedblright\ popularized by researchers working in
semiclassical approximations \cite{coro97,heller,huheli,Littlejohn}. In this
method one expands the Hamiltonian around an orbit and truncates the Taylor
series in order to get a more tractable problem.

A natural question arises at this point: in view of (\ref{kbis}) we have
$f_{t}^{H}(z_{0})=f_{t}^{H_{0}}(z_{0})=z_{t}$. What can we say about the
difference $f_{t}^{H}(z_{1})-f_{t}^{H_{0}}(z_{1})$ for an arbitrary point
$z_{1}\in\Omega$? Intuitively, the smaller the radius $\varepsilon$ is, the
better will $f_{t}^{H}(z_{1})$ approximate $f_{t}^{H}(z_{0})$ (at least for
not too big times $t$). Let us briefly discuss this without going too much
into the theory of the stability of Hamiltonian systems, for which there
exists an immense literature. See the recent preprint by Hong Qin \cite{hong}
for new results in the case of periodic orbits.

\subsection{A recalibration procedure}

We now briefly discuss an option which, to the best of our knowledge, has not
been explored yet, and leads to open question. Recall that we discussed in the
Introduction the John--L\"{o}wner ellipsoid. As before we start with a
symplectic ball $B_{S}^{2n}(\varepsilon)$, which we suppose centered at the
origin for simplicity. We displace $B_{S}^{2n}(\varepsilon)$ along a curve
$(z_{t})$ in phase space while deforming it using a symplectic isotopy
$(g_{t})$ consisting of an arbitrary family of symplectomorphisms starting
from the identity at time $t=0$. If we assume that this deformation is
sufficiently \textquotedblleft gentle\textquotedblright\ and preserves the
convexity, a natural idea is to replace $\Omega_{t}=g_{t}(B_{S}^{2n}%
(\varepsilon))$ with an ellipsoid $\widetilde{\Omega}_{t}$ having the same
symplectic capacity $\pi\varepsilon^{2}$ as $\Omega_{t}$: $c(\widetilde{\Omega
}_{t})=c(\widetilde{\Omega}_{t})$ for some choice of symplectic capacity $c$.
It turns out that there exists a unique maximum volume ellipsoid containing
$\Omega_{t}$, and by dilation one can obtain a minimum volume ellipsoid
containing $g_{t}(B_{S}^{2n}(\varepsilon))$. Now, volume is not related to
symplectic capacity (except in the case $n=1$ where both notions coincide with
area for connected simply connected surfaces), and it is not known whether one
can construct a \textquotedblleft minimum (or maximum) capacity
ellipsoid\textquotedblright\ which is the symplectic analogue of the
John--L\"{o}wner ellipsoid. However, we can do the following
\cite{Ball,turc1,turc2}. Among the ellipsoids circumscribing $\Omega_{t}$,
there exists a unique one $\Omega_{t}^{\min}$ with minimum volume (the
L\"{o}wner ellipsoid) and similarly, among the ellipsoids inscribed in
$\Omega_{t}$, there exists a unique one $\Omega_{t}^{\max}$ of maximum volume
(the John ellipsoid) and we have
\[
\frac{1}{n}\Omega_{t}^{\min}\subset\widetilde{\Omega}_{t}\subset\Omega
_{t}^{\min}\text{ \ , \ }\Omega_{t}^{\max}\subset\widetilde{\Omega}_{t}\subset
n\Omega_{t}^{\max}.
\]
In case $\Omega_{t}$ is symmetric (\textit{i.e.} $\Omega_{t}=-\Omega_{t}$) the
coefficients $1/n$ and $n$ can be changed into $1/\sqrt{n}$ and $\sqrt{n}$.
Also note that this also works when $\Omega_{t}$ fails to be a convex body. It
suffices to replace $\Omega_{t}$ with its convex hull.

\subsection{The shadow of a chalkboard motion}

We now study the projection (\textquotedblleft shadow\textquotedblright) of
the chalkboard motion on the \textquotedblleft configuration
space\textquotedblright\ $X=\mathbb{R}_{x}^{n}$. For this the following lemma
about projections of ellipsoids will be helpful:

\begin{lemma}
\label{lemmapim}Let $\Omega=\{Mz^{2}\leq\varepsilon^{2};z\in\mathbb{R}^{2n}\}$
and assume that $M=M^{T}>0$ is given in $n\times n$ block form by%
\begin{equation}
M=%
\begin{pmatrix}
M_{XX} & M_{XP}\\
M_{PX} & M_{PP}%
\end{pmatrix}
\label{mxp}%
\end{equation}
(thus $M_{XP}=M_{PX}^{T}$). Let $\Pi$ be the orthogonal projection
$\mathbb{R}^{2n}\longrightarrow\mathbb{R}_{x}^{n}\times\{0\}$. We have
\begin{equation}
\Pi\Omega=\{x\in\mathbb{R}^{n}:(M/M_{PP})x^{2}\leq\varepsilon^{2}\}
\label{pambb}%
\end{equation}
where the $n\times n$ matrix
\begin{equation}
M/M_{PP}=M_{XX}-M_{XP}M_{PP}^{-1}M_{PX} \label{Schur}%
\end{equation}
is the Schur complement of the block $M_{PP}$ of $M$.
\end{lemma}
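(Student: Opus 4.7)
The plan is to unfold the definition of the projection and reduce to a one-shot minimization in the $p$-variable for fixed $x$. A point $x\in\mathbb{R}^{n}$ lies in $\Pi\Omega$ if and only if there exists $p\in\mathbb{R}^{n}$ with $z=(x,p)\in\Omega$, that is, if and only if
\[
Q(x,p):=M_{XX}x\cdot x+2M_{PX}p\cdot x+M_{PP}p\cdot p\leq\varepsilon^{2},
\]
where I have used $M_{XP}=M_{PX}^{T}$ to collect the cross term. So the projection is exactly the sublevel set $\{x:\min_{p}Q(x,p)\leq\varepsilon^{2}\}$, and the content of the lemma is that this inner minimum equals $(M/M_{PP})x\cdot x$.

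For the minimization step, I first note that $M_{PP}$ is invertible (indeed positive definite), since it is a principal block of the positive-definite matrix $M$. Thus $Q(x,\cdot)$ is a strictly convex quadratic in $p$, minimized at the critical point $p^{\ast}=-M_{PP}^{-1}M_{PX}x$. Substituting, a direct computation gives
\[
Q(x,p^{\ast})=\bigl(M_{XX}-M_{XP}M_{PP}^{-1}M_{PX}\bigr)x\cdot x=(M/M_{PP})x\cdot x,
\]
exactly the Schur complement appearing in (\ref{Schur}). Combining the two observations, $x\in\Pi\Omega$ iff $(M/M_{PP})x\cdot x\leq\varepsilon^{2}$, which is (\ref{pambb}).

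There is no real obstacle here; the only point to be careful about is the invertibility of $M_{PP}$, which is automatic from $M>0$, and the bookkeeping with $M_{XP}=M_{PX}^{T}$ when completing the square. One could equivalently perform the proof by the change of variables $p=p'-M_{PP}^{-1}M_{PX}x$, which diagonalizes the quadratic form as $Q=(M/M_{PP})x\cdot x+M_{PP}p'\cdot p'$, making the decoupling of $x$ and $p'$ manifest and yielding the Schur complement directly; this also makes the positivity $M/M_{PP}>0$ transparent as a byproduct.
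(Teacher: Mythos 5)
Your proof is correct. It takes a slightly different route from the paper's: you characterize membership in $\Pi\Omega$ directly by the minimization $\min_{p}Q(x,p)\leq\varepsilon^{2}$ and solve the convex quadratic program in $p$, whereas the paper argues geometrically about the \emph{boundary} of the projection, observing that a boundary point of $\Pi\Omega$ must come from a point of $\partial\Omega$ whose outward normal $\partial_{z}Q=2Mz$ lies in $\mathbb{R}^{n}\times\{0\}$ (the ``silhouette'' condition), which gives the same constraint $M_{PX}x+M_{PP}p=0$. Both arguments therefore land on the same critical $p^{\ast}=-M_{PP}^{-1}M_{PX}x$ and the same substitution yielding the Schur complement, but yours is arguably more self-contained: it directly establishes the two inclusions via the equivalence ``$x\in\Pi\Omega\iff\min_{p}Q(x,p)\leq\varepsilon^{2}$'' and needs no separate appeal to the fact that a convex set is determined by its boundary or to the normal-vector characterization of shadow boundaries. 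Your alternative completion-of-the-square remark, writing $Q=(M/M_{PP})x\cdot x+M_{PP}p'\cdot p'$, is a nice bonus: it simultaneously proves the result, the positivity $M/M_{PP}>0$ (which the paper simply cites from \cite{zhang}), and the invertibility claim, all in one stroke.
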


\begin{proof}
Recall \cite{zhang} that if $M>0$ then $M/M_{PP}>0$ and hence the ellipsoid
(\ref{pambb}) is nondegenerate. Set $Q(z)=Mz^{2}-\varepsilon^{2}$; the
hypersurface $\partial\Omega:Q(z)=0$ bounding $\Omega$ is defined by%
\begin{equation}
M_{XX}x^{2}+2M_{PX}x\cdot p+M_{PP}p^{2}=\varepsilon^{2}. \label{mabab}%
\end{equation}
The normal vectors to the boundary of $\Pi_{X}\Omega$ must stay in $X$ hence
the constraint $\partial_{z}Q(z)=2Mz\in\mathbb{R}^{n}\times\{0\}$, which is
equivalent to $M_{PX}x+M_{PP}p=0$, that is to $p=-M_{PP}^{-1}M_{PX}x$.
Inserting $p$ in (\ref{mabab}) shows that $\Pi_{X}\Omega$ is bounded by
$\Sigma_{X}:(M/M_{PP})x^{2}=\varepsilon^{2}$ which yields (\ref{pambb}).
\end{proof}

This result easily allows us to find the orthogonal projection of a symplectic
ball on the $x$-space $\mathbb{R}^{n}\times\{0\}$ (or on the $p$-space
$\{0\}\times\mathbb{R}^{n}$); we will more generally consider the projection
of a chalkboard motion:

\begin{proposition}
Let $(T(z_{t})S_{t})$ be a symplectic isotopy with%
\begin{equation}
S_{t}=%
\begin{pmatrix}
A_{t} & B_{t}\\
C_{t} & D_{t}%
\end{pmatrix}
\text{ \ , \ }S_{0}=I_{\mathrm{d}}.
\end{equation}
The orthogonal projection of the symplectic ball
\[
B_{S_{t}}^{2n}(z_{t},\varepsilon)=T(z_{t})S_{t}B^{2n}(\varepsilon)
\]
on the configuration space $\mathbb{R}^{n}\times\{0\}$ is the ellipsoid
\begin{equation}
\Pi B_{S_{t}}^{2n}(z_{t},\varepsilon)=T(x_{t},0)(A_{t}A_{t}^{T}+B_{t}B_{t}%
^{T})^{1/2}B^{n}(\varepsilon). \label{projchalk}%
\end{equation}

\end{proposition}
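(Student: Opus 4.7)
The plan is to reduce to the ``local'' isotopy $(R_t)$ coming from the pre-Iwasawa factorization of Section \ref{seciwa}; after that, the projection becomes essentially a one-line Schur-complement computation via Lemma \ref{lemmapim}. The argument splits into four natural steps, and the main (mild) piece of input is the rotational invariance of $B^{2n}(\varepsilon)$.

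First I would dispose of the translation. Writing $z_t=(x_t,p_t)$ and letting $\Pi:\mathbb{R}^{2n}\to\mathbb{R}^n\times\{0\}$ denote the orthogonal projection, one has the obvious commutation $\Pi\circ T(z_t)=T(x_t,0)\circ\Pi$. It therefore suffices to identify $\Pi(S_tB^{2n}(\varepsilon))$ and then translate the result by $(x_t,0)$.

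Second I would invoke the pre-Iwasawa factorization of Section \ref{seciwa} to write $S_t=R_tU_t$ with $U_t\in U(n)$ and $R_t=V_{-P_t}M_{L_t}$, where by (\ref{pl2bis}) the symmetric positive-definite matrix $L_t$ satisfies $L_t=(A_tA_t^T+B_tB_t^T)^{-1/2}$. Since the Euclidean ball $B^{2n}(\varepsilon)$ is rotationally invariant, $U_tB^{2n}(\varepsilon)=B^{2n}(\varepsilon)$, and hence $S_tB^{2n}(\varepsilon)=R_tB^{2n}(\varepsilon)$. This is the key simplification.

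Third I would write an inequality for $R_tB^{2n}(\varepsilon)$. With
\[
R_t=\begin{pmatrix}L_t^{-1} & 0\\ Q_t & L_t\end{pmatrix},\qquad Q_t=P_tL_t^{-1},
\]
the block-inverse formula (\ref{1inv}) and the symmetry $L_t=L_t^T$ give $R_t^{-1}=\begin{pmatrix}L_t & 0\\ -Q_t^T & L_t^{-1}\end{pmatrix}$. Therefore $R_tB^{2n}(\varepsilon)=\{z:Mz\cdot z\le\varepsilon^2\}$ with $M=(R_t^{-1})^TR_t^{-1}$, and a routine block multiplication yields the blocks $M_{XX}=L_t^2+Q_tQ_t^T$, $M_{XP}=-Q_tL_t^{-1}$, $M_{PP}=L_t^{-2}$.

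Fourth, I would apply Lemma \ref{lemmapim}. The Schur complement collapses cleanly:
\[
M/M_{PP}=L_t^2+Q_tQ_t^T-(Q_tL_t^{-1})L_t^2(L_t^{-1}Q_t^T)=L_t^2,
\]
so the lemma gives $\Pi R_tB^{2n}(\varepsilon)=\{x:L_t^2x\cdot x\le\varepsilon^2\}=L_t^{-1}B^n(\varepsilon)=(A_tA_t^T+B_tB_t^T)^{1/2}B^n(\varepsilon)$. Combining with the first step yields (\ref{projchalk}).

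There is no real obstacle to anticipate: once the pre-Iwasawa factorization has absorbed the unitary factor $U_t$, everything that remains is a mechanical block computation. The only point that requires a moment of care is checking that the cross terms cancel in the Schur complement, which they do by the very definition of $Q_t$.
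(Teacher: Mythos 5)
Your proof is correct and follows essentially the same route as the paper: factor out the translation, use the pre-Iwasawa factorization $S_t=R_tU_t$ together with the rotational invariance of the ball to reduce to $R_tB^{2n}(\varepsilon)$, compute the matrix $M=(R_t^{-1})^TR_t^{-1}$ in block form, and read off $M/M_{PP}=L_t^2$ from Lemma \ref{lemmapim}. The only difference is that you spell out the intermediate block computations (via the inverse formula (\ref{1inv})) a bit more explicitly than the paper does.
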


\begin{proof}
It is no restriction to assume $z_{t}=0$ since translations project to
translations in the first component. We have $S_{t}B^{2n}(\varepsilon
)=R_{t}B^{2n}(\varepsilon)$ where $(R_{t})$ is the symplectic isotopy in
$\operatorname*{ISp}_{0}(n)$ given by (\ref{rtbis}), that is
\begin{equation}
R_{t}=%
\begin{pmatrix}
L_{t}^{-1} & 0\\
Q_{t} & L_{t}%
\end{pmatrix}
,
\end{equation}
the matrices $Q_{t}$ and $L_{t}$ being given by
\begin{align}
Q_{t}  &  =(C_{t}A_{t}^{T}+D_{t}B_{t}^{T})(A_{t}A_{t}^{T}+B_{t}B_{t}%
^{T})^{-1/2}\\
L_{t}  &  =(A_{t}A_{t}^{T}+B_{t}B_{t}^{T})^{-1/2}.
\end{align}
The ellipsoid $R_{t}B^{2n}(\varepsilon)$ is the set of all $z\in
\mathbb{R}^{2n}$ such that $(R_{t}R_{t}^{T})^{-1}z^{2}\leq\varepsilon$ hence
the matrix $M$ in (\ref{mxp}) is given by%
\[
M=%
\begin{pmatrix}
QQ^{T}+L^{2} & -QL^{-1}\\
-L^{-1}Q^{T} & L^{-2}%
\end{pmatrix}
\]
and the Schur complement $M/M_{PP}$ is then just $L_{t}^{2}=(A_{t}A_{t}%
^{T}+B_{t}B_{t}^{T})^{-1}$. Formula (\ref{projchalk}) follows.
\end{proof}

Formula (\ref{projchalk}) perfectly illustrates that the phenomenon of
\textquotedblleft spreading\textquotedblright\ is not, \textit{per se}, a
quantum phenomenon as many physicists still believe (this was in fact remarked
on a long time ago by Littlejohn \cite{Littlejohn}). In fact spreading will
always occur provided that $A_{t}A_{t}^{T}+B_{t}B_{t}^{T}$ is not constant,
that is equal to $I_{\mathrm{d}}$ for all $t$. For instance, in the case $n=1$
we would have $A_{t}^{2}+B_{t}^{2}=1$ and the phase space motion would be a
rotation leaving the disk $|z|\leq\varepsilon$ invariant.

Let us next briefly consider the case of subsystems, obtained by orthogonal
projection on a smaller phase space. In the quantum case such projections
intervene in the study of entanglement. Let again $\Omega=FB^{2n}%
(\varepsilon)$ be a non-degenerate ellipsoid centered at the origin; setting
$M=(FF^{T})^{-1}$\ this ellipsoid is the set of all $z\in\mathbb{R}^{2n}$ such
that $Mz^{2}\leq R^{2}$; $M$ is a symmetric and positive definite matrix.
Consider now the splitting $\mathbb{R}^{2n_{A}}\oplus\mathbb{R}^{2n_{B}}$ of
$\mathbb{R}^{2n}$ with $n_{A}+n_{B}=n$. The spaces $\mathbb{R}^{2n_{A}}%
\equiv\mathbb{R}_{x_{A}}^{n_{A}}\times\mathbb{R}_{p_{A}}^{n_{A}}$ and
$\mathbb{R}^{2n_{B}}\equiv\mathbb{R}_{x_{B}}^{n_{B}}\times\mathbb{R}_{p_{_{B}%
}}^{n_{B}}$ are viewed as the phase spaces of two subsystems $A$ and $B$. We
will write the matrix of $M$ as
\begin{equation}
M=%
\begin{pmatrix}
M_{AA} & M_{AB}\\
M_{BA} & M_{BB}%
\end{pmatrix}
\label{MMM}%
\end{equation}
where the blocks $M_{AA}$, $M_{AB}$, $M_{BA}$, $M_{BB}$ have, respectively,
dimensions $n_{A}\times n_{A}$, $n_{A}\times n_{B}$, $n_{B}\times n_{A}$,
$n_{B}\times n_{B}$. Since $M$ is positive definite and symmetric (written
$M>0$) the blocks $M_{AA}$ and $M_{BB}$ are symmetric and positive definite
(and hence invertible) and $M_{BA}=M_{AB}^{T}$.

Abbondandolo and Matveyev \cite{abbo} (also see \cite{abbobis}) have shown
that for $S\in\operatorname*{Sp}(n)$.%
\begin{equation}
\operatorname*{Vol}\nolimits_{2n_{A}}\Pi_{A}S(B^{2n}(R)\geq\frac{(\pi
R^{2})^{n_{A}}}{n_{A}!}\label{abbo1}%
\end{equation}
for every $R>0$. This inequality can be seen as an interpolation between
Gromov's theorem and Liouville's theorem on the conservation of volume for
$2\leq n_{A}\leq n-1$. In \cite{DiGoPra19} we have proven the following
improvement of Abbondandolo and Matveyev's result: 

\begin{proposition}
\label{PropAB}Let $\Pi_{A}$ be the orthogonal projection $\mathbb{R}^{2n_{A}%
}\times\mathbb{R}^{2n_{B}}\longrightarrow\mathbb{R}^{2n_{A}}$. There exists
$S_{A}\in\operatorname*{Sp}(n_{A})$ such that for every $R>0$ the projected
ellipsoid $\Pi_{A}(S(B_{R}^{2n}))$ contains the symplectic ball $S_{A}%
(B_{R}^{2n_{A}})$:%
\begin{equation}
\Pi_{A}(S(B_{R}^{2n}))\supset S_{A}(B_{R}^{2n_{A}}))~.\label{projo}%
\end{equation}
More generally,
\begin{equation}
\Pi_{A}(S(B_{R}^{2n}(z_{0})))\supset\{\Pi_{A}(Sz_{0})\}+S_{A}(B_{R}^{2n_{A}%
})~.\label{proja}%
\end{equation}
We have equality in (\ref{projo}), (\ref{proja}) if and only if $S=S_{A}\oplus
S_{B}$ for some $S_{B}\in\operatorname*{Sp}(n_{B})$.
\end{proposition}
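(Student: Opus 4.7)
The plan is to reduce the containment to a positive-semidefinite inequality on the $(AA)$-block of the matrix $G := SS^T$, and then to verify this inequality via Williamson's normal form.

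Since $\Pi_A$ is linear, $\Pi_A(S B_R^{2n}(z_0)) = \Pi_A(S z_0) + \Pi_A(S B_R^{2n})$, so (\ref{proja}) follows from (\ref{projo}) by translation and I reduce to $z_0 = 0$. Writing the symplectic ball as $S B_R^{2n} = \{z : G^{-1} z \cdot z \le R^2\}$ with $G := SS^T$, the short computation $SS^T J SS^T = S(S^T J S) S^T = SJS^T = J$ shows that $G$ is symmetric, positive-definite, and symplectic. I then apply Lemma~\ref{lemmapim} with the splitting $\mathbb{R}^{2n} = \mathbb{R}^{2n_A} \oplus \mathbb{R}^{2n_B}$ in place of the $(x,p)$-splitting (the argument depends only on having a linear direct-sum decomposition), and invoke the Schur-complement identity $(M/M_{BB})^{-1} = (M^{-1})_{AA} = G_{AA}$ with $M := G^{-1}$ to obtain
\begin{equation*}
\Pi_A(S B_R^{2n}) \;=\; \{z_A \in \mathbb{R}^{2n_A} : G_{AA}^{-1} z_A \cdot z_A \le R^2\}.
\end{equation*}
Since $G > 0$ implies $G_{AA} > 0$, this is a genuine $2n_A$-dimensional ellipsoid, and the containment $S_A B_R^{2n_A} \subset \Pi_A(SB_R^{2n})$ is equivalent to the operator inequality $S_A S_A^T \le G_{AA}$.

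The core step is to produce $S_A \in \operatorname{Sp}(n_A)$ with $S_A S_A^T \le G_{AA}$. Because $G$ is symplectic and positive-definite, Williamson's theorem (\ref{will5}) collapses to $G = T^T T$ with $T \in \operatorname{Sp}(n)$, yielding the complex Hermitian positivity
\begin{equation*}
G + iJ \;=\; T^T(I + iJ)T \;\ge\; 0
\end{equation*}
(the spectrum of $I + iJ$ is $\{0,2\}$, and $T^T J T = J$). Restricting this inequality to the $A$-subspace, which is legitimate because $J = J_A \oplus J_B$, gives $G_{AA} + i J_A \ge 0$. Diagonalizing $G_{AA} = T_A^T D_A T_A$ via Williamson with $D_A = \operatorname{diag}(\Lambda_A, \Lambda_A)$ and reading the $2 \times 2$-block decomposition of $D_A + iJ_A$ in reverse forces every symplectic eigenvalue of $G_{AA}$ to satisfy $\mu_j \ge 1$; hence $D_A \ge I$ and
\begin{equation*}
G_{AA} \;=\; T_A^T D_A T_A \;\ge\; T_A^T T_A \;=\; S_A S_A^T, \qquad S_A := T_A^T \in \operatorname{Sp}(n_A),
\end{equation*}
which proves (\ref{projo}); (\ref{proja}) then follows by translation.

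Equality in the inclusion is equivalent to $G_{AA}$ itself being symplectic ($D_A = I$). When this holds, the identity $(G^{-1})_{AA} = -J_A G_{AA} J_A$, read off from $G^{-1} = -JGJ$, combined with the symplecticity of $G_{AA}$ gives $(G^{-1})_{AA} = G_{AA}^{-1}$; comparing with the Schur-complement identity $(G^{-1})_{AA} = (G/G_{BB})^{-1}$ yields $G/G_{BB} = G_{AA}$, which forces $G_{AB} G_{BB}^{-1} G_{BA} = 0$ and hence $G_{AB} = 0$. Thus $G = G_{AA} \oplus G_{BB}$ splits, and $S$ decomposes accordingly (modulo a harmless right $U(n)$-action that does not affect $SB_R^{2n}$) as $S_A \oplus S_B$ with $S_B \in \operatorname{Sp}(n_B)$; the converse is immediate. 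The main obstacle is the implication $G_{AA} + iJ_A \ge 0 \Rightarrow \mu_j(G_{AA}) \ge 1$: this is the heart of the argument and amounts to the \emph{purity-implies-mixedness of the reduced state} principle from Gaussian-state theory, resting on the subtle interplay between Williamson's normal form and complex Hermitian positivity. The rest is routine linear algebra drawn from Sections~3 and~4.
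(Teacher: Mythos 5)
Your proof is correct and complete. The strategy---reduce the inclusion to the operator inequality $S_A S_A^T \leq G_{AA}$ with $G := SS^T$, via Lemma~\ref{lemmapim} and the Schur-complement inversion formula $(M^{-1})_{AA} = (M/M_{BB})^{-1}$, then show the symplectic eigenvalues of the block $G_{AA}$ are all $\geq 1$---agrees in its skeleton with the method the paper attributes to \cite{DiGoPra19}, namely Williamson diagonalization applied to the projected ellipsoid. What is distinctive in your treatment is how you obtain the symplectic-eigenvalue bound: you convert ``$G$ is positive, symmetric, and symplectic'' into the complex Hermitian positivity $G + iJ = T^T(I+iJ)T \geq 0$, note that this positivity is inherited by the principal $A$-block because the splitting is adapted to $J = J_A \oplus J_B$, and read off $\mu_j(G_{AA}) \geq 1$ from the diagonalized $2\times 2$ blocks. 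This is the Gaussian-quantum-information route (``reduced states of pure Gaussian states satisfy the Robertson--Schr\"odinger inequality''), and it makes the role of Williamson twice transparent: once to trivialize $G$ to $T^T T$, once to diagonalize $G_{AA}$. Compared with a direct block-matrix manipulation it is somewhat more structural, since the only symplectic input is the covariance of Williamson's normal form under $\operatorname{Sp}$-conjugation plus the elementary fact that principal submatrices of Hermitian positive semidefinite matrices stay positive semidefinite. You are also right to flag that equality in the inclusion literally forces only $SS^T = G_{AA} \oplus G_{BB}$, hence $S = (S_A \oplus S_B)U$ for some $U \in U(n)$; the statement $S = S_A \oplus S_B$ must therefore be read modulo the $U(n)$-stabilizer of the ball, which is a harmless (and in the paper implicit) normalization.
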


The proof of this result makes use of the properties of the projection
operator using Williamson's symplectic diagonalization theorem and is thus
conceptually somewhat simpler than the method of Abbondandolo and Matveyev
which uses the Wirtinger inequality from the theory of differential forms. We
mention that Abbondandolo and Benedetti \cite{abbonew} have very recently
improved and extended the scope of their results  using the theory of Zoll
contact forms. 

\section{Quantum Blobs and the Wigner Transform}

The symplectic group and its double covering, the metaplectic group, are the
keystones of mechanics in both their Hamiltonian and quantum formulations.
Loosely speaking one can say that the passage from the symplectic group to its
metaplectic representation is the shortest bridge between classical and
quantum mechanics \cite{RMP}, because it provides us automatically with the
Weyl quantization of quadratic Hamiltonians without any recourse to physical
arguments (also see the discussion in \cite{gohi1}).

\subsection{The local metaplectic group\label{secmet}}

The symplectic group $\operatorname*{Sp}(n)$ is a connected matrix Lie group,
contractible to its compact subgroup $U(n)$ and has covering groups of all
orders. The metaplectic group $\operatorname*{Mp}(n)$ is a unitary
representation in $L^{2}(\mathbb{R}^{n})$ of the double cover of
$\operatorname*{Sp}(n)$ by unitary operators acting on square integrable
functions (see \cite{Birk}, Chapter 7 for a detailed study and construction of
$\operatorname*{Mp}(n)$). To every $S\in\operatorname*{Sp}(n)$ the metaplectic
representation associates two unitary operators $\pm\widehat{S}\in
\operatorname*{Mp}(n)$ on $L^{2}(\mathbb{R}^{n})$. This representation is
entirely determined by its action on the generators of $\operatorname*{Mp}(n)$
since the covering projection $\pi^{\operatorname*{Mp}}:\operatorname*{Mp}%
(n)\longrightarrow\operatorname*{Sp}(n)$ is a group epimorphism. This
correspondence is summarized in the table below:

\begin{center}%
\begin{tabular}
[c]{|l|l|l|}\hline
$\widehat{J}\psi(x)=\left(  \tfrac{1}{2\pi i\hbar}\right)  ^{n/2}\int
e^{-\frac{1}{\hbar}x\cdot x^{\prime}}\psi(x^{\prime})d^{n}x^{\prime}$ &
$\overset{\pi^{\operatorname*{Mp}}}{\longrightarrow}$ & $J=%
\begin{pmatrix}
0 & I\\
-I & 0
\end{pmatrix}
$\\\hline
$\widehat{V}_{P}\psi(x)=e^{-\frac{i}{2\hbar}Px\cdot x}\psi(x)$ &
$\overset{\pi^{\operatorname*{Mp}}}{\longrightarrow}$ & $V_{P}=%
\begin{pmatrix}
I & 0\\
-P & I
\end{pmatrix}
$\\\hline
$\widehat{M}_{L,m}\psi(x)=i^{m}\sqrt{|\det L|}\psi(Lx)$ & $\overset{\pi
^{\operatorname*{Mp}}}{\longrightarrow}$ & $M_{L}=%
\begin{pmatrix}
L^{-1} & 0\\
0 & L^{T}%
\end{pmatrix}
$\\\hline
\end{tabular}

\end{center}

\noindent the integer $m$ in $\widehat{M}_{L,m}$ (\textquotedblleft Maslov
index\textquotedblright) being chosen so that $\arg\det L=m\pi$
$(\operatorname{mod}2\pi)$. Observe that these operators (and hence every
$\widehat{S}\in\operatorname*{Mp}(n)$) can be extended into continuous
operators acting on the Schwartz space $\mathcal{S}^{\prime}(\mathbb{R}^{n})$
of tempered distributions. Needless to say there are other ways to introduce
the metaplectic group. A good way is to use generalized Fourier transforms and
the apparatus of generating functions (see \cite{Birk}, Chapter 7). It works
as follows: assume that
\[
S=%
\begin{pmatrix}
A & B\\
C & D
\end{pmatrix}
\text{ \ , \ }\det B\neq0
\]
is a free symplectic matrix; to $S$ we associate its generating function%
\[
W(x,x^{\prime})=\frac{1}{2}DB^{-1}x^{2}-B^{-1}x\cdot x^{\prime}+\frac{1}%
{2}B^{-1}Ax^{\prime2}.
\]
This function has the property that
\[
(x,p)=S(x^{\prime},p^{\prime})\Longleftrightarrow\left\{
\begin{array}
[c]{c}%
p=\partial_{x}W(x,x^{\prime})\\
p^{\prime}=-\partial_{x^{\prime}}W(x,x^{\prime})
\end{array}
\right.
\]
as can be checked by a direct calculation. Now, exactly every element of
$\operatorname*{Sp}(n)$ is the product of two free symplectic matrices, every
$\widehat{S}\in\operatorname*{Mp}(n)$ can be written (non uniquely) as a
product of two Fourier integral operators of the type $\widehat{S}_{W,m}$
where%
\begin{equation}
\widehat{S}_{W,m}\psi(x)=\left(  \tfrac{1}{2\pi i\hbar}\right)  ^{n/2}%
i^{m}\sqrt{|\det B^{-1}|}\int e^{\frac{i}{\hbar}W(x,x^{\prime})}\psi
(x^{\prime})d^{n}x^{\prime} \label{swm}%
\end{equation}
where $m$ is an integer $\operatorname{mod}4$ corresponding to a choice of
$\arg\det B^{-1}$. We notice that $\widehat{S}_{W,m}$ can be simply expressed
in terms of the unitary operators $\widehat{J}$, $\widehat{V}_{P}$ and
$\widehat{M}_{L,m}$ defined above: a simple inspection of the formula above
shows that (\textit{cf.} formula (\ref{sdba}) in Section \ref{seclocal})
\begin{equation}
\widehat{S}_{W,m}=\widehat{V}_{-DB^{-1}}\widehat{M}_{B^{-1},m}\widehat{J}%
\widehat{V}_{-B^{-1}A}. \label{swmbis}%
\end{equation}

We now define the \textit{local metaplectic group}: it is the subgroup
$\operatorname*{Mp}_{0}(n)$ of $\operatorname*{Mp}(n)$ generated by the
operators $\widehat{M}_{L,m}$ and $\widehat{V}_{P}$. This group actually
consists of all products $\widehat{V}_{P}\widehat{M}_{L,m}$ (or $\widehat{M}%
_{L,m}\widehat{V}_{P}$) as follows from the formulas (\textit{cf}.
(\ref{mlvp}) and (\ref{mlpinv})):
\begin{equation}
\widehat{M}_{L,m}\widehat{V}_{P}=\widehat{V}_{L^{T}PL}\widehat{M}_{L,m}\text{
\ },\text{ \ }\widehat{V}_{P}\widehat{M}_{L,m}=\widehat{M}_{L,m}%
\text{\ }\widehat{V}_{(L^{-1})^{T}PL^{-1}} \label{MLVP}%
\end{equation}
and, using the relations $(\widehat{V}_{P})^{-1}=\widehat{V}_{-P}$,
$(\widehat{M}_{L,m})^{-1}=\widehat{M}_{L^{-1},-m}$,
\begin{equation}
(\widehat{V}_{P}\widehat{M}_{L,m})^{-1}=\widehat{V}_{(L^{-1})^{T}PL^{-1}%
}\widehat{M}_{L^{-1},-m}. \label{mlvpinv}%
\end{equation}
Combining these relations we get the following formula, which is the
metaplectic analogue of (\ref{xinvy}):%
\begin{equation}
\widehat{S^{\prime}}\widehat{S}^{-1}=\widehat{V}_{P^{\prime}-(L^{-1}L^{\prime
})^{T}P(L^{-1}L^{\prime})}\widehat{M}_{L^{-1}L^{\prime},-m+m^{\prime}}.
\label{XINVY}%
\end{equation}

The local symplectic group $\operatorname*{Sp}_{0}(n)$ is the image of the
local metaplectic group $\operatorname*{Mp}_{0}(n)$ by the covering projection
$\pi^{\operatorname*{Mp}}:\operatorname*{Mp}(n)\longrightarrow
\operatorname*{Sp}(n)$ as $\pi^{\operatorname*{Mp}}(\widehat{V}_{P})=V_{P}$
and $\pi^{\operatorname*{Mp}}(\widehat{M}_{L,m})=M_{L}$. We use the
denomination \textquotedblleft local metaplectic group\textquotedblright%
\ because the products $\widehat{V}_{P}\widehat{M}_{L,m}$ are the only local
operators in $\operatorname*{Mp}(n)$: in harmonic analysis an operator is said
to be \textquotedblleft local\textquotedblright\ if it does not increase the
supports of the functions to which it is applied. For instance the modified
Fourier transform $\widehat{J}\in\operatorname*{Mp}(n)$ is not local since,
for example, $\widehat{J}\psi$ cannot be of compact support if $\psi\neq0$
because $\widehat{J}\psi$ is an analytic function in view of Paley--Wiener's
theorem. More generally, the Fourier integral operators (\ref{swm}) are never
local as can be seen by letting them act on a Dirac $\delta$ distribution.

We defined the inhomogeneous symplectic group $\operatorname*{ISp}(n)$ as
being the group generated by $\operatorname*{Sp}(n)$ and the translations
$T(z_{0}):z\longmapsto z_{0}$. Similarly, we define the inhomogeneous
metaplectic group $\operatorname*{IMp}(n)$ as the group of unitary operators
generated by $\operatorname*{Mp}(n)$ and the Heisenberg--Weyl operators
$\widehat{T}(z_{0})$, defined by \cite{Birk,Birkbis,Littlejohn}
\[
\widehat{T}(z_{0})\psi(x)=e^{\frac{i}{\hbar}(p_{0}x-\frac{1}{2}p_{0}x_{0}%
)}\psi(x-x_{0}).
\]
These operators satisfy the Weyl relations
\begin{align}
\widehat{T}(z_{0}+z_{1})  &  =e^{-\tfrac{i}{2\hslash}\sigma(z_{0},z_{1}%
)}\widehat{T}(z_{0})\widehat{T}(z_{1})\label{tzotzo}\\
\widehat{T}(z_{0})\widehat{T}(z_{1})  &  =e^{\tfrac{i}{\hslash}\sigma
(z_{0},z_{1})}\widehat{T}(z_{1})\widehat{T}(z_{0}). \label{tzotzobis}%
\end{align}
The inhomogeneous metaplectic group $\operatorname*{IMp}(n)$ consists of all
products
\begin{equation}
\widehat{S}\widehat{T}(z_{0})=\widehat{T}(Sz_{0})\widehat{S}. \label{dzo}%
\end{equation}
We will denote by $\operatorname*{IMp}_{0}(n)$ the subgroup of
$\operatorname*{IMp}(n)$ generated by $\operatorname*{Mp}_{0}(n)$ and the
Heisenberg--Weyl operators; it consists of all products $\widehat{S}%
\widehat{T}(z_{0})$ or $\widehat{T}(z_{0})\widehat{S}$ with $\widehat{S}%
\in\operatorname*{Mp}_{0}(n)$: we have%
\begin{equation}
\widehat{T}(z_{0})\widehat{V}_{P}\widehat{M}_{L,m}=\widehat{V}_{P}%
\widehat{M}_{L,m}\widehat{T}[M_{L^{-1}}V_{-P}z_{0}] \label{pr2}%
\end{equation}
which is the metaplectic version of (\ref{pr1}).

\subsection{Wigner transforms of Gaussians}

There is an immense literature about Gaussians and their Wigner transforms.
See for instance \cite{coro12,Littlejohn,sisumu}.

The Wigner transform of a function $\psi\in L^{2}(\mathbb{R}^{n})$ is the
function $W\psi\in L^{2}(\mathbb{R}^{2n})$ defined by%
\begin{equation}
W\psi(x,p)=\left(  \tfrac{1}{2\pi\hbar}\right)  ^{n}\int e^{-\frac{i}{\hbar
}py}\psi(x+\tfrac{1}{2}y)\psi^{\ast}(x-\tfrac{1}{2}y)d^{n}y. \label{wigtra}%
\end{equation}
This function is covariant under the action of the inhomogeneous groups
$\operatorname*{ISp}(n)$ and $\operatorname*{IMp}(n)$ in the sense that
\begin{equation}
W(\widehat{S}\psi)(z)=W\psi(S^{-1}z)\text{ \ , \ }W(\widehat{T}(z_{0}%
)\psi)(z)=W\psi(T(z_{0})z) \label{cov}%
\end{equation}
for all $\widehat{S}\in\operatorname*{Mp}(n)$ with projection $S\in
\operatorname*{Sp}(n)$ (see for instance \cite{Wigner,Littlejohn}). Of
particular interest to us are the Wigner transforms of non-degenerate Gaussian
functions (\textquotedblleft generalized squeezed coherent
states\textquotedblright)
\begin{equation}
\phi_{X,Y}(x)=(\pi\hbar)^{-n/4}(\det X)^{1/4}e^{-\frac{1}{2\hbar}(X+iY)x^{2}}
\label{fxy1}%
\end{equation}
($X$ and $Y$ real symmetric $n\times n$ matrices, $X>0$); one has the
well-known formula \cite{Bastiaans,Birk,Birkbis,Wigner,Littlejohn}
\begin{equation}
W\phi_{X,Y}(z)=(\pi\hbar)^{-n}e^{-\tfrac{1}{\hbar}Gz^{2}} \label{phagauss}%
\end{equation}
where $G$ is the symplectic symmetric positive definite matrix%
\begin{equation}
G=%
\begin{pmatrix}
X+YX^{-1}Y & YX^{-1}\\
X^{-1}Y & X^{-1}%
\end{pmatrix}
\in\operatorname*{Sp}(n). \label{gsym}%
\end{equation}
Noting that we can write $G=S_{X,Y}^{T}S_{X,Y}$ where%
\begin{equation}
S_{X,Y}=M_{X^{-1/2}}V_{-Y}=%
\begin{pmatrix}
X^{1/2} & 0\\
X^{-1/2}Y & X^{-1/2}%
\end{pmatrix}
\in\operatorname*{Sp}\nolimits_{0}(n) \label{bi}%
\end{equation}
we thus have
\begin{equation}
G=(M_{X^{-1/2}}V_{-Y})^{T}M_{X^{-1/2}}V_{-Y}. \label{G}%
\end{equation}
Notice that when $X=I_{\mathrm{d}}$ and $Y=0$ the Gaussian $\phi_{X,Y}$
reduces to the \textquotedblleft standard coherent state\textquotedblright%
\ \cite{Littlejohn}
\begin{equation}
\phi_{0}(x)=(\pi\hbar)^{-n/4}e^{-|x|^{2}/2\hbar} \label{pho1}%
\end{equation}
whose Wigner transform is simply%
\begin{equation}
W\phi_{0}(z)=(\pi\hbar)^{-n}e^{-|z|^{2}/\hbar}. \label{pho2}%
\end{equation}

More generally we will consider Gaussians centered at an arbitrary point; we
define the Gaussian $\phi_{X,Y}^{z_{0}}$ centered at $z_{0}$ as
\begin{equation}
\phi_{X,Y}^{z_{0}}=\widehat{T}(z_{0})\phi_{X,Y} \label{fizo}%
\end{equation}
where $\widehat{T}(z_{0})$ is the Heisenberg--Weyl operator, and we have,
using the translational covariance of the Wigner transform (second formula
(\ref{cov}))
\[
W\phi_{X,Y}^{z_{0}}(z)=(\pi\hbar)^{-n}e^{-\tfrac{1}{\hbar}G(z-z_{0})^{2}}.
\]
Setting $\Sigma^{-1}=\tfrac{2}{\hbar}G$ we can rewrite the Gaussian
(\ref{phagauss}) as
\[
W\phi_{X,Y}(z)=(2\pi)^{-n}\sqrt{\det\Sigma^{-1}}e^{-\frac{1}{2}\Sigma
^{-1}z^{2}}%
\]
which immediately leads to the following statistical interpretation: the
$2n\times2n$ matrix
\[
\Sigma=\frac{\hbar}{2}%
\begin{pmatrix}
X^{-1} & -X^{-1}Y\\
-YX^{-1} & X+YX^{-1}Y
\end{pmatrix}
\]
is the covariance matrix \cite{Littlejohn,optimal,Birk,stat} of the Gaussian
state $\phi_{X,Y}$.

The connection with the uncertainty principle is the following. We write the
covariance matrix in traditional form
\begin{equation}
\Sigma=%
\begin{pmatrix}
\Delta(x,x) & \Delta(x,p)\\
\Delta(p,x) & \Delta(p,p)
\end{pmatrix}
\label{covmat}%
\end{equation}
where $\Delta(x,x)=(\Delta(x_{j},x_{k}))_{1\leq j,k\leq n}$ and so on. Since
$G$ is symplectic and symmetric positive definite so is its inverse and the
$n\times n$ blocks $\Delta(x,x)=\Delta(x,x)^{T}$, $\Delta(x,p)=\Delta
(p,x)^{T}$, and $\Delta(p,p)=\Delta(p,p)^{T}$ must satisfy the relation%
\[
\Delta(x,x)\Delta(p,p)-\Delta(x,p)^{2}=\tfrac{1}{4}\hbar^{2}I_{\mathrm{d}}%
\]
as follows from the conditions (\ref{1cond}) or (\ref{2cond}) on the blocks of
a symplectic matrix. The latter implies that we must have%
\begin{equation}
(\Delta x_{j})^{2}(\Delta p_{j})^{2}=\Delta(x_{j},p_{j})^{2}+\tfrac{1}{4}%
\hbar^{2}\text{ , }1\leq j\leq n \label{RS}%
\end{equation}
which means that the so-called Robertson--Schr\"{o}dinger inequalities are
saturated (\textit{i.e.}become equalities). Notice that in particular we have
the textbook Heisenberg inequalities $\Delta x_{j}\Delta p_{j}\geq\frac{1}%
{2}\hbar$, which are a weaker form of the Robertson--Schr\"{o}dinger
uncertainty principle; they lack any symplectic invariance property, and
should therefore be avoided in any precise discussion of the uncertainty
principle (see the discussions in \cite{Birk,FOOP,goluPR}).

We will denote by $\operatorname*{Gauss}(n)$ the set of all Gaussian functions
(\ref{fizo}); we will identify that set with the set of all symplectic balls
with radius $R=\sqrt{\hbar}$. We will write $\operatorname*{Gauss}_{0}(n)$ for
the subset consisting of Gaussians centered at the origin.

\subsection{Quantum blobs}

We have introduced the notion of \textquotedblleft quantum
blob\textquotedblright\ in \cite{FOOP,blobs,goluPR}. A quantum blob
$Q_{S}^{2n}(z_{0})$ is a symplectic ball with radius $\sqrt{\hbar}$:
\begin{equation}
Q_{S}^{2n}(z_{0})=T(z_{0})SB^{2n}(\sqrt{\hbar}) \label{blob1}%
\end{equation}
and thus has symplectic capacity $\pi\hbar$. In view of Gromov's non-squeezing
theorem, every ellipsoid $\Omega$ in $\mathbb{R}^{2n}$ with symplectic
capacity $c(\Omega)\geq\pi\hbar$ contains a quantum blob. It is easy to see
that%
\begin{equation}%
%TCIMACRO{\tbigcap _{S\in\operatorname*{Sp}(n)}}%
%BeginExpansion
{\textstyle\bigcap_{S\in\operatorname*{Sp}(n)}}
%EndExpansion
Q_{S}^{2n}(z_{0})=\{z_{0}\} \label{inter}%
\end{equation}
(it is sufficient to assume $z_{0}=0$ and $S=M_{\lambda I_{\mathrm{d}}}$ with
arbitrary $\lambda\neq0$).

Quantum blobs are minimum uncertainty phase space ellipsoids as follows from
the discussion in the previous subsection where we showed that Gaussians
saturate the Robertson--Schr\"{o}dinger principle (\ref{RS}):

\begin{proposition}
Let $\Sigma$ be the covariance matrix (\ref{covmat}) of a coherent state
(\ref{fizo}). The covariance ellipsoid
\begin{equation}
\Omega_{\Sigma}=\{z\in\mathbb{R}^{2n}:\tfrac{1}{2}\Sigma^{-1}z^{2}\leq1\}
\label{cosig}%
\end{equation}
is a quantum blob.
\end{proposition}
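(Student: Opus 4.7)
The plan is to show that the covariance ellipsoid is literally the image of a ball of radius $\sqrt{\hbar}$ under an element of the inhomogeneous symplectic group, which is the definition of a quantum blob in formula (\ref{blob1}).

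First I would unwind the definitions. Setting $\Sigma^{-1} = (2/\hbar) G$ as in the preceding paragraph, the defining inequality $\tfrac{1}{2}\Sigma^{-1} z^2 \le 1$ for $\Omega_\Sigma$ becomes $G z^2 \le \hbar$. This is the essential reduction: the problem becomes to recognise $\{z : G z^2 \le \hbar\}$ as a symplectic ball.

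Next I would invoke formula (\ref{G}), namely $G = S_{X,Y}^{T} S_{X,Y}$ with
\[
S_{X,Y} = M_{X^{-1/2}} V_{-Y} \in \operatorname{Sp}\nolimits_{0}(n).
\]
Then $G z^2 = |S_{X,Y} z|^2$, so the condition $Gz^2 \le \hbar$ is equivalent to $|S_{X,Y}z| \le \sqrt{\hbar}$, that is $z \in S_{X,Y}^{-1} B^{2n}(\sqrt{\hbar})$. Since $S_{X,Y}^{-1} \in \operatorname{Sp}(n)$, this is exactly a symplectic ball of radius $\sqrt{\hbar}$ centered at the origin, and hence a quantum blob in the sense of (\ref{blob1}) with $z_0 = 0$ and $S = S_{X,Y}^{-1}$.

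For the centered case $\phi_{X,Y}^{z_0} = \widehat{T}(z_0)\phi_{X,Y}$, the Wigner function is just the translate of $W\phi_{X,Y}$, so the covariance matrix $\Sigma$ is unchanged and only the mean shifts to $z_0$; the covariance ellipsoid becomes $T(z_0) S_{X,Y}^{-1} B^{2n}(\sqrt{\hbar})$, again a quantum blob. There is really no obstacle here: the whole content is algebraic bookkeeping, and the one thing to double-check is that the factor $\tfrac{1}{2}\Sigma^{-1} = \tfrac{1}{\hbar} G$ matches the radius $\sqrt{\hbar}$ that appears in the definition of a quantum blob, which it does precisely because of the $\hbar/2$ normalisation chosen when passing from $G$ to $\Sigma$.
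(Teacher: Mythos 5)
Your proof is correct and follows essentially the same route as the paper: reduce to $Gz^2\le\hbar$, use the factorization $G=S_{X,Y}^T S_{X,Y}$ with $S_{X,Y}=M_{X^{-1/2}}V_{-Y}$, and recognise $\Omega_\Sigma=S_{X,Y}^{-1}B^{2n}(\sqrt{\hbar})=V_Y M_{X^{1/2}}B^{2n}(\sqrt{\hbar})$ as a symplectic ball of radius $\sqrt{\hbar}$. The only difference is your extra remark about the translated case, which is harmless but not needed since the covariance ellipsoid (\ref{cosig}) is centered at the origin by definition regardless of $z_0$.
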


\begin{proof}
Since $\Sigma^{-1}=\tfrac{2}{\hbar}G$, the covariance ellipsoid $\Sigma$ is
equivalently determined by the inequality $Gz^{2}\leq\hbar$; in view of the
factorization (\ref{G}) of $G$ we thus have
\begin{equation}
\Omega_{\Sigma}=V_{Y}M_{X^{1/2}}B^{2n}(\sqrt{\hbar})=Q_{V_{Y}M_{X^{1/2}}}%
^{2n}(0). \label{cosigbis}%
\end{equation}

\end{proof}

The ellipsoid (\ref{cosig}) is called the \textit{Wigner ellipsoid} in some texts.

All quantum blobs can be built from the elementary quantum blob $B^{2n}%
(\sqrt{\hbar})$, which is the covariance ellipsoid of the standard coherent
state (\ref{pho1}):

Every quantum blob $Q_{S}^{2n}(z_{0})$ can be generated from the ball
$B^{2n}(\sqrt{\hbar})$ using the local subgroup $\operatorname*{ISp}_{0}(n)$
of $\operatorname*{ISp}(n)$. In fact (Proposition \ref{propsplp}), for every
$S\in\operatorname*{Sp}(n)$ there exist unique $P=P^{T}$, $L=L^{T}$, and
$z_{0}\in\mathbb{R}^{2n}$ such that \
\begin{equation}
Q_{S}^{2n}(z_{0})=T(z_{0})V_{P}M_{L}B^{2n}(\sqrt{\hbar}). \label{blobimp}%
\end{equation}
More generally, it immediately follows from Proposition \ref{propsplp} that:

\begin{proposition}
The group $\operatorname*{Sp}_{0}(n)$ acts transitively on
$\operatorname*{Quant}_{0}(2n)$ and $\operatorname*{ISp}\nolimits_{0}(n)$ acts
transitively on $\operatorname*{Quant}(2n)$. Explicitly, if $S=V_{P}M_{L}$ and
$S^{\prime}=V_{P^{\prime}}M_{L^{\prime}}$ then%
\begin{equation}
Q_{S^{\prime}}^{2n}(z_{0}^{\prime})=S(P,L,P^{\prime},L^{\prime},z_{0}%
,z_{0}^{\prime})Q_{S}^{2n}(z_{0}) \label{qprimeq}%
\end{equation}
with $S(P,L,P^{\prime},L^{\prime},z_{0},z_{0}^{\prime})\in\operatorname*{ISp}%
_{0}(n)$ being given by (\ref{A10}), (\ref{B10}), and (\ref{C11}).
\end{proposition}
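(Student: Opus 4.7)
The plan is to observe that this Proposition is simply the specialization of Proposition \ref{propsplp} to the case $\varepsilon = \sqrt{\hbar}$, together with the observation that quantum blobs are by definition the symplectic balls of this particular radius. So essentially nothing new needs to be proven; one just has to carefully track which statement corresponds to which item in the earlier result.

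First I would recall the definition (\ref{blob1}): a quantum blob $Q_S^{2n}(z_0) = T(z_0) S B^{2n}(\sqrt{\hbar})$ is exactly a symplectic ball of radius $\varepsilon = \sqrt{\hbar}$ in the sense used in Section 4. Thus $\operatorname{Quant}(2n) = \operatorname{SBall}_{\sqrt{\hbar}}(2n)$ and $\operatorname{Quant}_0(2n)$ is the subset of those blobs centered at the origin. Under this identification, everything in Proposition \ref{propsplp} transports verbatim.

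Next I would extract the transitivity claims. Setting $\varepsilon = \sqrt{\hbar}$ in part (i) of Proposition \ref{propsplp} gives, for any $S \in \operatorname{Sp}(n)$ and any $z_0 \in \mathbb{R}^{2n}$, unique $P = P^T$, $L = L^T > 0$ and $z_0$ such that $Q_S^{2n}(z_0) = T(z_0) V_P M_L B^{2n}(\sqrt{\hbar})$; since $T(z_0) V_P M_L \in \operatorname{ISp}_0(n)$, the group $\operatorname{ISp}_0(n)$ maps the elementary quantum blob $B^{2n}(\sqrt{\hbar})$ onto any prescribed quantum blob, which is transitivity on $\operatorname{Quant}(2n)$. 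Specializing further to $z_0 = 0$, the required transformation lies in $\operatorname{Sp}_0(n)$, giving transitivity of $\operatorname{Sp}_0(n)$ on $\operatorname{Quant}_0(2n)$.

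Finally, the explicit formula (\ref{qprimeq}) follows directly from part (ii) of Proposition \ref{propsplp}: writing $S = V_P M_L$ and $S' = V_{P'} M_{L'}$ in their pre-Iwasawa form and applying (\ref{A10})--(\ref{C11}) with $\varepsilon = \sqrt{\hbar}$, one obtains the unique element $S(P, L, P', L', z_0, z_0') = T(z_0' - R z_0) R \in \operatorname{ISp}_0(n)$ with $R = V_{P' - (L'L^{-1})^T P L^{-1} L'} M_{L^{-1} L'}$ sending $Q_S^{2n}(z_0)$ to $Q_{S'}^{2n}(z_0')$. There is no real obstacle here beyond bookkeeping; the only point requiring any care is noting that when one restricts to $z_0 = z_0' = 0$ the translation factor in (\ref{B10}) disappears, so the action genuinely lands in the homogeneous subgroup $\operatorname{Sp}_0(n)$ and not merely in $\operatorname{ISp}_0(n)$, which yields the first transitivity statement cleanly.
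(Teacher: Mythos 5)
Your proposal is correct and takes precisely the route the paper does: the paper states this result as an immediate consequence of Proposition \ref{propsplp} with $\varepsilon=\sqrt{\hbar}$, giving no further proof, and your careful unpacking of the specialization (including the observation that setting $z_0=z_0'=0$ kills the translation and lands the action in $\operatorname*{Sp}_0(n)$) matches the intended argument.
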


We will denote by $\operatorname*{Quant}(2n)$ the set of all quantum blobs in
$\mathbb{R}^{2n}$; the subset of $\operatorname*{Quant}(2n)$ consisting of
quantum blobs $Q_{S}^{2n}(0)$ centered at $0$ will be denoted
$\operatorname*{Quant}_{0}(2n)$. The set $\operatorname*{Quant}(2n)$ plays the
role of a \textit{quantum phase space}. It will be equipped with the topology
induced by the Hausdorff distance (\ref{Haus1}).

\subsection{The correspondence between quantum blobs and Gaussians}

Consider again the standard coherent state (\ref{pho1}):
\[
\phi_{0}(x)=(\pi\hbar)^{-n/4}e^{-|x|^{2}/2\hbar}%
\]
and let $\widehat{S}\in\operatorname*{Mp}(n)$ be a metaplectic operator with
projection
\[
S=%
\begin{pmatrix}
A & B\\
C & D
\end{pmatrix}
\]
on $\operatorname*{Sp}(n)$. One can calculate $\widehat{S}\phi_{0}$ as follows
\cite{coro97,Littlejohn}: one first assumes that $S$ is a free symplectic
matrix (i.e. $\det B\neq0$) so that $\widehat{S}$ is a Fourier integral
operator (\ref{swm}); a tedious but straightforward calculation of Gaussian
integrals then yields the explicit formula
\begin{equation}
\widehat{S}\phi_{0}(x)=(\pi\hbar)^{-n/4}K\exp\left(  \frac{i}{2\hbar}\Gamma
x^{2}\right)  \label{fimeta}%
\end{equation}
where $K$ and $\Gamma$ are defined by
\begin{equation}
K=(\det(A+iB))^{-1/2}\text{ and }\Gamma=(C+iD)(A+iB)^{-1}; \label{fimetabis}%
\end{equation}
the argument of $\det(A+iB)\neq0$ depends on the choice of the operator
$\widehat{S}\in\operatorname*{Mp}(n)$ with projection $S\in\operatorname*{Sp}%
(n)$ (there are several ways of proving that $A+iB$ is invertible and that
$\Gamma$ is symmetric; see for instance \cite{coro06,Birkbis,Littlejohn}).
Now, this can be considerably simplified if one uses local metaplectic
operators. We begin by remarking that if $S=U\in U(n)$ then we have%
\[
U=%
\begin{pmatrix}
X & Y\\
-Y & X
\end{pmatrix}
,
\]
the blocks $X$ and $Y$ satisfying the conditions (\ref{xxyy1}), (\ref{xxyy2}).
It follows that $(C+iD)(A+iB)^{-1}=-i$ and, since $X+iY\in U(n,\mathbb{C})$
that $|\det(A+iB)|=|\det(X+iY)|=1$. Formulas (\ref{fimeta})--(\ref{fimetabis})
thus lead to
\begin{equation}
\widehat{U}\phi_{0}=i^{\gamma}\phi_{0} \label{ufo}%
\end{equation}
where $\gamma$ is a real phase associated to a choice of argument of
$\det(X+iY)$. The standard Gaussian $\phi_{0}$ is thus an eigenfunction of
every metaplectic operator arising from a symplectic rotation. This
observation allows a considerable simplification in the derivation of formula
(\ref{fimeta}):

\begin{proposition}
Let $\widehat{S}\in\operatorname*{Mp}(n)$ have projection $S=V_{-P}M_{L}U$ on
$\operatorname*{Sp}(n)$ (pre-Iwasawa factorization). Then%
\begin{equation}
\widehat{S}\phi_{0}=i^{\gamma}\widehat{V}_{-P}\widehat{M}_{L,m}\phi_{0}
\label{sifo1}%
\end{equation}
with $m\in\{0,2\}$, that is, explicitly,
\begin{equation}
\widehat{S}\phi_{0}(x)=\frac{i^{m+\gamma}}{(\pi\hbar)^{n/4}}\sqrt{\det
L}e^{-\frac{1}{2\hbar}(iP-L^{2})x^{2}}\exp\left(  -\frac{1}{2\hbar}%
(iP-L^{2})x^{2}\right)  \label{sifo2}%
\end{equation}
with $P=P^{T}$, $L=L^{T}>0$ being given by
\begin{align}
P  &  =(CA^{T}+DB^{T}-I_{\mathrm{d}})(AA^{T}+BB^{T})^{-1}\label{P}\\
L  &  =(AA^{T}+BB^{T})^{-1/2}. \label{L}%
\end{align}

\end{proposition}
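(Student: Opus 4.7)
The plan is to reduce the action of $\widehat{S}$ on $\phi_{0}$ to the action of two elementary local metaplectic operators, using the pre-Iwasawa factorization to absorb the symplectic rotation into a trivial phase. The key leverage is that the standard Gaussian $\phi_{0}$ is (up to phase) an eigenfunction of every $\widehat{U}\in\operatorname*{Mp}(n)$ whose projection $U$ lies in $U(n)$: this is precisely the content of formula (\ref{ufo}), which was derived a few lines above the statement from the explicit Fourier-integral representation (\ref{fimeta})--(\ref{fimetabis}) together with the unitarity conditions (\ref{xxyy1})--(\ref{xxyy2}).

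First, I would apply the pre-Iwasawa factorization from Section \ref{seciwa} to write $S=V_{-P}M_{L}U$ with $P=P^{T}$, $L=L^{T}>0$, and $U\in U(n)$, where $P,L$ are given by formulas (\ref{pl1})--(\ref{pl2}) in terms of the blocks $A,B,C,D$ of $S$. Since $\pi^{\operatorname*{Mp}}\colon\operatorname*{Mp}(n)\to\operatorname*{Sp}(n)$ is a group epimorphism, this factorization lifts to the metaplectic level up to a global sign:
\[
\widehat{S}=\widehat{V}_{-P}\,\widehat{M}_{L,m}\,\widehat{U}
\]
for suitable choices of Maslov index $m\in\{0,2\}$ (coming from $\arg\det L$) and of the metaplectic lift $\widehat{U}$ of $U$.

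Second, I would invoke (\ref{ufo}): $\widehat{U}\phi_{0}=i^{\gamma}\phi_{0}$ for a phase $\gamma$ tied to the argument of $\det(X+iY)$. This absorbs the only nontrivial factor and reduces the problem to evaluating $\widehat{V}_{-P}\widehat{M}_{L,m}\phi_{0}$, which is a direct application of the two elementary formulas in the table of Section \ref{secmet}. Namely, since $L=L^{T}>0$ one has $|Lx|^{2}=L^{2}x\cdot x$, so
\[
\widehat{M}_{L,m}\phi_{0}(x)=i^{m}\sqrt{\det L}\,(\pi\hbar)^{-n/4}\exp\!\left(-\tfrac{1}{2\hbar}L^{2}x\cdot x\right),
\]
and then multiplication by the quadratic exponential coming from $\widehat{V}_{-P}$ produces the claimed Gaussian, with the total phase $i^{m+\gamma}$ combining the Maslov contribution with the rotation phase.

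The main obstacle is bookkeeping rather than mathematical depth: one must keep track of two different conventions (the symplectic-block convention for $V_{-P}$, $M_{L}$ in Section \ref{seclocal} and the metaplectic convention in the table of Section \ref{secmet}) and must verify that the Maslov index $m$ can indeed be taken in $\{0,2\}$, which follows from $L>0$ so that $\det L>0$ and $\arg\det L\equiv 0\pmod\pi$ can be chosen to lift to either of the two metaplectic operators above $M_{L}$. Once these conventions are aligned, formulas (\ref{P})--(\ref{L}) for $P,L$ are simply a transcription of (\ref{pl1})--(\ref{pl2}) for the pre-Iwasawa factorization, and (\ref{sifo1})--(\ref{sifo2}) follow by direct substitution.
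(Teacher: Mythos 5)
Your proposal reproduces the paper's own argument: pre-Iwasawa factorize $S=V_{-P}M_{L}U$, lift to $\widehat{S}=\widehat{V}_{-P}\widehat{M}_{L,m}\widehat{U}$, invoke the eigenfunction property (\ref{ufo}) to absorb $\widehat{U}$ into a phase, and then evaluate $\widehat{V}_{-P}\widehat{M}_{L,m}\phi_{0}$ directly from the table in Section \ref{secmet}. One small caveat: you describe (\ref{P})--(\ref{L}) as a direct transcription of (\ref{pl1})--(\ref{pl2}), but (\ref{P}) as printed carries an extra $-I_{\mathrm{d}}$ not present in (\ref{pl1}) — this appears to be a typo in the statement (the paper's own proof also cites (\ref{pl1})), but it is worth flagging rather than asserting literal agreement.
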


\begin{proof}
We have $\widehat{S}=\widehat{V}_{-P}\widehat{M}_{L,m}\widehat{U}$ with
$m\in\{0,2\}$ (because $\det L>0$). In view of formula (\ref{ufo}) we have%
\[
\widehat{S}\phi_{0}=i^{\gamma}\widehat{V}_{-P}\widehat{M}_{L,m}\phi_{0}%
\]
which is (\ref{sifo1}). Formula (\ref{sifo2}) follows using the expressions
(\ref{pl1}) and (\ref{pl2}) for the matrices $P$ and $L$.
\end{proof}

It requires a modest number of matrix calculations to verify that
(\ref{sifo2}) is equivalent to (\ref{fimeta}). The argument goes as follows:
one rewrites $\Gamma$ in (\ref{fimeta}) as
\[
\Gamma=(C+iD)(A^{T}-iB^{T})\left[  (A+iB)(A^{T}-iB^{T})\right]  ^{-1}%
\]
and one expands the products taking into account the relations (\ref{1cond}%
)--(\ref{2cond}) satisfied by the matrices $A,B,C,D$, which leads to
\[
\Gamma=(C+iD)(A^{T}-iB^{T})\left[  (A+iB)(A^{T}-iB^{T})\right]  ^{-1}.
\]
We leave the computational details to the reader.

The result above allows us to prove that there is a natural bijection%
\[
\operatorname*{Quant}(2n)\overset{\approx}{\longleftrightarrow}%
\operatorname*{Gauss}(n)
\]
allowing to construct a commutative diagram%
\[%
\begin{array}
[c]{ccc}%
\operatorname*{IMp}\nolimits_{0}(n)\times\operatorname*{Gauss}(n) &
\longrightarrow & \operatorname*{Gauss}(n)\\
\downarrow &  & \downarrow\\
\operatorname*{ISp}\nolimits_{0}(n)\times\operatorname*{Quant}(2n) &
\longrightarrow & \operatorname*{Quant}(2n).
\end{array}
\]
Let us study these properties in detail. We begin by proving the
correspondence between quantum blobs and Gaussians.

\begin{proposition}
The natural mapping%
\begin{equation}
Q_{S}^{2n}(z_{0})\longmapsto\phi_{X,Y}^{z_{0}}=\widehat{T}(z_{0}%
)\widehat{S}\phi_{0}, \label{Q2nfi}%
\end{equation}
where $T(z_{0})S$ is the projection of $\widehat{T}(z_{0})\widehat{S}%
\in\operatorname*{IMp}_{0}(n)$ on $\in\operatorname*{ISp}_{0}(n)$ is a
bijection
\begin{equation}
\operatorname*{Quant}(2n)\longrightarrow\operatorname*{Gauss}(n).
\label{bijection}%
\end{equation}

\end{proposition}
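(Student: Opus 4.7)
The plan is to exhibit an explicit two-sided inverse by exploiting the pre-Iwasawa factorization together with the phase identity $\widehat{U}\phi_{0}=i^{\gamma}\phi_{0}$ for symplectic rotations (formula (\ref{ufo})), and the explicit formula (\ref{sifo2}) for $\widehat{S}\phi_{0}$ when $S\in\operatorname{Sp}_{0}(n)$.

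First I would check that the map is well-defined. The quantum blob $Q_{S}^{2n}(z_{0})=T(z_{0})SB^{2n}(\sqrt{\hbar})$ is unchanged when $S$ is replaced by $SU$ with $U\in U(n)$, because $U$ stabilizes the ball $B^{2n}(\sqrt{\hbar})$. On the metaplectic side, (\ref{ufo}) gives $\widehat{SU}\phi_{0}=\widehat{S}\widehat{U}\phi_{0}=i^{\gamma}\widehat{S}\phi_{0}$, so the image Gaussian $\widehat{T}(z_{0})\widehat{S}\phi_{0}$ changes only by a global phase. Since $\operatorname{Gauss}(n)$ is parametrized by the triple $(X,Y,z_{0})$ (phases absorbed), the assignment descends to a well-defined map $\operatorname{Quant}(2n)\to\operatorname{Gauss}(n)$.

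Next I would prove surjectivity by constructing a preimage. Given $\phi_{X,Y}^{z_{0}}\in\operatorname{Gauss}(n)$ with $X>0$ and $Y=Y^{T}$, set $P:=-Y$ and $L:=X^{1/2}>0$, and take $S:=V_{-P}M_{L}\in\operatorname{Sp}_{0}(n)$. Formula (\ref{sifo2}) then yields, up to an overall phase,
\[
\widehat{S}\phi_{0}(x)=\tfrac{1}{(\pi\hbar)^{n/4}}(\det X)^{1/4}\exp\!\Big(-\tfrac{1}{2\hbar}(X+iY)x^{2}\Big)=\phi_{X,Y}(x),
\]
so $\widehat{T}(z_{0})\widehat{S}\phi_{0}=\phi_{X,Y}^{z_{0}}$ and the quantum blob $Q_{V_{-P}M_{L}}^{2n}(z_{0})$ is a preimage. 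For injectivity, suppose $Q_{S_{1}}^{2n}(z_{1})$ and $Q_{S_{2}}^{2n}(z_{2})$ are mapped to the same Gaussian $\phi_{X,Y}^{z_{0}}$. Comparing centers of the Wigner transforms forces $z_{1}=z_{2}=z_{0}$. By Proposition \ref{propsplp}(i) each blob admits a unique factorization $Q_{S_{j}}^{2n}(z_{0})=T(z_{0})V_{-P_{j}}M_{L_{j}}B^{2n}(\sqrt{\hbar})$, and combining the pre-Iwasawa decomposition $S_{j}=V_{-P_{j}}M_{L_{j}}U_{j}$ with (\ref{ufo}) shows that the image Gaussian is (up to phase) $\phi_{L_{j}^{2},-P_{j}}^{z_{0}}$. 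Matching parameters gives $L_{j}^{2}=X$ and $-P_{j}=Y$ for $j=1,2$, so $(P_{1},L_{1})=(P_{2},L_{2})$ and the two quantum blobs coincide.

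The main obstacle is purely bookkeeping: one must keep track of the sign ambiguity of the metaplectic lift $\pm\widehat{S}$ and the Maslov index $m\in\{0,2\}$ in $\widehat{M}_{L,m}$ so that the notion of equivalence on $\operatorname{Gauss}(n)$ (modulo phase) matches the redundancy $S\mapsto SU$ on the blob side. Once this identification is fixed, the bijection is a direct consequence of (\ref{sifo2}) and the uniqueness clause of Proposition \ref{propsplp}(i); no further analysis is required.
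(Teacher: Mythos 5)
Your proof is correct and takes essentially the same route as the paper's, which relies on the uniqueness of the pre-Iwasawa representation $T(z_{0})V_{P}M_{L}$ of a quantum blob (Proposition \ref{propsplp}(i)) for well-definedness, and the observation that the Gaussian $\phi_{X,Y}^{z_{0}}$ unambiguously determines $\widehat{T}(z_{0})\widehat{S}$ (and hence $T(z_{0})S$) for bijectivity. You simply make the inverse map explicit by reading off $L=X^{1/2}$, $P=-Y$ from formula (\ref{sifo2}), which is a welcome but not essentially different elaboration of the paper's two-sentence argument.
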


\begin{proof}
Since $T(z_{0})S$ is uniquely determined the mapping (\ref{bijection}) is
well-defined. To prove that it it is a bijection it suffices to note that the
relation $\phi_{X,Y}^{z_{0}}=\widehat{T}(z_{0})\widehat{S}\phi_{0}$
unambiguously determines $\widehat{T}(z_{0})\widehat{S}$ and hence also
$T(z_{0})S$.
\end{proof}

The following statement is the quantum analogue of part \textit{(ii)}
of\textit{ }Proposition \ref{propsplp}.

\begin{proposition}
\label{Thm3}The local inhomogeneous metaplectic group $\operatorname*{IMp}%
_{0}(n)$ acts transitively on the Gaussian phase space $\operatorname*{Gauss}%
(n)$. In fact, for any two Gaussians $\phi_{X,Y}^{z_{0}}$ and $\phi
_{X^{\prime},Y^{\prime}}^{z_{0}^{\prime}}$ we have%
\begin{equation}
\phi_{X^{\prime},Y^{\prime}}^{z_{0}^{\prime}}=e^{\tfrac{i}{\hbar}\chi
(z_{0},z_{0}^{\prime})}\widehat{T}(z_{0}^{\prime\prime})\widehat{V}%
_{P^{\prime\prime}}\widehat{M}_{L^{\prime\prime},0}\phi_{X,Y}^{z_{0}}
\label{transit1}%
\end{equation}
with%
\begin{align}
\chi(z_{0},z_{0}^{\prime})  &  =\tfrac{1}{2}\sigma(z_{0}^{\prime}%
,-Rz_{0})\label{transit2}\\
L^{\prime\prime}  &  =X^{-1/2}X^{\prime1/2}\text{ \ , \ }P^{\prime\prime
}=Y^{\prime}-L^{\prime\prime}Y(L^{\prime\prime})^{T}\label{transit3}\\
z_{0}^{\prime\prime}  &  =z_{0}^{\prime}-V_{P^{\prime\prime}}M_{L^{\prime
\prime}}z_{0}. \label{transit4}%
\end{align}

\end{proposition}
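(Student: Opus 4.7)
The plan is to realise both Gaussians as images of the standard coherent state $\phi_0$ under elements of the local inhomogeneous metaplectic group, and then to use the multiplication rules of that group to normalise the resulting composition into the prescribed form $\widehat T(z_0'')\widehat V_{P''}\widehat M_{L'',0}$ modulo a single Weyl phase.

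First I would establish the representation
\[
\phi_{X,Y}=\widehat{V}_{Y}\,\widehat{M}_{X^{1/2},0}\,\phi_{0}
\]
by a direct computation: applying the operators from the table of Section \ref{secmet} to $\phi_0(x)=(\pi\hbar)^{-n/4}e^{-|x|^2/2\hbar}$ and comparing with (\ref{fxy1}) yields the correct amplitude $(\det X)^{1/4}$ and fixes the Maslov branch $m=0$. Composing with a Heisenberg--Weyl translation then gives $\phi_{X,Y}^{z_0}=\widehat T(z_0)\widehat V_Y\widehat M_{X^{1/2},0}\phi_0$ and similarly for the primed Gaussian, so the natural candidate for the intertwiner is
\[
\widehat A=\widehat T(z_0')\,\widehat V_{Y'}\,\widehat M_{X'^{1/2},0}\bigl(\widehat T(z_0)\,\widehat V_{Y}\,\widehat M_{X^{1/2},0}\bigr)^{-1},
\]
which manifestly lies in $\operatorname{IMp}\nolimits_0(n)$.

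Second, I would rewrite $\widehat A$ in canonical form using the multiplication rules of $\operatorname{IMp}\nolimits_0(n)$. The dilation product rule $\widehat M_{L_1,0}\widehat M_{L_2,0}=\widehat M_{L_2L_1,0}$ collapses the two inner dilations into a single $\widehat M_{L'',0}$ with $L''=X^{-1/2}X'^{1/2}$, yielding the first identity of (\ref{transit3}). The shear commutation (\ref{MLVP}) then pushes the inner $\widehat V_{-Y}$ through $\widehat M_{L'',0}$, and combining the result with the outer $\widehat V_{Y'}$ via $\widehat V_{P_1}\widehat V_{P_2}=\widehat V_{P_1+P_2}$ merges both shears into a single $\widehat V_{P''}$, with $P''$ expressible as a similarity transform of $Y$ subtracted from $Y'$, giving the second identity of (\ref{transit3}). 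Finally, pulling the remaining $\widehat T(-z_0)$ past $\widehat V_{P''}\widehat M_{L'',0}$ by means of the conjugation identity (\ref{dzo}) produces a translation by $-V_{P''}M_{L''}z_0=-Rz_0$; fusing it with $\widehat T(z_0')$ through the Weyl relation (\ref{tzotzo}) delivers simultaneously the centre $z_0''=z_0'-Rz_0$ of (\ref{transit4}) and the cocycle phase $\exp\!\bigl(\tfrac{i}{2\hbar}\sigma(z_0',-Rz_0)\bigr)=\exp\!\bigl(\tfrac{i}{\hbar}\chi(z_0,z_0')\bigr)$ of (\ref{transit2}).

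The principal obstacle will be the careful bookkeeping of phases and branch choices: metaplectic operators are defined only up to a sign, the shear/dilation commutations induce similarity transformations of the $P$-parameters, and every concatenation of translations costs a Weyl cocycle. The cleanest route is to fix the Maslov index $m=0$ throughout, which is legitimate because $\det L''=(\det X)^{-1/2}(\det X')^{1/2}>0$, and to work with the positive square roots $X^{1/2}$, $X'^{1/2}$ from the outset, so that all intermediate sign ambiguities cancel and only the single geometric phase $\chi(z_0,z_0')$ survives.
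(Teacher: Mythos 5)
Your proposal is correct and follows essentially the same route as the paper's own argument: realise $\phi_{X,Y}^{z_0}=\widehat T(z_0)\widehat V_Y\widehat M_{X^{1/2},0}\phi_0$, form the intertwiner $\widehat T(z_0')\widehat S'\widehat S^{-1}\widehat T(-z_0)$, and normalise it using the multiplication rules of $\operatorname{IMp}_0(n)$ together with the Weyl cocycle. The only difference is cosmetic: the paper invokes the precomputed product formula (\ref{XINVY}) where you re-derive it on the spot from the shear/dilation commutations.
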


\begin{proof}
We have $\phi_{X,Y}^{z_{0}}=\widehat{T}(z_{0})\widehat{S}\phi_{0}$ and
$\phi_{X^{\prime},Y^{\prime}}^{z_{0}^{\prime}}=\widehat{T}(z_{0}^{\prime
})\widehat{S}^{\prime}\phi_{0}$ with $\widehat{S}=\widehat{V}_{Y}%
\widehat{M}_{X^{1/2},0}$ and $\widehat{S}^{\prime}=\widehat{V}_{Y^{\prime}%
}\widehat{M}_{X^{\prime1/2},0}$; hence
\[
\phi_{X^{\prime},Y^{\prime}}^{z_{0}^{\prime}}=\widehat{T}(z_{0}^{\prime
})\widehat{S}^{\prime}(\widehat{T}(z_{0})\widehat{S})^{-1}\phi_{X,Y}^{z_{0}%
}=\widehat{T}(z_{0}^{\prime})\widehat{S}^{\prime}\widehat{S}^{-1}%
\widehat{T}(-z_{0})\phi_{X,Y}^{z_{0}}.
\]
Using successively formulas (\ref{dzo}) and (\ref{tzotzo}) we have%
\[
\widehat{T}(z_{0}^{\prime})\widehat{S}^{\prime}\widehat{S}^{-1}\widehat{T}%
(-z_{0})=e^{\tfrac{i}{2\hslash}\sigma(z_{0}^{\prime},-S^{\prime}S^{-1}z_{0}%
)}\widehat{T}(z_{0}^{\prime}-S^{\prime}S^{-1}z_{0})\widehat{S}^{\prime
}\widehat{S}^{-1}.
\]
Using formulas (\ref{xinvy}) and (\ref{XINVY}) with $P=Y$, $P^{\prime
}=Y^{\prime}$, $L=X^{1/2}$, and $L^{\prime}=X^{\prime1/2}$, we get
\[
\widehat{S^{\prime}}\widehat{S}^{-1}=\widehat{V}_{Y^{\prime}-(X^{-1/2}%
X^{\prime1/2})^{T}Y(X^{-1/2}X^{\prime1/2})}\widehat{M}_{X^{-1/2}X^{\prime
1/2},0}%
\]
and the projection of $\widehat{S^{\prime}}\widehat{S}^{-1}$ on
$\operatorname*{ISp}\nolimits_{0}(n)$ is
\[
S^{\prime}S^{-1}=V_{Y^{\prime}-(X^{-1/2}X^{\prime1/2})^{T}Y(X^{-1/2}%
X^{\prime1/2})}M_{X^{-1/2}X^{\prime1/2}},
\]
hence formulas (\ref{transit2}) and (\ref{transit3}).
\end{proof}

\section{Discussion and Perspectives}

The theory of \textquotedblleft chalkboard motion\textquotedblright\ we have
outlined in this paper might have applications to several important topics in
mathematics and mathematical physics. In particular:

\begin{itemize}
\item \textit{Celestial mechanics}: recent work of Scheeres and collaborators
shown the important role played by techniques from symplectic topology in
guidance and control theory (see for instance \cite{sch} which uses symplectic
capacities to study spacecraft trajectory uncertainty). The approach outlined
in the present work could certainly be used with success in analyzing
planetary motions since we do not have to solve directly complicated Hamilton
equations arising in, say, the many-body problem, but rather control the
trajectories at every step. Numerical algorithms such as symplectic
integrators could certainly be easily implemented here following the work of
Feng and Qin \cite{feng} or Wang \cite{Wang};

\item \textit{Entropy}: in \cite{kalo2,kalo1} Kalogeropoulos applies
techniques for symplectic topology (the non-squeezing theorem) to the study of
various notions of entropy in the context of thermodynamics; this approach
seems to be very promising; we will return in a future work to the
applications of chalkboard motion to these important questions where
coarse-graining methods have played historically an important role;

\item \textit{Semiclassical methods}: the nearby orbit method we have used to
study chalkboards motions both from a classical and quantum perspective
originate from robust techniques which have been used for decades in
semiclassical mechanics to approximate non-linear motions;

\item \textit{Collective motions}: An interesting property of Hamiltonian
symplectomorphisms due to Boothby \cite{Boothby} is \textquotedblleft$N$-fold
transitivity\textquotedblright: given two arbitrary sets $\{z_{1},...,z_{N}\}$
and $\{z_{1}^{\prime},...,z_{N}^{\prime}\}$ of $N$ distinct points in
$\mathbb{R}^{2n}$, Boothby proved that there exists $f\in\operatorname*{Ham}%
(n)$ such that $z_{j}^{\prime}=f(z_{j})$ for every $j\in\{1,...,N\}$. A
natural question would then be \textquotedblleft given $N$ disjoint symplectic
balls $B_{S}^{2n}(z_{1},\varepsilon),...,B_{S}^{2n}(z_{N},\varepsilon)$ at
time $t=0$ can we find a Hamiltonian flow taking these balls to a new
configuration of disjoint symplectic balls $B_{S^{\prime}}^{2n}(z_{1}^{\prime
},\varepsilon),...,B_{S^{\prime}}^{2n}(z_{N}^{\prime},\varepsilon)$ at some
later time $t$? The difficulty here comes from the fact that in the course of
this collective motion the ellipsoids might \textquotedblleft
collide\textquotedblright\ and have a non-empty intersection: conservation of
symplectic capacity (and even volume) has nothing to do with conservation of
shape, so two adjacent initially non-intersecting ellipsoids might very well
intersect after a while, being stretched and sheared. So an answer to this
question might require strong limitations on the type of chalkboard motion we
can choose;

\item \textit{Study of subsystems of a Hamiltonian system}: Proposition
\ref{PropAB} says that the orthogonal projection of a symplectic phase space
ball on a phase space with a smaller dimension also contains a symplectic ball
with the same radius. In the quantum case, these symplectic balls are just
quantum blobs. This projection result is the key to the applications of the
theory of chalkboard motion to subsystems. We are currently investigating the
topic; for some preliminary results see our preprint \cite{goAB}.
\end{itemize}

\begin{acknowledgement}
This work has been supported by the Grants P23902-N13 and P27773-N25 of the
Austrian Research Foundation FWF.
\end{acknowledgement}

\begin{acknowledgement}
The author would like to thank Glen Dennis and Basil Hiley (UCL), and Leonid
Polterovich (Tel Aviv) for constructive criticism and encouragement.
\end{acknowledgement}


\begin{thebibliography}{99}                                                                                               %


\bibitem {abbo}A. Abbondandolo and R. Matveyev, How large is the shadow of a
symplectic ball?, \textit{J. Topol. Anal.} 5(01), 87--119 (2013)

\bibitem {abbobis}A. Abbondandolo and P. Majer, A non-squeezing theorem for
convex symplectic images of the Hilbert ball, \textit{Calc. Var.} 54,
1469--1506 (2015)

\bibitem {abbonew}A. Abbondandolo and G. Benedetti. On the local systolic
optimality of Zoll contact forms, arXiv:1912.04187 [math.SG] (2019)

\bibitem {Arnold}V.I. Arnol'd, \textit{Mathematical Methods of Classical
Mechanics}, Graduate Texts in Mathematics, 2nd edition, Springer-Verlag, 1989

\bibitem {Ball}K.M. Ball, Ellipsoids of maximal volume in convex
bodies,\ \textit{Geom. Dedicata }41(2), 241--250 (1992)

\bibitem {Banyaga}A. Banyaga, Sur la structure du groupe des
diff\'{e}omorphismes qui pr\'{e}servent une forme symplectique, \textit{Comm.
Math. Helv}. 53, 174--227 (1978)

\bibitem {epi}S.D. Bartlett, T. Rudolph, and R.W. Spekkens, Reconstruction of
Gaussian quantum mechanics from Liouville mechanics with an epistemic
restriction, \textit{Phys. Rev. A }86(1), 012103 (2012)

\bibitem {Bastiaans}M.J. Bastiaans, Wigner distribution function and its
application to first-order optics, \textit{JOSA}, 69(12), 1710--1716 (1979)

\bibitem {bates}S.M. Bates, Some simple continuity properties of symplectic
capacities, \textit{The Floer memorial volume}, Progr. Math., vol. 133,
Birkh\"{a}user, Basel, 185--193, 1995

\bibitem {bera}M. Benzi and N. Razouk, On the Iwasawa decomposition of a
symplectic matrix, \textit{Appl. Math. Lett.} 20, 260--265 (2007)

\bibitem {Boothby}W.M. Boothby, Transitivity of the automorphisms of certain
geometric structures. \textit{Trans. Amer. Math. Soc.} 137, 93--100 (1969)

\bibitem {burdet}G. Burdet, M. Perrin, and M. Perroud, Generating Functions
for the Affine Symplectic Group, \textit{Commun. math. Phys}. 58, 241--254 (1978)

\bibitem {coro97}M. Combescure and D. Robert, Semiclassical spreading of
quantum wavepackets and applications near unstable fixed points of the
classical flow, \textit{Asymptot. Anal.} 14, 377--407 (1997)

\bibitem {coro06}M. Combescure and D. Robert, Quadratic Quantum Hamiltonians
revisited. \textit{A Mathematical Journal Universidad de La Frontera} 8(1), 61 (2006)

\bibitem {coro12}M. Combescure and D. Robert, Coherent states and applications
in mathematical physics, Springer Science \& Business Media, 2012

\bibitem {curtright}T.L. Curtright, D.B. Fairlie, and C.K. Zachos, \textit{A
concise treatise on quantum mechanics in phase space}, World Scientific
Publishing Company, 2013

\bibitem {digopr14}N. Dias, M. de Gosson, and J. Prata, Maximal covariance
group of Wigner transforms and pseudo-differential operators. \textit{Proc.
Amer. Math. Soc}. 142(9), 3183--3192 (2014)

\bibitem {DiGoPra19}N.C. Dias, M. de Gosson, and J.N. Prata. On Orthogonal
Projections of Symplectic Balls (2019) arXiv:1911.03763v1 [math.SG].

\bibitem {Dutta}B. Dutta, N. Mukunda, and R. Simon, The real symplectic groups
in quantum mechanics and optics, \textit{Pramana} 45(6), 471--497 (1995)

\bibitem {ekho1}I. Ekeland and H. Hofer, Symplectic topology and Hamiltonian
dynamics,\textit{ Math. Zeit.} 200(3), 355--378 (1989)

\bibitem {federer}H. Federer, \textit{Geometric Measure Theory}, Springer,
series Die Grundlehren der mathematischen Wissenschaften, Band 153, New York:
Springer-Verlag New York 1969

\bibitem {feng}K. Feng and M. Qin, \textit{Symplectic geometric algorithms for
Hamiltonian systems}, Berlin: Springer (2010)

\bibitem {ga18}J.-P. Gazeau, From classical to quantum models: the
regularizing role of integrals, symmetry and probabilities,
\textit{arXiv:1801.02604v2 [quant-ph]} [preprint 2018]

\bibitem {golett}M. de Gosson, Phase space quantization and the uncertainty
principle, \textit{Phys. Lett.} A 317, 5--6, 365--369 (2003)

\bibitem {optimal}M. de Gosson, The optimal pure Gaussian state canonically
associated to a Gaussian quantum state. \textit{Phys. Lett. A} 330(3--4) (2004)

\bibitem {Birk}M. de Gosson, \textit{Symplectic Geometry and Quantum
Mechanics}, Birkh\"{a}user, Basel, 2006

\bibitem {FOOP}M. de Gosson, The Symplectic Camel and the Uncertainty
Principle: The Tip of an Iceberg? \textit{Found. Phys.} 39(2), 194--214 (2009)

\bibitem {stat}M. de Gosson, On the Use of Minimum Volume Ellipsoids and
Symplectic Capacities for Studying Classical Uncertainties for Joint
Position--Momentum Measurements. \textit{J. Stat. Mech.}: \textit{Theory and
Experiment }11,\textit{ }P11005 (2010)

\bibitem {Birkbis}M. de Gosson, \textit{Symplectic Methods in Harmonic
Analysis and in Mathematical Physics}, Birkh\"{a}user, 2011

\bibitem {gohi1}M. de Gosson and B. Hiley, Imprints of the Quantum World in
Classical Mechanics. \textit{Found. Phys}. 41(9) (2011)

\bibitem {blobs}M. de Gosson, Quantum blobs, \textit{Found. Phys.} 43(4), 440--457\ (2013)

\bibitem {ACHA}M. de Gosson, Hamiltonian deformations of Gabor frames: First
steps. \textit{Appl. Comput. Harmon. Anal.} 38(2), 196--221 (2014)

\bibitem {RMP}M. de Gosson, Paths of Canonical Transformations and their
Quantization. \textit{Rev. Math. Phys. }27(6)\textit{, }1530003 (2015)

\bibitem {gohico}M. de Gosson, B. Hiley, and E. Cohen, Observing Quantum
Trajectories: From Mott's Problem to Quantum Zeno Effect and Back.
\textit{Ann. Phys.} 374, 190--211 (2016)

\bibitem {Wigner}M. de Gosson, \textit{The Wigner Transform}, World
Scientific, Series: Advanced Texts in mathematics, 2017

\bibitem {goAB}M. de Gosson, Symplectic Coarse-Grained Classical and
Semclassical Evolution of Subsystems: New Theoretical Aspects,
arXiv:2002.06641v1 [quant-ph] (2020)

\bibitem {goluPR}M. de Gosson and F. Luef, Symplectic Capacities and the
Geometry of Uncertainty: the Irruption of Symplectic Topology in Classical and
Quantum Mechanics. \textit{Phys. Rep}. 484, 131--179 (2009)

\bibitem {Gromov}M. Gromov, Pseudoholomorphic curves in symplectic manifolds.
\textit{Inv. Math.} 82(2), 307--347 (1985)

\bibitem {turc1}O. G\"{u}ler and F. G\"{u}rtuna, The extremal volume
ellipsoids of convex bodies, their symmetry properties, and their
determination in some special cases, arXiv preprint arXiv:0709.0707 (2007)

\bibitem {turc2}O. G\"{u}ler and F. G\"{u}rtuna, Symmetry of convex sets and
its applications to the extremal ellipsoids of convex bodies, Optimization
Methods and Software 27(4-5), 735--759 (2012)

\bibitem {haase}J.F. Haase, A. Smirne, J. Ko\l odynski, R.
Demkowicz-Dobrzanski, and S.F. Huelga, Precision Limits in Quantum Metrology
with Open Quantum Systems,\textit{ Quantum Meas. Quantum Metrol}. 5, 13--39 (2018)

\bibitem {hado1}G. Hagedorn, Semiclassical quantum mechanics III, \textit{Ann.
Phys}. 135, 58--70 (1981)

\bibitem {hado2}G. Hagedorn, Semiclassical quantum mechanics IV, \textit{Ann.
Inst. H. Poincar\'{e}} 42, 363--374 (1985)

\bibitem {heller}E.J. Heller, Time-dependent approach to semiclassical
dynamics, \textit{J. Chem. Phys}. 62(4), 1544--1555 (1975)

\bibitem {HZ}H. Hofer and E.H. Zehnder, \textit{Symplectic Invariants and
Hamiltonian Dynamics}, Birkh\"{a}user Advanced texts (Basler Lehrb\"{u}cher),
Birkh\"{a}user Verlag, 1994

\bibitem {hong}Hong Qin, A necessary and sufficient condition for the
stability of linear Hamiltonian systems with periodic coefficients,
arXiv:1810.03971 [math-ph] (2018)

\bibitem {huheli}D. Huber, E.J. Heller, and R.G. Littlejohn, Generalized
Gaussian wave packet dynamics, Schr\"{o}dinger equation, and stationary phase
approximation, \textit{J. Chem. Phys.} 89, 2003 (1988)

\bibitem {jo48}F. John, Extremum problems with inequalities as subsidiary
conditions, \textit{Studies and Essays Presented to R. Courant on his 60th
Birthday}, January 8, 1948, Interscience Publishers, Inc., New York, N.Y.
187--204 (1948)

\bibitem {kalo2}N. Kalogeropoulos, Entropies from coarse-graining: convex
polytopes vs. ellipsoids, Entropy 17(9), 6329--6378 (2015)

\bibitem {kalo1}N. Kalogeropoulos, Time irreversibility from symplectic
non-squeezing, \textit{Physica A: Statistical Mechanics and its Applications}
495, 202--210 (2018)

\bibitem {Katok}A. Katok, Ergodic perturbations of degenerate integrable
Hamiltonian systems. \textit{Izv. Akad. Nauk. SSSR} \textit{Ser. Mat.} 37,
539--576 (1973) [Russian]. English translation: \textit{Math. USSR izvestija}
7, 535--571 (1973)

\bibitem {kumar}P. Kumar and E.A. Yildirim, Minimum-Volume Enclosing
Ellipsoids and Core Sets, \textit{J. Optim. Theory Appl.} 126(1), 1--21 (2005)

\bibitem {Littlejohn}R.G. Littlejohn, The semiclassical evolution of wave
packets, \textit{Phys. Rep.} 138(4--5), 193--291 (1986)

\bibitem {duff}D. McDuff and D. Salamon, \textit{Introduction to Symplectic
Topology}, 2nd ed. Oxford, UK: Oxford University Press, 1998

\bibitem {Olivares}S. Olivares, Quantum optics in the phase space,\textit{
Eur. Phys. J. Special Topics} 203, 3--24 (2012)

\bibitem {polko}A. Polkovnikov, Phase space representation of quantum
mechanics, \textit{Ann. Phys.} 325, 1790--1852 (2010)

\bibitem {Polter}L. Polterovich, \textit{The Geometry of the Group of
Symplectic Diffeomorphisms}, Lectures in Mathematics, Birkh\"{a}user, 2001

\bibitem {ro18}C. Rovelli, Space is blue and birds fly through it,
\textit{Phil. Trans. R. Soc. A} 376(2123), 20170312 (2018)

\bibitem {sch}D. Scheeres, M. de Gosson, and J. Maruskin, Applications of
Symplectic Topology to Orbit Uncertainty and Spacecraft Navigation. \textit{J.
of Astronaut. Sci.} 59(1-2), 63--83 (2012)

\bibitem {sch08}H.-P. Schr\"{o}cker, Uniqueness results for minimal enclosing
ellipsoids. \textit{Comput. Aided Geom}. \textit{Design} 25(9), 756--762 (2008)

\bibitem {silver}B.W. Silverman and D.M. Titterington, Minimum Covering
Ellipses, \textit{SIAM J. Sci. Stat. Comput.} 1(4), 401--409 (1980)

\bibitem {sisumu}R. Simon, E.C.G. Sudarshan, and N. Mukunda, Gaussian-Wigner
distributions in quantum mechanics and optics, \textit{Phys. Rev}. A 36(8),
3868 (1987)

\bibitem {aziz}A.A. Taha and A. Hanbury, An Efficient Algorithm for
Calculating the Exact Hausdorff Distance, \textit{IEEE Transactions on pattern
analysis and machine intelligence,} 37(11), 2153 (2015)

\bibitem {Wang}D. Wang, Some aspects of Hamiltonian systems and symplectic
algorithms, \textit{Physica D}. 73, 1--16 (1994)

\bibitem {kbw}K.B. Wolf, \textit{Geometric Optics on Phase Space}, Springer,
Berlin, 2004

\bibitem {zhang}F. Zhang, \textit{The Schur Complement and its Applications},
Springer, Berlin, 2005
\end{thebibliography}
\end{document}